\def\mono{\mathrm{mono}}
\def\alphabet{\Sigma}
\def\SAT{\textsc{\#Sat}}
\def\BmaxIS{\textsc{\#BipartiteMaxIS}}
\def\APeq{\equiv_\mathrm{AP}}
\def\APred{\leq_\mathrm{AP}}
\def\calE{\mathcal{E}}
\def\calF{\mathcal{F}}
\def\calV{\mathcal{V}}
\def\calA{\mathcal{A}}
\def\Ghat{\widehat G}
\def\hatG{\Ghat}
\def\R{\mathbb{R}}
\def\gadget{\varGamma}
\def\gadgetvertices{\graphvertices_\gadget} 
\def\gadgetedges{\graphedges_\gadget}  
\def\vargadget{\widetilde\gadget} 
\def\uhTutte{\textsc{UniformHyperTutte}}
\def\u3hTutte{\textsc{$3$-UniformHyperTutte}}
\def\BIS{\textsc{\#BIS}} 
\def\Sat{\textsc{Sat}}
\def\srBIS{\#\textsc{SemiRegularBIS}}
\def\IS{\textsc{IS}}
\def\Ptime{\mathrm P}
\def\numP{\mathrm{\#P}}
\def\RHPi{\mathrm{\#RH}\Pi_1}
\def\TwoWeightFerroTutte{\textsc{TwoWeightFerroTutte}} 
\def\Tutte{\textsc{Tutte}}
\def\ZPotts{Z_\mathrm{Potts}}
\def\ZTutte{Z_\mathrm{Tutte}} 
\def\Zrc{Z_\mathrm{rc}}
\def\RC{\mathrm{RC}}
\def\ER{\mathrm{ER}}
\def\poly{\mathop{\mathrm{poly}}}
\let\epsilon=\varepsilon
\def\terminals{t}
\def\rv{Y}
\def\edge{j} 
\def\graphedge{e} 
\def\numvert{n}
\def\cliquesize{N}
\def\clique{K}
\def\critprob{\varrho}  
\def\graph{G}
\newcommand{\induced}[2]{#1[#2]}  
\def\graphvertices{V}
\def\graphedges{E}
\def\hypergraph{H}
\def\hypervertices{\calV}
\def\hyperedges{\calE}
\def\hyperedge{f}
\def\hypervertex{v}
\def\edgeprob{p}
\def\tildeP{\widetilde{P}}
\def\edgesubset{A}
\def\terminalset{T}
\def\fewcomponents{s}
\def\lastc{\alpha} 
\def\sprob{\xi}  
\def\tol{\eta}
\def\es{\edgesubset}
\def\esl{\underline\es}
\def\esu{\overline\es}
\def\largec{\mathcal{L}}
\def\weighty{\mathcal{W}}
\def\Chat{\widehat C}
\def\boldgamma{\boldsymbol{\gamma}}  
\def\citeyear{\cite}
\def\citeNP{\cite} 
\newtheorem{theorem}{Theorem}
\newtheorem{lemma}[theorem]{Lemma}
\newtheorem{corollary}[theorem]{Corollary}
\newtheorem{observation}[theorem]{Observation}
\title[Approximating the Potts partition function]{Approximating the partition function\\
of the ferromagnetic Potts model}
\thanks{This work was partially supported by the EPSRC grant 
\it The Complexity of Counting in Constraint Satisfaction Problems}
\author{Leslie Ann Goldberg}
\address{Leslie Ann Goldberg, Department of Computer Science,
University of Liverpool, Ashton Building,
Liverpool L69 3BX, United Kingdom.}
\author{Mark Jerrum}
\address{Mark Jerrum, School of Mathematical Sciences\\
Queen Mary, University of London, Mile End Road, London E1 4NS, United Kingdom.}
\begin{document}

\begin{abstract}
We provide evidence that it is computationally difficult 
to approximate the partition function of the
ferromagnetic $q$-state Potts model when $q>2$.  Specifically
we show that the partition function is hard for the complexity class
$\RHPi$ under approximation-preserving reducibility.
Thus, it is as hard to approximate the partition function 
as it is to find approximate
solutions to a wide range of counting problems, including
that of determining the number of independent sets in a bipartite graph.
Our proof
exploits the 
first order
phase transition 
of the ``random cluster'' model, which is a probability distribution 
on graphs that is closely related to the $q$-state Potts
model. 
\end{abstract}
  
  \maketitle
 
\section{Introduction}
Let $q$ be a positive integer. 
The $q$-state Potts partition function of a graph~$G=(V,E)$,
with uniform interactions of strength~$\gamma\geq-1$ along the edges, is
defined as
\begin{equation}\label{eq:PottsGph}
\ZPotts(G;q,\gamma) = 
\sum_{\sigma:V\rightarrow [q]}
\prod_{e=\{u,v\}\in E}
\big(1+\gamma\,\delta(\sigma(u) ,\sigma(v))\big),
\end{equation}
where $[q]=\{1,\ldots,q\}$ is a set of $q$~spins or colours,
and $\delta(s,s')$ is~$1$ if $s=s'$, and 0 otherwise.
The partition function is a sum over ``configurations''~$\sigma$
which assign spins to vertices in all possible ways.
Later, we shall widen this definition to allow the interaction strength $\gamma$
to be a function of the edge~$e$, and allow $G$ to be a hypergraph (a generalisation
of graph in which edges contain an arbitrary number of vertices), but the 
above restricted case is sufficient for this overview.  Mostly we shall concentrate in this
paper on the ferromagnetic situation, characterised by $\gamma>0$. In the 
ferromagnetic Potts model, configurations $\sigma$ with many adjacent like spins
make a greater contribution to the partition function $\ZPotts(G;q,\gamma)$ than
those with few.

The statistical mechanical model just described was introduced 
by Potts~\citeyear{Potts} and generalises the classical Ising model from 
two to $q$~spins.  

Definition (\ref{eq:PottsGph}) applies only when $q$ is a positive integer.
However, it transpires that, regarding $q$ as an indeterminate, (\ref{eq:PottsGph})
defines a polynomial in~$q$, and in this way we can make sense of the Potts partition
function for non-integer~$q$, even though the underlying physical model has 
no meaning.
An equivalent, but more concrete way of approaching the partition function
when $q$ is non-integer
is via the Tutte polynomial, which in its ``random cluster'' formulation is 
defined as follows:
\begin{equation}\label{eq:TutteGph}
\ZTutte(G;q,\gamma)=\sum_{F\subseteq E}
q^{\kappa(V,F)}\gamma^{|F|},
\end{equation}
where $\kappa(V,F)$ denotes the number of connected components in
the graph $(V,F)$.  
The notation is as before, except that now $q$ is an arbitrary real number.
Again, for simplicity, we are assuming that $G$ is a graph (not
a hypergraph) and that the edge weight $\gamma$ is uniform over edges.
The multivariate definition can be guessed at and will in any case appear 
later in the paper.  For readers who are familiar with the classical 
$(x,y)$-parameterisation
of the Tutte polynomial 
(see [\citeNP{Tutte84}; \citeNP{welsh}]),  the transformation 
between that and the one here is given by $\gamma=y-1$ and $q=(x-1)(y-1)$.

Although (\ref{eq:PottsGph}) and (\ref{eq:TutteGph}) are formally very
different, they define the same polynomial in~$q$: 
see Observation~\ref{obs:FK}.  We continue the discussion now in terms 
of the Tutte polynomial~(\ref{eq:TutteGph}), remembering all along that 
we include as a special case the Potts partition function, and as an even 
more special case that of the Ising model.  We denote by $\Tutte(q,\gamma)$
the computational task of computing $\ZTutte(G;q,\gamma)$ given a graph $G$ as
problem instance.  Then each pair $(q,\gamma)$ defines a separate computational
problem, and we can study the computational complexity of this problem as $q$ and~$\gamma$
vary.  It is important to note that $q$ and $\gamma$ do not form part of the problem instance,
which consists simply of the graph~$G$.
For the purposes of this discussion, we may assume that $q$ and $\gamma$ are rational,
in order to avoid representation issues, but in the main body of the paper we 
work in the wider class of ``efficiently approximable'' real numbers.

In a seminal paper, Jaeger, Vertigan and Welsh~\citeyear{JVW90} examined
the problem of computing $\ZTutte(G;q,\gamma)$ exactly.  In the exact setting,
they completely classified the complexity of $\Tutte(q,\gamma)$ 
for all $q,\gamma$ (in fact for all complex $q,\gamma$).
It transpires that $\Tutte(q,\gamma)$ is $\numP$-hard (i.e., as hard as determining the 
number of satisfying assignments to a CNF Boolean formula), except when $q=1$,
or when $(q,\gamma)$ is one of a finite number of ``special points'';  in these cases 
$\Tutte(q,\gamma)$ is polynomial-time computable.

In light of Jaeger et al.'s strong negative result, attention turned to the question
of whether $\ZTutte(G;q,\gamma)$ could be approximated with arbitrarily small
specified relative error.  In the context of computing partition functions,
the appropriate notion of efficient approximate computation is the ``Fully polynomial
randomised approximation scheme''  or FPRAS, 
which is rigorously defined in~\S\ref{sec:FPRAS}.  An early positive 
result was provided by Jerrum and Sinclair~\citeyear{JS93}, who presented an FPRAS 
for the case $q=2$ and $\gamma>0$, that is to say, for the ferromagnetic Ising model.  
Sadly, no further generally applicable positive results have appeared since then, 
though FPRAS's have been proposed for restricted classes of graphs, e.g., dense 
or degree-bounded \cite{dense}.

Greater progress has been made in the negative direction.  
Goldberg and Jerrum~\citeyear{tuttepaper}
showed, under the reasonable complexity-theoretic assumption $\mathrm{RP}\not=\mathrm{NP}$,
that no FPRAS exists for $\Tutte(q,\gamma)$ for a wide range of values of the parameters 
$(q,\gamma)$.  Stated informally, $\mathrm{RP}\not=\mathrm{NP}$ is the assumption that 
there are problems in NP that cannot be decided by a polynomial-time randomised 
algorithm.  Intuitively, it is only slightly stronger than the more usual 
$\mathrm{P}\not=\mathrm{NP}$ assumption.  
Note that the claim that no FPRAS exists 
is stronger than it may appear at first sight.  By a folklore observation (see \S\ref{sec:FPRAS}),
ruling out an FRPAS also rules out polynomial-time approximation algorithms with constant
(or even polynomial) factor error.   
As an indicative example of what is known about the Tutte polynomial, 
the intractability result of \cite{tuttepaper} covers the entire half-plane 
$\gamma<-2$  
except for the tractable case $q=1$ and 
the case $q=2$ where the problem is equivalent to approximately
counting perfect matchings. Similar results apply when $q/\gamma<-2$.
The restriction to planar graphs was treated in a follow-up
paper \cite{planartutte}.  However none of the existing intractability results apply 
to the region $q>0$ and $\gamma>0$ that concerns us here, and which
is perhaps the one of greatest physical interest.

Our goal here is to present the first evidence that $\Tutte(q,\gamma)$
is computationally hard in the region $q>2$ and $\gamma>0$, i.e., the 
region corresponding to the ferromagnetic Potts model with $q>2$ states.
We achieve this, but under a stronger complexity-theoretic assumption than
$\mathrm{RP}\not=\mathrm{NP}$.  To explain this assumption, a digression into
computational complexity is required.  

The complexity class $\RHPi$ 
of counting problems was introduced by 
Dyer,  Goldberg, Greenhill and Jerrum~\citeyear{APred} as a means  
of classifying a wide class of approximate counting problems that were
previously of indeterminate computational complexity.  The problems in $\RHPi$
are those that 
can be expressed in terms of counting the number of models of a logical formula
from a certain syntactically restricted class.  
(Although the authors were not aware of it at the time, this 
syntactically restricted 
class had already been studied under the title ``restricted Krom SNP'' \cite{Dalmau05}.
Yet another terminological variation  
is to say that problems in $\RHPi$  
enumerate solutions to a linear Datalog program [\citeNP{Dalmau05}; \citeNP{Datalog}].)
The complexity class $\RHPi$ has a completeness class (with respect to 
approximation-preserving ``AP-reductions'') 
which includes a wide and ever-increasing
range of natural counting problems, including:  
independent sets in a bipartite graph,
downsets in a partial order, 
configurations in the Widom-Rowlinson model (all \cite{APred}),
the partition function of the 
ferromagnetic Ising model with mixed external field (i.e., not
consistently favouring one or other spin) \cite{ising},
and stable matchings \cite{stablematchings}.
Either all of these problems admit an FPRAS (i.e., are efficiently approximable),
or none do.  No FPRAS is known for any of them at the time of writing, despite 
much effort having been expended on finding one.  

All the problems in the completeness class mentioned above are inter-reducible
via AP-reductions, so 
any
of them could be said to exemplify the completeness class.
However, mainly for historical reasons, 
the particular problem $\BIS$, of counting independent sets
in a bipartite graph, tends to be taken as the exemplar of the class, 
much in the same way that $\Sat$ has
a privileged status in the theory of NP-completeness.
Our main result is:
\begin{theorem}
\label{thm:main}
Suppose that
$q>2$ and $\gamma>0$ are efficiently approximable. Then
$\BIS\APred \Tutte(q,\gamma)$.
\end{theorem}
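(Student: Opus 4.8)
The plan is to prove Theorem~\ref{thm:main} by composing several approximation-preserving reductions, the crucial one being a gadget reduction into $\Tutte(q,\gamma)$ that exploits the physics of the random-cluster model. Concretely: reduce $\BIS$ first to its semiregular restriction $\srBIS$; then $\srBIS$ to the problem \uhTutte{} of (approximately) evaluating the multivariate Tutte polynomial $\ZTutte(\hypergraph;q,\cdot)$ of a uniform hypergraph all of whose hyperedges carry a single convenient weight $\gamma'$ (in fact $3$-uniform hypergraphs suffice); then \uhTutte{} to the graph problem \TwoWeightFerroTutte{} in which edges carry one of two prescribed positive weights; and finally \TwoWeightFerroTutte{} to $\Tutte(q,\gamma)$. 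The two outer reductions are easy: $\BIS\APred\srBIS$ is routine, and for \TwoWeightFerroTutte{}${}\APred\Tutte(q,\gamma)$ one simulates each of the two fixed positive weights, to arbitrarily small multiplicative error, by a small gadget built from weight-$\gamma$ edges — parallel bundles raise the effective weight, long subdivided paths lower it — using the standard ``implementing weights'' calculus for the multivariate Tutte polynomial (cf.~\cite{tuttepaper}). The genuinely combinatorial step is reducing $\srBIS$ to \uhTutte{}: in the random-cluster expansion~(\ref{eq:TutteGph}) a hyperedge that is present merges all of its endpoints into one component, so a ferromagnetic hyperedge behaves as a soft ``all in one component'' constraint, and by wiring such constraints along a bipartite instance — here $q>2$ matters, as it leaves enough room to stop the two sides of the bipartition being forced together — one can make $\ZTutte$ encode the number of independent sets. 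I would design this for integer $q$ first, where the Potts reading~(\ref{eq:PottsGph}) makes the constraint picture transparent, and then note that every step in fact only manipulates $\ZTutte$, so it is valid for all real $q>2$.

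The heart of the argument is the reduction from \uhTutte{} to \TwoWeightFerroTutte{}, which realises each weighted hyperedge by a large complete-graph gadget and relies on the \emph{first-order} nature of the phase transition of the random-cluster model on $K_N$ when $q>2$. Let $\gamma_c=\gamma_c(q)$ be the critical edge weight. The external input is the Bollob\'as--Grimmett--Janson analysis of the random-cluster measure on the complete graph~\cite{BGJ}: for $q>2$ and edge weight in a small window around $\gamma_c$, that measure is, up to $o(1)$, a mixture of a \emph{disordered} phase (all components of size $O(\log N)$) and an \emph{ordered} phase (a unique component of size $\Theta(N)$, the rest small), and, crucially, a suitable small adjustment of the weight about $\gamma_c$ tunes the ratio of the two phase weights to any desired positive value. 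The gadget for a hyperedge $f=\{x_1,\dots,x_k\}$ is then a copy of $K_N$ at the near-critical weight, with each $x_i$ joined into it by several strong edges; in the ordered phase the $x_i$ are all absorbed into the giant component (``$f$ present''), and in the disordered phase they sit in distinct small components (``$f$ absent''). Adjusting the weight offset, the size $N$, and the number of parallel copies placed on $\{x_1,\dots,x_k\}$ lets the gadget simulate a hyperedge of any prescribed weight $\gamma'$ to within an arbitrarily small multiplicative tolerance; since it uses only two distinct edge weights, the composite object is an instance of \TwoWeightFerroTutte{}.

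It remains to control the accumulated error: for a hypergraph with $m$ hyperedges the composite graph has size polynomial in $m$ and in the gadget parameter $N$, and one must show that its $(q,\gamma)$-Tutte value is within a factor $1\pm\epsilon$ of $\ZTutte(\hypergraph;q,\gamma')$, with $\epsilon\to0$ as $N\to\infty$; granting that, the usual robustness of AP-reducibility finishes the proof. I expect this error analysis to be the main obstacle. In the idealised picture the random-cluster sum over the composite graph factors over the $2^m$ joint ``ordered/disordered'' assignments to the gadgets, so the $o(1)$ per-gadget slack coming from~\cite{BGJ} is far too crude: one needs an \emph{exponentially small in $N$} bound on the random-cluster weight of the ``intermediate'' configurations of a single $K_N$ (those that are neither ordered nor disordered), uniform over the weight window, so that those configurations remain negligible after being summed over exponentially many gadgets. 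Extracting such a quantitatively sharp, first-order-specific estimate from (or alongside) the \cite{BGJ} analysis — and, more routinely, checking that near-critical gadgets whose weights are only known approximately (forced on us because $q$, $\gamma$, and hence $\gamma_c(q)$ are merely efficiently approximable) still behave as required — is where the real work lies.
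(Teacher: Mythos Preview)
Your high-level chain --- $\BIS\APred\srBIS\APred\uhTutte\APred\TwoWeightFerroTutte\APred\Tutte(q,\gamma)$ --- is exactly the paper's, and your description of the gadget (a near-critical $K_N$ with terminals attached, exploiting the first-order transition of~\cite{BGJ}) is right.  Two points, however, need correcting.

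First, the remark that ``in fact $3$-uniform hypergraphs suffice'' is wrong and not what the paper does.  In the reduction from $\srBIS(\mu)$ to $\uhTutte(\mu+1,\mu)$, each right-vertex $v$ of degree~$d$ becomes a hyperedge $F_v=\Gamma(v)\cup\{s\}$ of size $d{+}1$; semiregularity makes the hypergraph uniform, but its uniformity is the instance-dependent $d{+}1$, not~$3$.  (Indeed the final section of the paper shows that the $3$-uniform case at $q=2$ has an FPRAS, so one should not expect to funnel hardness through arity~$3$.)  Relatedly, $q>2$ is \emph{not} needed at this step --- Lemma~\ref{lem:one} holds for all $\mu>0$ --- the constraint $q>2$ enters only in the gadget construction, where it is what makes the $K_N$ transition first-order.

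Second, and more importantly, your diagnosis of the error analysis is off.  You worry that, because the composite graph factors over $2^m$ joint ordered/disordered assignments, one needs the ``intermediate'' weight of a single gadget to be \emph{exponentially small in~$N$}.  It does not: a per-gadget bound that is merely $O(\varepsilon/m)$ suffices.  The point is that in the decomposition
\[
\ZTutte(\hatG;q,\boldgamma')=\sum_{\pi_1,\dots,\pi_m}q^{\kappa(\pi_1\vee\cdots\vee\pi_m)}\prod_{j=1}^m Z_j(\pi_j),
\]
one can upper-bound $q^{\kappa(\pi_1\vee\cdots\vee\pi_m)}$ by the value it would take if each $\pi_j$ were replaced by its refinement to $\bot_j$ or $\top_j$; after that replacement the sum factorises, and each factor is $Z_j(\bot_j)$ or $Z_j(\top_j)$ times at most $1+\eta(1+\gamma)/(1-\eta)$, where $\eta=\Pr(1<Y<t)$ is the ``intermediate'' probability from Lemma~\ref{lem:excludedmiddletwo}.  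So the aggregate overestimate is a product of $m$ factors, each $1+O(\eta)$, and choosing $\eta=O(\varepsilon/m)$ --- hence $N\ge\eta^{-1/8}$, merely polynomial in $m/\varepsilon$ --- keeps the total within $e^{\varepsilon}$.  You do need a \emph{quantitative} version of the phase-coexistence picture (Lemma~\ref{lem:excludedmiddletwo} supplies exactly $\Pr(1<Y<t)<\eta$ once $N\ge\eta^{-1/8}$), but polynomial decay in $N$ is all that is required, not exponential.  The genuine technical work lies in proving that lemma --- bounding the intermediate-window probability uniformly over the weight range $[\cliquesize^{-3},\lambda/\cliquesize]$ --- and in showing (Lemmas~\ref{lem:computeZ} and~\ref{lem:computerho}) that one can \emph{compute} a suitable near-critical $\critprob$ by binary search, given only efficient approximations to $q$ and~$\gamma$.
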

Here, $\APred$ is the symbol for ``is AP-reducible to'', and ``efficiently
approximable'' is a concept defined in \S\ref{sec:FPRAS};  
suffice it to say for now that 
the rational numbers are trivially efficiently approximable.

One limitation of our result is that the 
resulting inapproximability of the problem $\Tutte(q,\gamma)$
 is conditional on there being no FPRAS for $\BIS$ (and the rest of the completeness
class), rather than on the weaker assumption
$\mathrm{NP}\not=\mathrm{RP}$.
In fact, we conjecture that $\BIS$ does not admit an 
FPRAS. The basis for our conjecture is empirical ---
namely 
that the collection of known $\BIS$-equivalent problems is growing
and that the problem itself has survived its first decade 
despite considerable efforts to find an FPRAS.
For example, 
Ge and \v Stefankovi\v c~\citeyear{BISpoly}
recently proposed an interesting new MCMC algorithm for sampling indepedent sets
in bipartite graphs. Unfortunately, however, the relevant Markov chain mixes slowly \cite{BISpolyslow}
so even this interesting new idea does not give an FPRAS.

Despite the fact that our results are limited by a strong complexity-theoretic assumption,
we feel there are counterbalancing strengths that 
justify this investigation.  One is the range and intrinsic interest of
the problem under consideration.  Whether in the guise of the Potts partition 
function, or
of the Tutte plane, the computational complexity of $\Tutte(q,\gamma)$
has received considerable attention since it was first studied by Jaeger
et al.~\citeyear{JVW90}:  see, for example, 
[\citeNP{dense}; \citeNP{bipartite}; \citeNP{planar}; \citeNP{welsh}]. 
So it seems worth striving for a complexity classification
even under 
a strong assumption such
as the one we are making.  
The situation is similar to working
with the Unique Games Conjecture in the area of approximation algorithms
for optimisation problems, or employing the class PPAD in analysing
the complexity of Nash equilibria.
Futhermore, Theorem~\ref{thm:main}
has a wide range of applicability, 
covering as it does the whole region $q>2$, $\gamma>0$, which, in the classical 
parameterisation of the Tutte polynomial, equates to the entire upper quadrant
of the Tutte plane above the positive branch of the hyperbola 
$H_2=\{(x,y):(x-1)(y-1)=2\}$.  Note that the 
$\BIS$-hard
region extends right to the tractable hyperbola~$H_2$.

Another potential strength of the work
is that the reduction introduces a novel technique
that may have wider applicability.  The idea is conceptually simple and can
be sketched informally here.  In the first step, we reduce $\BIS$ to a 
hypergraph version of the Tutte polynomial.  (The conventional Tutte polynomial may be 
recovered as the specialisation to 2-uniform hypergraphs.)  This step, if 
not routine, is at least standard in its techniques.  After this, we
show how to
simulate each hyperedge containing $t$~vertices by a graph gadget 
with $t$ distinguished vertices or terminals.  

At this point we exploit the 
first order phase transition that is a feature of the so-called 
random cluster model when $q>2$.  The configurations
of the random cluster model on a graph~$G$ are spanning subgraphs of~$G$, 
which are weighted
according to the numbers of edges and connected components they contain.
As formulated in (\ref{eq:TutteGph}), the Tutte polynomial
is the partition function of this model.  
The gadget is designed and carefully tuned so that is has 
two coexisting ``phases'':  one in which the random cluster 
configurations (spanning subgraphs) have a large connected (or 
``giant'') component, and one in which they don't.  We show that it is 
possible to arrange for the $t$~terminals to be (with high probability)
in a single component
in one phase and in $t$~distinct components in the other.  
This provides us with a bistable gadget that simulates a potentially
large hyperedge using many 2-vertex edges.
Note that AP-reductions often exploit phase transitions, playing 
one class of configurations off against
another.  See the examples in \cite{APred} and \cite{KelkPhD}.
What is new here is using the complex phase transitions that arise in
actual models from statistical physics to derive new complexity
results.\footnote{Shortly after the preliminary version of this paper \cite{confversion}, Sly \citeyear{Sly} independently 
obtained strong hardness results for the hard-core model using a
gadget that had previously been used only to rule out fast MCMC
algorithms.}
Unfortunately, the delicate
nature of the gadgets needed to  exploit this kind of phase transition does
lead to significant technical complexity in our analysis.

Non-trivial phase transitions have been used 
in the past to rule out certain natural Markov chain Monte Carlo (MCMC) 
approaches to approximate counting problems.  Indeed, the 
first-order
phase 
transition in the random cluster model was already exploited 
by Gore and Jerrum~\citeyear{GJ99} to demonstrate that the Swendsen-Wang algorithm 
is not always effective for the ferromagnetic Potts model.  
(See also Borgs et al.~\citeyear{torpid} for a more thorough working out 
of this idea.)
In those applications
the aim was to rule out a certain algorithmic approach, namely MCMC, whereas here
our goal is to show inherent intractability.

Finally, note that Theorem~\ref{thm:main} establishes $\BIS$-hardness of $\Tutte(q,\gamma)$ but not $\BIS$-equivalence.
It would be very interesting to know whether there is an AP-reduction from $\Tutte(q,\gamma)$ to $\BIS$.
Note that the complexity of approximate counting is complicated. Bordewich~\citeyear{bordewich} has shown that 
if any problem 
in~\#P
fails to have an FPRAS, then there is an infinite approximation hierarchy within~\#P.

The rest of the paper is laid out as follows.  Sections~\ref{sec:TuttePotts}
and~\ref{sec:rc} introduce the models we shall be working with.
Section~\ref{sec:natural} is the technical heart of the paper.
It analyses the random cluster model on some 
families of graphs, building up to the constuction of the hyperedge simulation
gadget.  Sections \ref{sec:FPRAS} and~\ref{sec:APred} 
cover the computational framework 
we work in, while \S\ref{sec:results}--\ref{shift} present
and analyse the reductions themselves.  
The final \S\ref{sec:3uniform} turns the hypergraph intractability
result on its head by focusing on the $3$-uniform case. By reduction
to the graphic case, we obtain an FPRAS for the partition function.

\section{The Tutte polynomial of a hypergraph and the Potts partition function}
\label{sec:TuttePotts}
Let $\hypergraph=(\hypervertices,\hyperedges)$ be a hypergraph
with vertex set $\hypervertices$ and hyperedge (multi)set~$\hyperedges$.
Following the usual convention 
for the Tutte polynomial \cite{Sokal05},
a hypergraph is allowed to have parallel edges. This is why $\hyperedges$
is a multiset. 
The multivariate Tutte polynomial of $\hypergraph$ is 
defined as follows
$$
\ZTutte(\hypergraph;q,\boldgamma)=\sum_{\calF\subseteq\hyperedges}
q^{\kappa(\hypervertices,\calF)}
\prod_{\hyperedge\in\calF}
\gamma_\hyperedge,
$$
where 
$\boldgamma=\{\gamma_\hyperedge\}_{\hyperedge\in\hyperedges}$ 
and
$\kappa(\hypervertices,\calF)$ denotes the number 
of connected components in the subhypergraph $(\hypervertices,\calF)$.
(Two vertices $u$, $v$, are in the same component of $(V,\calF)$ if $u=v$, or there
is a sequence $\hyperedge_1,\ldots,\hyperedge_\ell\in\calF$ of hyperedges with 
$u\in \hyperedge_1$, $v\in \hyperedge_\ell$ and $\hyperedge_i\cap f_{i+1}\not=\emptyset$ for
$1\leq i<\ell$.)
This partition function was studied (under a different name) by Grimmett~\citeyear{Grimmett}.
An undirected graph $\graph$ can be viewed as a $2$-uniform hypergraph 
(a hypergraph in 
which every hyperedge has size~$2$).
In this case, $\ZTutte(\graph;q,\boldgamma)$ coincides with the usual definition of the multivariate Tutte polynomial~\cite{Sokal05}.

Let $q$ be a positive integer. The $q$-state Potts partition function of $\hypergraph$ is
defined as follows:
$$
\ZPotts(\hypergraph;q,\boldgamma) = 
\sum_{\sigma:\hypervertices\rightarrow [q]}
\prod_{\hyperedge\in\hyperedges}
\big(1+\gamma_\hyperedge\delta(\{\sigma(\hypervertex) \mid \hypervertex\in \hyperedge\})\big),$$
where $[q]=\{1,\ldots,q\}$ is a set of $q$~spins or colours,
and $\delta(S)$ is~$1$ if its argument is a singleton and 0 otherwise.
The partition function is a sum ranging over 
assignments of spins to vertices, which are often referred to as ``configurations''.
 
The following observation is due to Fortuin and Kastelyn. 
\begin{observation}
\label{obs:FK}
If $q$ is a positive integer then
$\ZPotts(\hypergraph;q,\boldgamma) = \ZTutte(\hypergraph;q,\boldgamma)$.
\end{observation}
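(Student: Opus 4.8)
The plan is to follow the classical Fortuin--Kasteleyn argument: expand the product defining $\ZPotts$ over subsets of hyperedges, exchange the order of summation, and count configurations. First I would introduce, for a fixed configuration $\sigma:\hypervertices\to[q]$ and a hyperedge $\hyperedge\in\hyperedges$, the shorthand $\delta_\hyperedge(\sigma)=\delta(\{\sigma(\hypervertex)\mid\hypervertex\in\hyperedge\})$, so that the factor associated with $\hyperedge$ in the definition of $\ZPotts(\hypergraph;q,\boldgamma)$ is $1+\gamma_\hyperedge\delta_\hyperedge(\sigma)$. Multiplying out the product over all $\hyperedge\in\hyperedges$, each resulting term corresponds to choosing, for each hyperedge, either the summand $1$ or the summand $\gamma_\hyperedge\delta_\hyperedge(\sigma)$; collecting the hyperedges for which the second option is chosen into a set $\calF\subseteq\hyperedges$ yields
\[
\prod_{\hyperedge\in\hyperedges}\big(1+\gamma_\hyperedge\delta_\hyperedge(\sigma)\big)
=\sum_{\calF\subseteq\hyperedges}\ \prod_{\hyperedge\in\calF}\gamma_\hyperedge\,\delta_\hyperedge(\sigma).
\]

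Next I would substitute this into the definition of $\ZPotts(\hypergraph;q,\boldgamma)$ and swap the two finite sums, over $\sigma$ and over $\calF$, pulling the factor $\prod_{\hyperedge\in\calF}\gamma_\hyperedge$ (which is independent of $\sigma$) outside the sum over $\sigma$. This gives
\[
\ZPotts(\hypergraph;q,\boldgamma)
=\sum_{\calF\subseteq\hyperedges}\Big(\prod_{\hyperedge\in\calF}\gamma_\hyperedge\Big)
\sum_{\sigma:\hypervertices\to[q]}\ \prod_{\hyperedge\in\calF}\delta_\hyperedge(\sigma),
\]
so it remains to identify the inner sum with $q^{\kappa(\hypervertices,\calF)}$; comparing with the definition of $\ZTutte(\hypergraph;q,\boldgamma)$ then finishes the proof.

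The one genuinely combinatorial point --- and the only place one must be a little careful with hypergraphs rather than graphs --- is the claim that $\prod_{\hyperedge\in\calF}\delta_\hyperedge(\sigma)=1$ exactly when $\sigma$ is constant on each connected component of the subhypergraph $(\hypervertices,\calF)$, and is $0$ otherwise. One direction is immediate: if $\sigma$ is constant on components, then each $\hyperedge\in\calF$ lies within a single component and is therefore monochromatic, so every $\delta_\hyperedge(\sigma)=1$. For the converse, suppose every $\delta_\hyperedge(\sigma)=1$, and let $u,v$ lie in the same component of $(\hypervertices,\calF)$; if $u=v$ there is nothing to prove, and otherwise there is a sequence $\hyperedge_1,\dots,\hyperedge_\ell\in\calF$ with $u\in\hyperedge_1$, $v\in\hyperedge_\ell$ and $\hyperedge_i\cap\hyperedge_{i+1}\neq\emptyset$, so that $\sigma$ is constant on each $\hyperedge_i$ and the common colour propagates across the overlaps, giving $\sigma(u)=\sigma(v)$. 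Hence the configurations contributing $1$ to the inner sum are precisely those obtained by colouring each of the $\kappa(\hypervertices,\calF)$ components independently with one of the $q$ spins, of which there are exactly $q^{\kappa(\hypervertices,\calF)}$. I do not expect any real obstacle here; the only thing to watch is that the hypergraph notion of connected component used in the definition of $\kappa$ is exactly the equivalence relation generated by ``two vertices lie in a common hyperedge of $\calF$'', which is precisely what makes the last counting step valid.
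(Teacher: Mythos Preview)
Your proposal is correct and follows essentially the same Fortuin--Kasteleyn approach as the paper: both arguments introduce the double sum $\sum_{\sigma}\sum_{\calF}\prod_{\hyperedge\in\calF}\gamma_\hyperedge\,\delta_\hyperedge(\sigma)$ and evaluate it in two ways, obtaining $\ZPotts$ by summing over~$\calF$ first and $\ZTutte$ by summing over~$\sigma$ first. The paper's version is terser (it merely says ``integrate out''~$\calF$ and then~$\sigma$), whereas you have carefully spelled out the combinatorial step identifying $\sum_\sigma\prod_{\hyperedge\in\calF}\delta_\hyperedge(\sigma)$ with $q^{\kappa(\hypervertices,\calF)}$ via the propagation-along-a-chain argument; this is exactly the detail the paper leaves implicit.
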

\begin{proof}
A proof can
be found in \cite[Proposition 3.1]{CSS} or in \cite{Grimmett}, but we include it here
for completeness.
The argument is a straightforward generalisation 
of the standard proof for the graph case. See \cite[Theorem 2.3]{Sokal05}.
Consider 
$$\sum_{\sigma: \hypervertices\rightarrow [q]}
\sum_{\calF\subseteq\hyperedges}
\prod_{\hyperedge\in \calF} 
\big(\gamma_\hyperedge\delta(\{\sigma(\hypervertex) \mid \hypervertex\in \hyperedge)\big).$$
Now ``integrate out'' (i.e., explicitly sum over) $\calF$ to see that this quantity is equal to
$\ZPotts(\hypergraph;q,\boldgamma)$ and
integrate out $\sigma$ to see that this quantity is equal to 
$\ZTutte(\hypergraph;q,\boldgamma)$.
\end{proof}

The Potts model is said to be 
\emph{ferromagnetic} if   the edge weights $\gamma_\hyperedge$ are non-negative. In this case, a monochromatic
edge contributes more weight than an edge with multiple spins.

For a subset $\calF\subseteq \hyperedges$ of the hyperedges of a graph,
we use $\gamma({\calF})$ to denote $\prod_{\hyperedge\in\calF}
\gamma_\hyperedge$.

\section{The Random Cluster Model}
\label{sec:rc}
  
Consider a graph $\graph=(\graphvertices,\graphedges)$.   
Every edge $e\in\graphedges$ is associated with a
quantity $\edgeprob(e)\in[0,1]$.
Then for a set of edges $\edgesubset\subseteq\graphedges$ define 
$$
\tildeP(\graph;\edgesubset,q,\edgeprob) = q^{\kappa(\graphvertices,\edgesubset)}
\prod_{e\in \edgesubset} \edgeprob(e) 
\prod_{e \in \graphedges\setminus \edgesubset} (1-\edgeprob(e)).
$$
Let 
$$
\Zrc(\graph;q,\edgeprob) = \sum_{\edgesubset\subseteq \graphedges} \tildeP(\graph;\edgesubset,q,\edgeprob)
= \ZTutte(G;q,\boldgamma)\prod_{e\in \graphedges}(1-\edgeprob(e)),
$$
where $\gamma_e = \edgeprob(e)/(1-\edgeprob(e))$.
Then the probability of edge-set~$\edgesubset$ in the random cluster model is given by
$$
P(\graph;\edgesubset,q,\edgeprob) = \frac{\tildeP(\graph;\edgesubset,q,\edgeprob)}
{\Zrc(\graph;q,\edgeprob)}.
$$

The \emph{random cluster model} refers to the distribution
$\RC(\graph;q,\edgeprob)$, in which
a subset $\edgesubset$ of edges is chosen
 with probability 
$P(\graph;\edgesubset,q,\edgeprob)$.
The difference between $\ZTutte$ and $\Zrc$ is simply one of parameterisation.
Nevertheless, the change of parameter is useful, as it allows us to employ
probabilistic terminology and exploit existing results from the random graph
literature.

A central ingredient which helps us to understand the random cluster model is to
compare it to the (multivariate) Erd\H os-R\'enyi model of a random graph.
Consider a graph $\graph=(\graphvertices,\graphedges)$.   
Every edge $e\in\graphedges$ is associated with a
quantity $\edgeprob'(e)\in[0,1]$.
In 
the Erd\H os-R\'enyi model $\ER(\graph;\edgeprob')$, a subset $\edgesubset$ 
of edges
is chosen with probability 
$$\prod_{e \in \edgesubset} \edgeprob'(e)
\prod_{e \in \graphedges \setminus\edgesubset} (1-\edgeprob'(e)).$$
Thus, to choose a configuration $\edgesubset$ in the $\ER(G;\edgeprob')$ model, each edge $e\in \graphedges$ is included 
in~$\edgesubset$ independently with probability $\edgeprob'(e)$.

\subsection{Stochastic domination results}

We will use several  simple 
stochastic domination results,
which are close in spirit, and indeed in their proofs, to results of 
Holley~\citeyear{Holley};  see, in particular, Theorem~(6) of that article
and its proof.

If $\edgesubset^+$ and $\edgesubset^-$ are disjoint subsets of $\graphedges$,
let $\RC(\graph;q,\edgeprob;\edgesubset^+,\edgesubset^-)$ 
be the random cluster model
conditioned on the fact that the chosen subset $\edgesubset$ 
contains every edge in $\edgesubset^+$ and no edges in 
$\edgesubset^-$.
(To avoid trivialities, we assume that no edge~$e\in\edgesubset^+$
has $\edgeprob(e)=0$ and that no edge $e\in \edgesubset^-$ has $\edgeprob(e)=1$.)
Similarly, let 
$\ER(\graph;\edgeprob';\edgesubset^+,\edgesubset^-)$
be the 
Erd\H os-R\'enyi model
with this conditioning.
We  give conditioned versions of the stochastic domination results.

\begin{lemma}
Consider a graph $\graph=(\graphvertices,\graphedges)$
in which each edge $e\in\graphedges$ is associated with a
quantity $\edgeprob(e)\in[0,1]$.
Let $\edgesubset^+$ and $\edgesubset^-$ be disjoint subsets of $\graphedges$
such that $\edgeprob(e)>0$ for $e\in\edgesubset^+$ 
and $\edgeprob(e)<1$ for $e\in\edgesubset^-$.
Suppose $q\geq 1$. Then
the random cluster model 
$\RC(\graph;q,\edgeprob;\edgesubset^+,\edgesubset^-)$ 
is stochastically dominated by the Erd\H os-R\'enyi model
$\ER(\graph;\edgeprob;\edgesubset^+,\edgesubset^-)$
in the sense that we can select a pair
$(\edgesubset,\edgesubset')$ 
such that $\edgesubset$ is drawn from 
$\RC(\graph;q,\edgeprob;\edgesubset^+,\edgesubset^-)$ 
and $\edgesubset'$ is drawn from 
$\ER(\graph;\edgeprob;\edgesubset^+,\edgesubset^-)$
and $\edgesubset \subseteq \edgesubset'$.
\label{lem:coupleone}
\end{lemma}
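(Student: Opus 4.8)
The plan is to reduce the stochastic domination to a single-edge comparison and then invoke a Holley-style monotone coupling.

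\emph{Step 1 (remove the conditioning).} Set $\graphedges_0:=\graphedges\setminus(\edgesubset^+\cup\edgesubset^-)$. The events $\{\edgesubset^+\subseteq\edgesubset\}$ and $\{\edgesubset\cap\edgesubset^-=\emptyset\}$ have positive probability in both models — this is exactly where the hypotheses $\edgeprob(e)>0$ on $\edgesubset^+$ and $\edgeprob(e)<1$ on $\edgesubset^-$ are used — so both conditioned models are genuine probability measures on $\{0,1\}^{\graphedges_0}$. For $\ER$, conditioning changes nothing beyond fixing the status of the edges of $\edgesubset^+\cup\edgesubset^-$, since its edges are independent; so $\ER(\graph;\edgeprob;\edgesubset^+,\edgesubset^-)$ includes each $e\in\graphedges_0$ independently with probability $\edgeprob(e)$. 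For $\RC$, the conditioned weight of a configuration $\edgesubset_0\subseteq\graphedges_0$ is proportional to $q^{\kappa(\graphvertices,\edgesubset_0\cup\edgesubset^+)}\prod_{e\in\edgesubset_0}\edgeprob(e)\prod_{e\in\graphedges_0\setminus\edgesubset_0}(1-\edgeprob(e))$.

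\emph{Step 2 (single-edge inequality).} I would compute, in $\RC(\graph;q,\edgeprob;\edgesubset^+,\edgesubset^-)$, the conditional probability that a free edge $e=\{u,v\}$ is present given the status of all other free edges. Let $\Chat\subseteq\graphedges$ consist of $\edgesubset^+$ together with those other free edges that are present. From the weight in Step 1, the odds of ``$e$ present'' against ``$e$ absent'' are $\frac{\edgeprob(e)}{1-\edgeprob(e)}\,q^{\kappa(\graphvertices,\Chat\cup\{e\})-\kappa(\graphvertices,\Chat)}$. If $u$ and $v$ lie in a common component of $(\graphvertices,\Chat)$ the exponent is $0$ and the conditional probability of ``$e$ present'' is $\edgeprob(e)$; otherwise the exponent is $-1$ and it equals $\edgeprob(e)/\bigl(\edgeprob(e)+q(1-\edgeprob(e))\bigr)\le\edgeprob(e)$, using $q\ge1$. (Edges with $\edgeprob(e)\in\{0,1\}$ are deterministic in both models and may be discarded.) Thus, whatever the status of the other free edges, $\RC$ includes $e$ with conditional probability at most the $\ER$ value $\edgeprob(e)$.

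\emph{Step 3 (couple).} Now run the single-edge (Glauber) dynamics for the two conditioned measures in parallel: at each step pick the same free edge $e$ and the same uniform $U\in[0,1]$, and in each chain set $e$ present iff $U$ is below that chain's conditional present-probability. Each marginal chain is ergodic on $\{0,1\}^{\graphedges_0}$ (after discarding deterministic edges) and converges to the corresponding conditioned measure; at every step $\Pr[\edgesubset_t\subseteq\edgesubset'_t]=1$, because by Step~2 the $\RC$ threshold never exceeds the $\ER$ threshold $\edgeprob(e)$, so ``$e$ present in $\RC$'' forces ``$e$ present in $\ER$''. Starting both chains from the all-absent configuration and passing to a limit point of the joint law then produces a pair $(\edgesubset,\edgesubset')$ with $\edgesubset$ distributed as $\RC(\graph;q,\edgeprob;\edgesubset^+,\edgesubset^-)$, $\edgesubset'$ as $\ER(\graph;\edgeprob;\edgesubset^+,\edgesubset^-)$, and $\edgesubset\subseteq\edgesubset'$, as required.

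The only step that genuinely requires care is the passage from the single-edge inequality to bona fide stochastic domination in Step~3, i.e.\ the legitimacy of the monotone coupling. This is classical — essentially the coupling in the proof of Holley's Theorem~(6) \cite{Holley} — but one must verify that the common-randomness update is monotone (immediate, since the $\ER$ present-probability is the constant $\edgeprob(e)$ and the $\RC$ present-probability is always at most $\edgeprob(e)$) and that a limit point of the coupled law is supported on the closed set $\{\edgesubset\subseteq\edgesubset'\}$ while retaining the correct marginals (immediate by compactness of the space of joint laws on a finite set). Everything else is the routine two-case computation of Step~2.
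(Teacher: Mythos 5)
Your proposal is correct and follows essentially the same route as the paper: both proofs run the single-edge heat-bath (Glauber) chains for the two conditioned measures, couple them by updating the same free edge with shared randomness, and reduce monotonicity to the observation that the RC conditional acceptance probability is either $\edgeprob(e)$ or $\edgeprob(e)/(\edgeprob(e)+q(1-\edgeprob(e)))$, each at most the ER value $\edgeprob(e)$ when $q\ge1$. Your write-up is a bit more careful about ergodicity, deterministic edges, and passing to the limit, but these are exactly the unstated routine points the paper elides; the argument is the same.
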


\begin{proof}
Let $\mathcal{M}$ be the ``heat-bath on an edge'' Markov chain
for $\RC(\graph;q,\edgeprob;A^+,A^-)$.
The states of this chain are subsets of edges. From state~$A$, the chain chooses an edge $e$~u.a.r.
It then moves to a state in $\{A\cup\{e\},A\setminus e\}$. (Note that one of these states
 is identical to~$A$.)
The probability of moving to each state~$A'$ is proportional to 
the probability of~$A'$ in the distribution~$\RC(\graph;q,\edgeprob;A^+,A^-)$.
Let 
$\mathcal{M}'$ be the ``heat-bath on an edge'' Markov chain
for $\ER(\graph;\edgeprob;A^+,A^-)$,
which is defined similarly.
Start $\mathcal{M}$ in state $\edgesubset_0$
and $\mathcal{M}'$ in state $\edgesubset'_0$
where $\edgesubset_0 \subseteq \edgesubset'_0$
and $\edgesubset_0$ and $\edgesubset'_0$ contain every edge in $\edgesubset^+$
and no edges in $\edgesubset^-$. We can then couple the evolution 
of the chains from the $i$'th pair of
states $(\edgesubset_i,\edgesubset'_i)$ as follows, guaranteeing
that $\edgesubset_{i+1} \subseteq \edgesubset'_{i+1}$. Choose the
same edge~$e\in\graphedges \setminus \edgesubset^+\cup \edgesubset^-$ in both chains.
In $\mathcal{M}$ the probability of 
putting $e$ in $\edgesubset_{i+1}$ is
either 
$\edgeprob(e)$
or 
$\edgeprob(e)/(\edgeprob(e)+ q(1-\edgeprob(e)))$,
each of which is
at most 
$\edgeprob(e)$,
which is the probability of putting it in 
$\edgesubset_{i+1}'$.
\end{proof}

\begin{lemma} 
Consider a graph $\graph=(\graphvertices,\graphedges)$
in which each edge $e\in\graphedges$ is associated with a
quantity $\edgeprob(e)\in[0,1]$.
For each edge $e \in\graphedges$, let $\edgeprob'(e)=\edgeprob(e)/q$.
Let $\edgesubset^+$ and $\edgesubset^-$ be disjoint subsets of $\graphedges$
such that $\edgeprob(e)>0$ for $e\in\edgesubset^+$ 
and $\edgeprob(e)<1$ for $e\in\edgesubset^-$.
Suppose $q\geq 1$. Then
the Erd\H os-R\'enyi model
$\ER(\graph;\edgeprob';\edgesubset^+,\edgesubset^-)$
is stochastically dominated by
the random cluster model 
$\RC(\graph;q,\edgeprob;\edgesubset^+,\edgesubset^-)$ 
in the sense that we can select a pair
$(\edgesubset,\edgesubset')$ 
such that $\edgesubset$ is drawn from 
$\ER(\graph;\edgeprob';\edgesubset^+,\edgesubset^-)$
and $\edgesubset'$ is drawn from  
$\RC(\graph;q,\edgeprob;\edgesubset^+,\edgesubset^-)$ 
and $\edgesubset \subseteq \edgesubset'$.
\label{lem:coupletwo}
\end{lemma}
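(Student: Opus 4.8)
The plan is to run the two ``heat-bath on an edge'' Markov chains in parallel, exactly as in the proof of Lemma~\ref{lem:coupleone}, but coupling them in the opposite direction. Let $(\edgesubset_i)_{i\geq 0}$ be the heat-bath chain for $\ER(\graph;\edgeprob';\edgesubset^+,\edgesubset^-)$ and $(\edgesubset'_i)_{i\geq 0}$ the heat-bath chain for $\RC(\graph;q,\edgeprob;\edgesubset^+,\edgesubset^-)$, defined analogously to the chains in the proof of Lemma~\ref{lem:coupleone}. Start them in states $\edgesubset_0\subseteq\edgesubset'_0$, each containing every edge of $\edgesubset^+$ and no edge of $\edgesubset^-$. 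First I would couple the evolution by selecting, at each step $i$, the same edge $e\in\graphedges\setminus(\edgesubset^+\cup\edgesubset^-)$ in both chains, and show that the re-randomisation of $e$ can be coupled so that $\edgesubset_{i+1}\subseteq\edgesubset'_{i+1}$.

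To do this it suffices to bound the relevant single-edge conditional probabilities. In $(\edgesubset_i)$ the probability that $e$ is placed in $\edgesubset_{i+1}$ is $\edgeprob'(e)=\edgeprob(e)/q$. In $(\edgesubset'_i)$ the probability that $e$ is placed in $\edgesubset'_{i+1}$ is, by the computation in the proof of Lemma~\ref{lem:coupleone}, equal to $\edgeprob(e)$ if the endpoints of $e$ already lie in the same component of $\edgesubset'_i\setminus\{e\}$, and to $\edgeprob(e)/\big(\edgeprob(e)+q(1-\edgeprob(e))\big)$ otherwise. Since $q\geq 1$, both of these are at least $\edgeprob(e)/q$: the first trivially, and the second because $q\geq1$ gives $\edgeprob(e)+q(1-\edgeprob(e))\leq q$, the case $\edgeprob(e)=0$ being immediate. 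Using a common uniform random variable to drive both single-edge updates, we may therefore couple them so that $e\in\edgesubset_{i+1}$ implies $e\in\edgesubset'_{i+1}$; off the edge $e$ the new states coincide with $\edgesubset_i$ and $\edgesubset'_i$, which satisfied the inclusion, so $\edgesubset_{i+1}\subseteq\edgesubset'_{i+1}$, as claimed.

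Both chains are ergodic on their respective state spaces and converge to $\ER(\graph;\edgeprob';\edgesubset^+,\edgesubset^-)$ and $\RC(\graph;q,\edgeprob;\edgesubset^+,\edgesubset^-)$; taking a limit of the (finitely supported) joint law of the coupled pair $(\edgesubset_i,\edgesubset'_i)$ yields a pair $(\edgesubset,\edgesubset')$ with the required marginals and $\edgesubset\subseteq\edgesubset'$. The only real content of the argument is the elementary inequality $\edgeprob(e)/q\leq\edgeprob(e)/\big(\edgeprob(e)+q(1-\edgeprob(e))\big)$ for $q\geq1$; the one point needing a moment's care is that the $\RC$ single-edge probability depends on the current $\RC$ state, but since the lower bound $\edgeprob(e)/q$ holds uniformly over states, the coupling goes through without difficulty, so I do not expect a genuine obstacle.
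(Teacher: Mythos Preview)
Your proposal is correct and follows essentially the same approach as the paper: couple the heat-bath-on-an-edge chains for the two models, choosing the same edge at each step, and use the inequality $\edgeprob(e)/q\leq\min\bigl\{\edgeprob(e),\,\edgeprob(e)/(\edgeprob(e)+q(1-\edgeprob(e)))\bigr\}$ to maintain the inclusion. The paper's proof is terser---it omits your explicit verification of the inequality and the limiting argument---but the content is the same.
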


\begin{proof}
Let $\mathcal{M}$ be the ``heat-bath on an edge'' Markov chain
for 
$\ER(\graph;\edgeprob';\edgesubset^+,\edgesubset^-)$
and let 
$\mathcal{M}'$ be the ``heat-bath on an edge'' Markov chain
for 
$\RC(\graph;q,\edgeprob;\edgesubset^+,\edgesubset^-)$ 
Start $\mathcal{M}$ in state $\edgesubset_0$
and $\mathcal{M}'$ in state $\edgesubset'_0$
where $\edgesubset_0 \subseteq \edgesubset'_0$
and $\edgesubset_0$ and $\edgesubset'_0$ contain every edge in $\edgesubset^+$
and no edges in $\edgesubset^-$. We can then couple the evolution 
of the chains from the $i$'th pair of
states $(\edgesubset_i,\edgesubset'_i)$ as follows, guaranteeing
that $\edgesubset_{i+1} \subseteq \edgesubset'_{i+1}$. Choose the
same edge~$e\in\graphedges \setminus \edgesubset^+\cup \edgesubset^-$ in both chains.
In $\mathcal{M}'$ the probability of 
putting $e$ in $\edgesubset'_{i+1}$ is
either 
$\edgeprob(e)$
or 
$\edgeprob(e)/(\edgeprob(e)+ q(1-\edgeprob(e)))$,
each of which is
at least 
$\edgeprob'(e)$,
which is the probability of putting it in 
$\edgesubset_{i+1}$.
\end{proof}

\begin{lemma}
\label{lem:coupleRC}
Consider a graph $\graph=(\graphvertices,\graphedges)$
in which each edge $e\in\graphedges$ is associated with a
quantity $\edgeprob(e)\in[0,1]$.
Suppose that, for each edge $e \in\graphedges$, $\edgeprob'(e)\geq\edgeprob(e)$.
Let $\edgesubset^+$ and $\edgesubset^-$ be disjoint subsets of $\graphedges$
such that $\edgeprob(e)>0$ for $e\in\edgesubset^+$ 
and $\edgeprob(e)<1$ for $e\in\edgesubset^-$.
Suppose $q\geq 1$. Then
 the random cluster model
$\RC(\graph;q,\edgeprob;\edgesubset^+,\edgesubset^-)$
is stochastically dominated by
the random cluster model 
$\RC(\graph;q,\edgeprob';\edgesubset^+,\edgesubset^-)$ 
in the sense that we can select a pair
$(\edgesubset,\edgesubset')$ 
such that $\edgesubset$ is drawn from 
the distribution
$\RC(\graph;q,\edgeprob;\edgesubset^+,\edgesubset^-)$
and $\edgesubset'$ is drawn from  
$\RC(\graph;q,\edgeprob';\edgesubset^+,\edgesubset^-)$ 
and $\edgesubset \subseteq \edgesubset'$.
\end{lemma}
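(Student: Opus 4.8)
The plan is to follow the template of the two preceding lemmas: realise both random cluster distributions as the stationary distributions of ``heat-bath on an edge'' Markov chains, and build a monotone coupling of the two chains that preserves the containment $\edgesubset\subseteq\edgesubset'$ at every step. Let $\mathcal{M}$ be the heat-bath chain for $\RC(\graph;q,\edgeprob;\edgesubset^+,\edgesubset^-)$ and $\mathcal{M}'$ the heat-bath chain for $\RC(\graph;q,\edgeprob';\edgesubset^+,\edgesubset^-)$; both have as states the edge-subsets of $\graphedges$ that contain $\edgesubset^+$ and are disjoint from $\edgesubset^-$, and both update by choosing an edge $e\in\graphedges\setminus(\edgesubset^+\cup\edgesubset^-)$ uniformly at random and re-randomising membership of $e$ according to the relevant conditional law. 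I would start $\mathcal{M}$ at $\edgesubset_0$ and $\mathcal{M}'$ at $\edgesubset'_0$ with $\edgesubset_0\subseteq\edgesubset'_0$ (for instance both equal to $\edgesubset^+$), and at each step pick the \emph{same} edge $e$ in both chains.

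The crux is the one-step domination: whenever $\edgesubset_i\subseteq\edgesubset'_i$, the probability that $\mathcal{M}$ puts $e$ into $\edgesubset_{i+1}$ is at most the probability that $\mathcal{M}'$ puts $e$ into $\edgesubset'_{i+1}$; granting this, coupling the two Bernoulli decisions so that ``$e$ added in $\mathcal{M}$'' forces ``$e$ added in $\mathcal{M}'$'' yields $\edgesubset_{i+1}\subseteq\edgesubset'_{i+1}$. I would prove the domination by the usual case analysis via the component interpretation of the heat-bath step, exactly as in the proofs of Lemmas~\ref{lem:coupleone} and~\ref{lem:coupletwo}: the probability that $\mathcal{M}$ adds $e$ equals $\edgeprob(e)$ if the endpoints of $e$ already lie in a single component of $\edgesubset_i\setminus e$, and equals $\edgeprob(e)/(\edgeprob(e)+q(1-\edgeprob(e)))$ otherwise. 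In the first case, $\edgesubset_i\setminus e\subseteq\edgesubset'_i\setminus e$ forces the endpoints to be connected in $\edgesubset'_i\setminus e$ too, so $\mathcal{M}'$ adds $e$ with probability $\edgeprob'(e)\ge\edgeprob(e)$. In the second case $\mathcal{M}'$ adds $e$ with probability either $\edgeprob'(e)$ or $\edgeprob'(e)/(\edgeprob'(e)+q(1-\edgeprob'(e)))$, and in both sub-cases this is at least $\edgeprob(e)/(\edgeprob(e)+q(1-\edgeprob(e)))$: the map $x\mapsto x/(x+q(1-x))$ is non-decreasing on $[0,1]$, and $\edgeprob'(e)\ge\edgeprob'(e)/(\edgeprob'(e)+q(1-\edgeprob'(e)))$ since $q\ge1$. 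Finally, running both chains under this coupling for $T$ steps gives $(\edgesubset_T,\edgesubset'_T)$ with $\edgesubset_T\subseteq\edgesubset'_T$; both heat-bath chains are irreducible and aperiodic on their finite state space with the stated stationary distributions, so letting $T\to\infty$ and passing to a subsequential limit of the joint laws yields the desired coupling $(\edgesubset,\edgesubset')$ with $\edgesubset\subseteq\edgesubset'$.

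\textbf{Main obstacle.} The only genuinely delicate point is the second case of the one-step analysis: the larger configuration $\edgesubset'_i$ may connect the endpoints of $e$ even when $\edgesubset_i$ does not, so $\mathcal{M}$ and $\mathcal{M}'$ can be in different ``component regimes'' for the chosen edge, and one must check that the add-probability of $\mathcal{M}'$ still dominates that of $\mathcal{M}$ in \emph{both} sub-cases. This is precisely where the hypothesis $q\ge1$ enters; everything else is a routine adaptation of Lemmas~\ref{lem:coupleone} and~\ref{lem:coupletwo}.
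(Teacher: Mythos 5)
Your proposal is correct and follows essentially the same approach as the paper's proof: a monotone coupling of the two heat-bath chains choosing the same edge at each step, with the same case analysis (if the endpoints are joined in $\edgesubset_i\setminus e$ then also in $\edgesubset'_i\setminus e$, giving add-probabilities $\edgeprob(e)\leq\edgeprob'(e)$; otherwise the add-probability in $\mathcal{M}$ is $\edgeprob(e)/(\edgeprob(e)+q(1-\edgeprob(e)))$, which is bounded above by the add-probability in $\mathcal{M}'$ in either sub-case using monotonicity and $q\geq1$). You supply a couple of details the paper leaves implicit (the monotonicity of $x\mapsto x/(x+q(1-x))$, and the convergence argument via irreducibility/aperiodicity), but the argument is the same.
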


\begin{proof}
Let $\mathcal{M}$ be the ``heat-bath on an edge'' Markov chain
for $\RC(\graph;q,\edgeprob;A^+,A^-)$
and let 
$\mathcal{M}'$ be the ``heat-bath on an edge'' Markov chain
for $\RC(\graph;q,\edgeprob';\edgesubset^+,\edgesubset^-)$ 
Start $\mathcal{M}$ in state $\edgesubset_0$
and $\mathcal{M}'$ in state $\edgesubset'_0$
where $\edgesubset_0 \subseteq \edgesubset'_0$
and $\edgesubset_0$ and $\edgesubset'_0$ contain every edge in $\edgesubset^+$
and no edges in $\edgesubset^-$. We can then couple the evolution 
of the chains from the $i$'th pair of
states $(\edgesubset_i,\edgesubset'_i)$ as follows, guaranteeing
that $\edgesubset_{i+1} \subseteq \edgesubset'_{i+1}$. Choose the
same edge~$e\in\graphedges \setminus \edgesubset^+\cup \edgesubset^-$ in both chains.
If the endpoints of the edge are in the same component of $\edgesubset_i-e$
then they are in the same component of $\edgesubset'_i-e$.
The probability of putting $e$ in $\edgesubset_{i+1}$ is
$\edgeprob(e)$
and the probability of putting $e$ in $\edgesubset'_{i+1}$ is
$\edgeprob'(e)$,
which is at least as big.
Otherwise, the probability of putting $e$ in $\edgesubset_{i+1}$ is
$\edgeprob(e)/(\edgeprob(e)+ q(1-\edgeprob(e)))$.
This is at most 
$\edgeprob'(e)/(\edgeprob'(e)+ q(1-\edgeprob'(e)))$
and the probability of putting $e$ in
$\edgesubset'_{i+1}$ is at least 
$\edgeprob'(e)/(\edgeprob'(e)+ q(1-\edgeprob'(e)))$.
\end{proof}

\subsection{The Fundamental Lemma of Bollob\'as, Grimmett and Janson}

Another way to compare the random cluster model to the 
Erd\H os-R\'enyi model is to use the following lemma, which is
a multivariate version of Bollob\'as, Grimmett and Janson's
``Fundamental Lemma'', \cite[Lemma~3.1]{BGJ}. 
To do the comparison, the lemma should be applied with $r=1/q$
so that the distribution $\RC(\induced{\graph}{V_1} ; r q,\edgeprob)$ in the statement of the lemma below
is identical to the distribution $\ER(\induced{\graph}{V_1} ;\edgeprob)$.

Let $0\leq r \leq 1$ be fixed. 
Given a subset $\edgesubset$ of edges chosen from
$\RC(\graph;q,\edgeprob)$, colour each component of
$(\graphvertices,\edgesubset)$ either \emph{red}, with probability~$r$,
or \emph{green}, with probability~$1-r$: different components are coloured independently of one another. The union of the red components is the \emph{red subgraph}
and the union of the green components is the \emph{green subgraph}.
Let $R$ be the set of vertices in the red subgraph.
Given a subset $U$ of vertices, let $\induced{\graph}{U}$
be the subgraph of~$\graph$ induced by~$U$.

\begin{lemma}
\label{lem:FL} (Bollob\'as, Grimmett and Janson \citeyear[Lemma~3.1]{BGJ}.)
Let $\graphvertices_1 \subseteq \graphvertices$.
Conditioned on $R=\graphvertices_1$, the red subgraph is distributed
according to $\RC(\induced{\graph}{V_1}; r q,\edgeprob)$
and the green subgraph is distributed according
to $\RC(\induced{\graph}{\graphvertices\setminus {V_1}};(1-r)q,\edgeprob)$. Conditioned on $R=\graphvertices_1$, the red subgraph and green subgraph are independent of each other.
\end{lemma}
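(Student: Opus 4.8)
The plan is to compute, for a fixed spanning subgraph in the support of $\RC(\graph;q,\edgeprob)$ together with a colouring of its components, the joint probability of the triple (red edge-set, green edge-set, red vertex-set $R$), and to observe that after conditioning on $R=\graphvertices_1$ this probability factorises into the two advertised random-cluster distributions. This is the multivariate analogue of the argument in \cite[Lemma~3.1]{BGJ}, so the work is essentially bookkeeping.

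First I would record the structural consequence of conditioning. If $R=\graphvertices_1$ then every component of $(\graphvertices,\edgesubset)$ lies wholly inside $\graphvertices_1$ or wholly inside $\graphvertices\setminus\graphvertices_1$; equivalently, $\edgesubset$ contains no edge of $\graph$ with one endpoint in $\graphvertices_1$ and the other outside. Write $\graphedges_1$, $\graphedges_2$ for the edge sets of $\induced{\graph}{\graphvertices_1}$ and $\induced{\graph}{\graphvertices\setminus\graphvertices_1}$, and $\graphedges_{12}$ for the set of crossing edges, so that $\graphedges$ is the disjoint union of the three. Under the conditioning, $\edgesubset$ splits as $\edgesubset_1\cup\edgesubset_2$ with $\edgesubset_i\subseteq\graphedges_i$, the red subgraph is exactly $(\graphvertices_1,\edgesubset_1)$, and the green subgraph is exactly $(\graphvertices\setminus\graphvertices_1,\edgesubset_2)$.

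Next I would write out the joint probability. Fix $\edgesubset_1\subseteq\graphedges_1$, $\edgesubset_2\subseteq\graphedges_2$ and put $\edgesubset=\edgesubset_1\cup\edgesubset_2$. Since $\edgesubset$ has no crossing edges, $\kappa(\graphvertices,\edgesubset)=\kappa(\graphvertices_1,\edgesubset_1)+\kappa(\graphvertices\setminus\graphvertices_1,\edgesubset_2)$, and the weight product over $\graphedges$ in $\tildeP(\graph;\edgesubset,q,\edgeprob)$ splits across $\graphedges_1$, $\graphedges_2$ and $\graphedges_{12}$, every crossing edge being absent and contributing a factor $1-\edgeprob(e)$. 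The colouring produces this particular red/green split precisely when each of the $\kappa(\graphvertices_1,\edgesubset_1)$ components inside $\graphvertices_1$ is coloured red and each of the $\kappa(\graphvertices\setminus\graphvertices_1,\edgesubset_2)$ components in the complement is coloured green, an independent event of probability $r^{\kappa(\graphvertices_1,\edgesubset_1)}(1-r)^{\kappa(\graphvertices\setminus\graphvertices_1,\edgesubset_2)}$. Multiplying and absorbing the factors of $r$ and $1-r$ into the corresponding powers of $q$ gives the joint probability of (red edges $=\edgesubset_1$, green edges $=\edgesubset_2$, $R=\graphvertices_1$) as
\[
\frac{\prod_{e\in\graphedges_{12}}(1-\edgeprob(e))}{\Zrc(\graph;q,\edgeprob)}\cdot\tildeP(\induced{\graph}{\graphvertices_1};\edgesubset_1,rq,\edgeprob)\cdot\tildeP(\induced{\graph}{\graphvertices\setminus\graphvertices_1};\edgesubset_2,(1-r)q,\edgeprob).
\]
Summing over all admissible $(\edgesubset_1,\edgesubset_2)$, the two $\tildeP$ factors sum independently to $\Zrc(\induced{\graph}{\graphvertices_1};rq,\edgeprob)$ and $\Zrc(\induced{\graph}{\graphvertices\setminus\graphvertices_1};(1-r)q,\edgeprob)$, while the leading factor is a constant, yielding $\Pr[R=\graphvertices_1]$. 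Dividing, the constant and $\Zrc(\graph;q,\edgeprob)$ cancel, and the conditional joint probability becomes $P(\induced{\graph}{\graphvertices_1};\edgesubset_1,rq,\edgeprob)\cdot P(\induced{\graph}{\graphvertices\setminus\graphvertices_1};\edgesubset_2,(1-r)q,\edgeprob)$. Since this factorises into a function of $\edgesubset_1$ alone times a function of $\edgesubset_2$ alone, each being the stated random-cluster law, we obtain simultaneously the two marginals and their conditional independence.

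There is no real conceptual obstacle here; the proof is a careful computation. The one delicate point is the accounting of the crossing edges $\graphedges_{12}$: they contribute a factor depending on $\graphvertices_1$ but not on $(\edgesubset_1,\edgesubset_2)$, and one must notice that this factor, together with the normalising constant $\Zrc(\graph;q,\edgeprob)$, is exactly what cancels when one conditions on $R=\graphvertices_1$. A secondary point worth stating explicitly is that $\Zrc(\induced{\graph}{\graphvertices_1};rq,\edgeprob)$ and its associated distribution are defined through the Tutte-polynomial formula~(\ref{eq:TutteGph}) for an arbitrary real first parameter, so $rq$ and $(1-r)q$ need not be positive integers and no appeal to the Potts interpretation is made.
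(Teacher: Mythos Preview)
Your proof is correct and follows essentially the same route as the paper's: both compute the joint (unconditional) probability of the event that the red subgraph is $(\graphvertices_1,\edgesubset_1)$ and the green subgraph is $(\graphvertices\setminus\graphvertices_1,\edgesubset_2)$, use the additivity $\kappa(\graphvertices,\edgesubset_1\cup\edgesubset_2)=\kappa(\graphvertices_1,\edgesubset_1)+\kappa(\graphvertices\setminus\graphvertices_1,\edgesubset_2)$ to split the weight into three factors (one for each side and one for the crossing edges), and then divide by the marginal $\Pr[R=\graphvertices_1]$ to obtain the product form. Your treatment of the crossing-edge factor $\prod_{e\in\graphedges_{12}}(1-\edgeprob(e))$ is slightly more explicit than the paper's, but the argument is the same.
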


\begin{proof}
Let $\graphvertices_2 = \graphvertices\setminus \graphvertices_1$.
Let $\graphedges_1 = \graphedges \cap \graphvertices_1^{(2)}$
and $\graphedges_2 = \graphedges \cap \graphvertices_2^{(2)}$,
where the notation $V_1^{(2)}$ denotes the set of unordered pairs 
of vertices from~$V_1$.  
Now consider $\edgesubset_1 \subseteq \graphedges_1$
and $\edgesubset_2 \subseteq \graphedges_2$.
The (unconditional)
probability that the red subgraph is $(\graphvertices_1,\edgesubset_1)$
and the green subgraph is $(\graphvertices_2,\edgesubset_2)$
is 
$$
\frac{\tildeP(\graph;\edgesubset_1\cup\edgesubset_2,q,\edgeprob)\,
r^{\kappa(\graphvertices_1,\edgesubset_1)}
{(1-r)}^{\kappa(\graphvertices_2,\edgesubset_2)}}
{\Zrc(\graph;q,\edgeprob)}
$$
so the conditional probability, conditioned on $R=V_1$
is
\begin{equation}
\label{eq:ref}
\frac{\tildeP(\graph;\edgesubset_1\cup\edgesubset_2,q,\edgeprob)\,
r^{\kappa(\graphvertices_1,\edgesubset_1)}
{(1-r)}^{\kappa(\graphvertices_2,\edgesubset_2)}}
{
\sum_{\edgesubset'_1\subseteq \graphedges_1,\edgesubset'_2\subseteq \graphedges_2}
{\tildeP(\graph;\edgesubset'_1\cup\edgesubset'_2,q,\edgeprob)\,
r^{\kappa(\graphvertices_1,\edgesubset'_1)}
{(1-r)}^{\kappa(\graphvertices_2,\edgesubset'_2)}}
} \end{equation}
which is
$$ P(\induced{\graph}{V_1}; \edgesubset_1,r q,\edgeprob)
P(\induced{\graph}{\graphvertices\setminus {V_1}};\edgesubset_2,(1-r)q,\edgeprob).
$$
To see this, note that the numerator of (\ref{eq:ref}) is
$$ 
q^{\kappa( \graphvertices,\edgesubset_1\cup \edgesubset_2)}\,
r^{\kappa(\graphvertices_1,\edgesubset_1)}
{(1-r)}^{\kappa(\graphvertices_2,\edgesubset_2)}
\prod_{e\in \edgesubset_1\cup \edgesubset_2} \edgeprob(e) 
\prod_{e \in \graphedges\setminus (\edgesubset_1 \cup \edgesubset_2)} (1-\edgeprob(e))
.$$
Since $\kappa( \graphvertices,\edgesubset_1\cup \edgesubset_2) = 
\kappa(\graphvertices_1,\edgesubset_1) + \kappa(\graphvertices_2,\edgesubset_2)$,
this can be decomposed as the product of  the three terms

$$q^{\kappa( \graphvertices_1,\edgesubset_1)} \,r^{\kappa(\graphvertices_1,\edgesubset_1)}
\prod_{e\in \edgesubset_1} \edgeprob(e) 
\prod_{e \in \graphedges_1\setminus \edgesubset_1  } (1-\edgeprob(e)) =
\tildeP(G[V_1];A_1,r q,p),
$$
$$
q^{\kappa(\graphvertices_2,\edgesubset_2)}\,
{(1-r)}^{\kappa(\graphvertices_2,\edgesubset_2)} 
\prod_{e\in \edgesubset_2} \edgeprob(e) 
\prod_{e \in \graphedges_2\setminus   \edgesubset_2} (1-\edgeprob(e))=
\tildeP(G[V_2];A_2,(1-r) q,p),
$$
and
$$\prod_{e \in \graphedges\setminus  (\graphedges_1 \cup \graphedges_2)} (1-\edgeprob(e)).$$
The denominator can be decomposed similarly.
\end{proof}

\section{The random cluster model on some natural graphs}
\label{sec:natural}
In this section we consider the random cluster model on a clique, and also on
a pair of connected cliques. The latter is used as a gadget in our constructions.
First, however, we need a technical lemma.

\subsection{A technical lemma} 

The following lemma is not about the random cluster model, but we will use
it in our analysis of the random cluster model on a pair of connected cliques.
 
\begin{lemma}\label{lem:yellowblue}
Suppose we have a partition of the set $[\nu]$ into $s$~blocks 
of size at most $\nu_{\max}$.
Randomly colour a subset of the elements of $[\nu]$
yellow according to a Bernoulli process with success probability $\sprob$.
Independently, randomly colour a subset of the elements of $[\nu]$ blue
according to a Bernoulli process with success probability $\sprob$ (so an element
can be coloured yellow, blue, both or neither).
We say that a block of the partition is {\em bicoloured} if it
contains both yellow and blue elements.  Then
$$
\Pr(\textup{no block is bicoloured})\leq 
   [(1-\sprob)^{\nu/s}(2-(1-\sprob)^{\nu/s})]^s
$$
(which is increasing  in $s$ and decreasing  in $\nu$) and
$$
\Pr(\textup{some block is bicoloured})
\leq
\nu[1-(1-\sprob)^{\nu_{\max}}]^2
$$
(which is increasing in both $\nu$ and $\nu_{\max}$).
\end{lemma}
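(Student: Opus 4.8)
The plan is to treat the two bounds separately, and in each case to work one block at a time, using that distinct blocks involve disjoint sets of elements (hence are coloured independently) and that the yellow and blue Bernoulli processes are mutually independent. For $\Pr(\textup{some block is bicoloured})$, fix a single block $B$ of size $b\le\nu_{\max}$. The events ``$B$ contains a yellow element'' and ``$B$ contains a blue element'' are independent, each of probability $1-(1-\sprob)^b$, so $\Pr(B\text{ is bicoloured}) = \bigl(1-(1-\sprob)^b\bigr)^2\le\bigl(1-(1-\sprob)^{\nu_{\max}}\bigr)^2$ since $b\mapsto 1-(1-\sprob)^b$ is increasing. A union bound over the $s$ blocks (there are at most $\nu$ of them, as blocks are nonempty) then gives the claimed estimate $\nu[1-(1-\sprob)^{\nu_{\max}}]^2$, whose monotonicity in $\nu$ and in $\nu_{\max}$ is visible from the closed form.

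For $\Pr(\textup{no block is bicoloured})$, again fix a block $B$ of size $b$. By inclusion--exclusion applied to the independent events ``no yellow in $B$'' and ``no blue in $B$'', each of probability $(1-\sprob)^b$, one gets $\Pr(B\text{ is not bicoloured}) = 2(1-\sprob)^b-(1-\sprob)^{2b} = (1-\sprob)^b\bigl(2-(1-\sprob)^b\bigr)$. Since distinct blocks are coloured independently, $\Pr(\textup{no block is bicoloured})$ is the product of these quantities over all blocks, so it remains to show
$$\prod_B (1-\sprob)^{|B|}\bigl(2-(1-\sprob)^{|B|}\bigr)\ \le\ \Bigl[(1-\sprob)^{\nu/s}\bigl(2-(1-\sprob)^{\nu/s}\bigr)\Bigr]^{s}$$
subject to the constraint $\sum_B|B| = \nu$ (with $s$ blocks). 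I expect this extremal inequality to be the main obstacle.

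The way I would prove it is by Jensen's inequality after a logarithmic change of variable. Assume $0<\sprob<1$ (for $\sprob\in\{0,1\}$ both sides equal $1$, respectively both equal $0$). Put $t_B = |B|\log(1-\sprob)<0$ and $g(t) = t+\log(2-e^{t})$ for $t<\log 2$; then the left-hand side above equals $\exp\bigl(\sum_B g(t_B)\bigr)$. A direct computation gives $g''(t) = -2e^{t}/(2-e^{t})^{2}<0$, so $g$ is concave, and Jensen yields $\sum_B g(t_B)\le s\,g\bigl(\tfrac1s\sum_B t_B\bigr)$. Since $\tfrac1s\sum_B t_B = \tfrac{\nu}{s}\log(1-\sprob) = \log\bigl((1-\sprob)^{\nu/s}\bigr)$, exponentiating gives exactly the required bound. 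For the monotonicity claim, set $z = (1-\sprob)^{\nu/s}\in(0,1)$ and $L = \log\bigl([z(2-z)]^{s}\bigr) = \nu\log(1-\sprob)+s\log(2-z)$; differentiating with $\partial z/\partial\nu = \tfrac1s z\log(1-\sprob)$ and $\partial z/\partial s = -\tfrac1s z\log z$ gives $\partial L/\partial\nu = \log(1-\sprob)\cdot\frac{2(1-z)}{2-z}<0$ and $\partial L/\partial s = \log(2-z)+\frac{z\log z}{2-z} =: \phi(z)$, where $\phi(1)=0$ and $\phi'(z) = \frac{2\log z}{(2-z)^{2}}<0$ on $(0,1)$, so $\phi(z)>0$ there; hence the bound decreases in $\nu$ and increases in $s$.
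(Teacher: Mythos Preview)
Your proof is correct and follows the same overall strategy as the paper: handle each block separately, use independence to take a product (respectively, a union bound), and then optimise over block sizes using concavity. The differences are in execution. For the extremal step you invoke Jensen's inequality for the concave function $g(t)=t+\log(2-e^t)$, whereas the paper argues directly that replacing two unequal block sizes by their average increases $\prod_j[2-(1-\sprob)^{\nu_j}]$; your route is the slicker of the two. For the monotonicity in~$s$, you differentiate $L=\nu\log(1-\sprob)+s\log(2-z)$ and analyse $\phi(z)=\log(2-z)+\tfrac{z\log z}{2-z}$ via $\phi'(z)=\tfrac{2\log z}{(2-z)^2}<0$ and $\phi(1)=0$; the paper instead substitutes $s=ax$ with $a=\nu\ln((1-\sprob)^{-1})$ and shows $f(x)=x\ln(2-e^{-1/x})$ is increasing by checking $f''<0$ and $f'\to0$. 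Both arguments are valid; yours is more direct.
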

\begin{proof}
Let the block sizes be $\nu_1,\ldots,\nu_s$, so that 
$\max_j\nu_j \leq \nu_{\max}$.
Observe that 
$$
\Pr(\text{$j$th block contains a yellow element})=1-(1-\sprob)^{\nu_j}
=\sprob_j\text{ (say)},
$$
and that the same bound applies, of course, to blue elements.  Thus,
using the convention that the index $j$ always ranges over $1\leq j\leq s$,
$$\Pr(\text{some block is bicoloured})\leq
   \sum_j\sprob_j^2\leq\sum_j[1-(1-\sprob)^{\nu_{\max}}]^2,$$
which gives the second bound.  Clearly, this bound is monotonically
increasing in $\nu$ and~$\nu_{\max}$ as claimed.

Now for the first bound.  
\begin{align}
\Pr(\text{no block is bicoloured})&=\prod_j(1-\sprob_j^2)
=\prod_j(1-\sprob_j)\prod_j(1+\sprob_j)\notag\\
&=\prod_j(1-\sprob)^{\nu_j}\prod_j(1+\sprob_j)\notag\\
&=(1-\sprob)^\nu\prod_j(1+\sprob_j).\label{eq;opt2}
\end{align}
To get an upper bound we are interested in evaluating the supremum of 
\begin{equation}\label{eq:opt1}
\prod_j(1+\sprob_j)=\prod_j[2-(1-\sprob)^{\nu_j}]
\end{equation}
over the domain defined by the linear inequalities
$0\leq\nu_j\leq\nu$ and $\sum_j\nu_j=\nu$.
We consider this as an optimisation problem over~$\R^s$ even though 
the $\nu_j$ are all integers;  of course, this will if anything only
increase the supremum.  We are considering a continuous function 
over a closed, bounded set, so the supremum is achieved at some 
point;  we claim that this (unique) point is 
$\nu_1=\cdots=\nu_s=\nu/s$.  For if not, then at least one pair,
say $\nu_1$ and $\nu_2$ would be unequal.  But then it is easily checked
that replacing $\nu_1$ and $\nu_2$ by their average would increase 
$[2-(1-\sprob)^{\nu_1}][2-(1-\sprob)^{\nu_2}]$, and hence increase 
the right hand side of (\ref{eq:opt1}):  a contradiction.
Substituting $\nu_j=\nu/s$ into equations (\ref{eq:opt1})
and then~(\ref{eq;opt2}), we obtain
\begin{align}
\Pr(\text{no block is bicoloured})
&\leq (1-\sprob)^\nu[2-(1-\sprob)^{\nu/s}]^s\label{eq:ub1}\\
&=[(1-\sprob)^{\nu/s}(2-(1-\sprob)^{\nu/s})]^s,\label{eq:ub2}
\end{align}
as desired.

In only remains to verify the monotonicity claims about (\ref{eq:ub2}).
Let $u(\nu)=(1-\sprob)^{\nu/s}\in(0,1]$. Regarding $s>0$ as fixed, 
$u(\nu)$ decreases monotonically with $\nu$. 
Also, $[u(2-u)]^s$ increases monotonically as a function of $u$ in the 
range $(0,1)$. Thus, expression (\ref{eq:ub2}) decreases monotonically 
with $\nu$.

Now regard $\nu$ as fixed and make the change of variable $s=ax$,
where $a=\nu\ln((1-\sprob)^{-1})$, and note that $x>0$.
Then expression~(\ref{eq:ub1})
becomes $[e^{-1}(2-e^{-1/x})^x]^a$.  Thus, it is enough to show that
$f(x)=x\ln(2-e^{-1/x})$ increases monotonically with~$x$.
Now 
$$f''(x)=-2x^{-3}e^{-1/x}(2-e^{-1/x})^{-2}<0,$$
and
$$f'(x)=\ln(2-e^{-1/x})-x^{-1}e^{-1/x}(2-e^{-1/x})^{-1}\to0\quad
\text{as $x\to\infty$}.
$$
These two facts imply $f'(x)>0$ for $x>0$.
\end{proof}

\subsection{The random cluster model on a clique}
\label{sec:clique}
Bollob\'as, Grimmett and Janson~\citeyear{BGJ} studied the random cluster model on the complete $N$-vertex graph~$K_N$.  
More detailed analyses have since been performed, for example
by Luczak and \L uczak~\citeyear{LuczakLuczak}, but the  approach of the earlier paper is 
easier to adapt to our needs.

For fixed~$q$ and a fixed constant~$\lambda$, 
they studied the distribution $\RC(K_N,q,p)$ where $p$ is the constant function which
assigns every edge $e$ of $K_N$ the value $p(e)=\lambda/N$.
They show that there is a critical value~$\lambda_c$, depending on~$q$,
so that, if $\lambda>\lambda_c$ then, as $N\rightarrow \infty$, with high probability
a configuration~$A$ drawn from $\RC(K_N,q,p)$ 
will have a large component (of size linear in~$N$) and otherwise, with high probability 
the largest component will be much smaller (of size logarithmic in~$N$).

For $q>2$, the critical value $\lambda_c$ is defined as follows.
$$\lambda_c = 2\left(\frac{q-1}{q-2}\right)\ln(q-1).$$
It is important for our analysis that $\lambda_c < q$ (see \cite[p.16]{BGJ}, or 
by calculus).
We define $\delta = (q-\lambda_c)/2>0$
and $\lambda = \lambda_c + \delta$.
Let $\theta=(q-2)/(q-1)$.

We will use the following lemma, which follows from 
Theorem~2.2 and Equation~(5) of~\cite{BGJ}.
\begin{lemma}
\label{lem:clique}
Fix $q>2$ and define $\lambda$ and $\theta$ as above.  Let $p$ be the constant function which assigns
every edge $e$ of $K_N$ the value $p(e)=\lambda/N$.
Let $A$ be drawn from $\RC(K_N,q,p)$. The probability that $A$ has a connected component of size
at least $\theta N$  tends to~$1$ as $N\rightarrow \infty$.
\end{lemma}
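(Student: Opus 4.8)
The plan is to invoke the analysis of $\RC(K_N,q,p)$ by Bollob\'as, Grimmett and Janson~\cite{BGJ} more or less directly, so that the proof amounts to a short translation into the present notation. Recall that \cite{BGJ} treat exactly the distribution $\RC(K_N,q,p)$ with $p$ the constant function $\lambda/N$ (replacing $\lambda/N$ by the closely related $1-e^{-\lambda/N}$ would make no difference, since the two agree up to $O(N^{-2})$ uniformly over the edges), and that for $q>2$ they establish a first-order phase transition at $\lambda_c$. Their Theorem~2.2, combined with their order-parameter equation~(5), states that when $\lambda>\lambda_c$, with probability tending to $1$ as $N\to\infty$ a configuration $A\sim\RC(K_N,q,p)$ has a largest component $C_1(A)$ with $|C_1(A)|=\bigl(\beta(\lambda)+o(1)\bigr)N$, where $\beta(\lambda)$ is the largest root in $(0,1]$ of Equation~(5).

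The substantive step is then to check that $\beta(\lambda)>\theta$ at the chosen value $\lambda=\lambda_c+\delta$. I would get this from two observations. First, $\beta(\cdot)$ is non-decreasing: for $\lambda'\ge\lambda$, the model $\RC(K_N,q,\lambda/N)$ is stochastically dominated by $\RC(K_N,q,\lambda'/N)$ by Lemma~\ref{lem:coupleRC}, and adding edges cannot shrink the largest component, so $\beta(\lambda)\le\beta(\lambda')$. Second, the nontrivial branch of solutions of Equation~(5) emerges, at $\lambda=\lambda_c$, at height $\theta$, and the larger root then strictly increases with $\lambda$ --- equivalently, the order parameter of the mean-field $q$-state Potts model jumps to $(q-2)/(q-1)$ at the transition and grows beyond it. This is a short calculus check on~(5), of the same flavour as the inequality $\lambda_c<q$ used in the text and anchored by the identity $\lambda_c=2\frac{q-1}{q-2}\ln(q-1)=\frac{2}{\theta}\ln\frac{1}{1-\theta}$. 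Since $\delta>0$, these give $\beta(\lambda_c+\delta)>\theta$, so we may fix $\epsilon>0$ with $\beta(\lambda)>\theta+\epsilon$.

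Combining, Theorem~2.2 gives $\Pr\bigl(|C_1(A)|\ge(\theta+\epsilon)N\bigr)\to1$ as $N\to\infty$, hence a fortiori $\Pr\bigl(|C_1(A)|\ge\theta N\bigr)\to1$, which is the assertion of the lemma. This is also why the text takes $\delta>0$ strictly rather than sitting at $\lambda_c$ itself: the strict inequality $\beta(\lambda)>\theta$ is exactly what absorbs the $o(1)$ fluctuation term supplied by Theorem~2.2. I expect the only real work to be aligning the parametrisation of~\cite{BGJ} with the one here and carrying out the small branch-point/monotonicity computation for Equation~(5); the rest is bookkeeping, which is presumably why the authors state the lemma as an immediate consequence of~\cite{BGJ}.
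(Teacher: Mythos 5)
Your proposal is correct and takes exactly the route the paper intends: the paper offers no proof of Lemma~\ref{lem:clique} beyond the citation ``follows from Theorem~2.2 and Equation~(5) of \cite{BGJ},'' and your reading of that citation --- Theorem~2.2 gives $|C_1(A)|=(\beta(\lambda)+o(1))N$ whp, the nontrivial branch of Equation~(5) emerges at $\beta=\theta$ when $\lambda=\lambda_c$, and strict monotonicity of the largest root together with $\delta>0$ gives $\beta(\lambda_c+\delta)>\theta$ to absorb the $o(1)$ --- is exactly the intended argument. The identity $\lambda_c\theta=2\ln(q-1)$ that you note is what makes $\beta=\theta$ solve Equation~(5) at $\lambda_c$, so your ``short calculus check'' goes through as described.
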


\subsection{The random cluster model on a pair of connected cliques}
\label{sec:gadget}

Let $\gadget$ 
be the 
complete graph with vertex set $\gadgetvertices = \clique\cup \terminalset$.
Let $\gadgetedges$ denote the edge set of~$\gadget$ and let
$\cliquesize = |\clique|$ and $\terminals = |\terminalset|$.
Let $\clique^{(2)}$ denote the set of unordered pairs of distinct elements in~$\clique$  
and define $\terminalset^{(2)}$  similarly. 
Let $\critprob$ be a value in $[0,1]$.
Define $\edgeprob$ as follows:
$$\edgeprob(e) = \begin{cases}
\critprob, &\text{if $e\in \clique^{(2)}$,}\\
\cliquesize^{-3/4}, &\text{if $e\in\clique\times \terminalset$, and}\\
1, &\text{if $e\in \terminalset^{(2)}$.}
\end{cases}
$$
Here and in similar situations
in this paper, we slightly abuse notation by identifying $K \times T$ with the 
set of unordered pairs with one element from~$K$ and one from~$T$.

Ultimately, we will use the graph~$\gadget$ 
(or, more precisely, $\gadget$ with the edges $\terminalset^{(2)}$ deleted) as a 
gadget to simulate the contribution of a hyperedge on the set~$\terminalset$ to the multivariate
Tutte polynomial of~$\gadget$.
Thus, we refer to vertices in~$\terminalset$ as ``terminals'' of the graph~$\gadget$.
For a subset $\edgesubset\subseteq \gadgetedges$, let $\rv(\edgesubset)$ be the number of connected components in the graph 
$(\gadgetvertices,\edgesubset \setminus \terminalset^{(2)})$
that contain 
terminals.

A remark about the gadget~$\gadget$ and its eventual use.
When we come to use the gadget, the edges in $T^{(2)}$ will not be 
present.  It is for this reason that we are interested in the structure of 
connected components in $\gadget$ in the absence of these edges,
and specifically the random variable $\rv(\edgesubset)$.  However, it turns out that 
the key properties of the gadget are easier to verify if we work with 
a random cluster distribution associated with $\gadget$, exactly 
as given above, with the edges $T^{(2)}$ present.  Informally, the appropriate
``boundary condition'' for the gadget is the one in which the terminals are 
joined with probability~1.

The following two lemmas establish some useful properties of the gadget.
The second shows that the distribution of $\rv(\edgesubset)$ is concentrated at the extremes
of its range, i.e., $\rv=1$ or $\rv=\terminals$.  This concentration property holds for a wide 
range of values for the edge probability~$\critprob$.  The first lemma, which is easier, shows that we 
can tune~$\critprob$ so that the balance of probability between those extremes is
the desired value~$\gamma$.   Later, in Lemmas \ref{lem:computeZ} and~\ref{lem:computerho},
we shall show that a sufficiently 
close approximation to this~$\critprob$ can be efficiently computed.

\begin{lemma}\label{lem:excludedmiddleone}
Fix $q>2$ and  let $\lambda = \lambda_c + (q-\lambda_c)/2$.
Fix a weight $\gamma>0$ and let $\cliquesize_0$ be a sufficiently
large quantity depending on $q$ and $\gamma$.
Suppose a number of terminals $\terminals>1$ is given
and fix 
$\cliquesize\geq\max\{\terminals^{16}, \cliquesize_0\}$.
Then there is a parameter~$\critprob$
satisfying $\cliquesize^{-3} \leq \critprob \leq
\lambda/\cliquesize\leq \tfrac14$ 
such that, if $\edgesubset$ is drawn from $\RC(\gadget;q,p)$ then
\begin{equation}
\Pr(\rv(\edgesubset)=1)=\gamma\Pr(\rv(\edgesubset)=\terminals)\label{eq:balance}.
\end{equation}
\end{lemma}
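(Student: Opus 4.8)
The plan is to locate the required $\critprob$ by the intermediate value theorem. For $\critprob\in[\cliquesize^{-3},\lambda/\cliquesize]$ put
$$
g(\critprob)=\Pr\nolimits_{\RC(\gadget;q,\edgeprob)}\big(\rv(\edgesubset)=1\big)-\gamma\,\Pr\nolimits_{\RC(\gadget;q,\edgeprob)}\big(\rv(\edgesubset)=\terminals\big),
$$
where $\edgeprob$ depends on $\critprob$ as in the definition preceding the lemma. Only the weights of edges in $\clique^{(2)}$ depend on $\critprob$, so each $\tildeP(\gadget;\edgesubset,q,\edgeprob)$ is a polynomial in $\critprob$ and hence so is $\Zrc(\gadget;q,\edgeprob)=\sum_{\edgesubset}\tildeP(\gadget;\edgesubset,q,\edgeprob)$; moreover $\Zrc>0$ throughout the interval (the single configuration $\edgesubset=\terminalset^{(2)}$ already contributes a positive term, as $\edgeprob(e)<1$ for every $e\notin\terminalset^{(2)}$). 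Thus every $\Pr(\rv=k)$, and so $g$, is continuous on $[\cliquesize^{-3},\lambda/\cliquesize]$. The bound $\lambda/\cliquesize\le\tfrac14$ holds once $\cliquesize_0\ge4\lambda$, which is legitimate since $\lambda<q$ (recorded just after Lemma~\ref{lem:clique}). So it remains to prove $g(\cliquesize^{-3})<0$ and $g(\lambda/\cliquesize)>0$ when $\cliquesize\ge\max\{\terminals^{16},\cliquesize_0\}$ and $\cliquesize_0$ is large in terms of $q$ and $\gamma$.

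For the left endpoint $\critprob=\cliquesize^{-3}$ I would show that $\Pr(\rv(\edgesubset)=\terminals)\to1$ as $\cliquesize\to\infty$, uniformly in $\terminals\le\cliquesize^{1/16}$. By Lemma~\ref{lem:coupleone} the distribution $\RC(\gadget;q,\edgeprob)$ is stochastically dominated by $\ER(\gadget;\edgeprob)$, and under the latter the probability that some edge of $\clique^{(2)}$ is chosen is at most $\binom{\cliquesize}{2}\cliquesize^{-3}=O(\cliquesize^{-1})$, while for any two terminals the probability that some vertex of $\clique$ is adjacent to both is at most $\cliquesize\,(\cliquesize^{-3/4})^2=\cliquesize^{-1/2}$, so summing over the at most $\cliquesize^{1/8}$ pairs of terminals gives $O(\cliquesize^{-3/8})$. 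As these are increasing events, the same bounds hold under $\RC(\gadget;q,\edgeprob)$. Off the union of these two unlikely events, $\edgesubset\cap\clique^{(2)}=\emptyset$ and no two terminals have a common $\clique$-neighbour, so in $(\gadgetvertices,\edgesubset\setminus\terminalset^{(2)})$ each terminal's component is a star centred at that terminal and the $\terminals$ stars are pairwise disjoint; hence $\rv(\edgesubset)=\terminals$. Therefore $\Pr(\rv=\terminals)\to1$ and $\Pr(\rv=1)\to0$, so $g(\cliquesize^{-3})<0$ for $\cliquesize$ large in terms of $q$ and $\gamma$.

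For the right endpoint $\critprob=\lambda/\cliquesize$ I would show, dually, that $\Pr(\rv(\edgesubset)=1)\to1$. Let $\edgeprob_0$ agree with $\edgeprob$ on $\clique^{(2)}$ and be $0$ on every other edge. By Lemma~\ref{lem:coupleRC} (valid since $q>2\ge1$), $\RC(\gadget;q,\edgeprob_0)$ is stochastically dominated by $\RC(\gadget;q,\edgeprob)$, and the restriction of $\RC(\gadget;q,\edgeprob_0)$ to $\clique^{(2)}$ is exactly $\RC(\clique;q,\edgeprob)$ with the constant weight $\lambda/\cliquesize$, which by Lemma~\ref{lem:clique} has a component of size $\ge\theta\cliquesize$ with probability tending to $1$. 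Hence, with probability tending to $1$, $(\clique,\edgesubset\cap\clique^{(2)})$ has a component $C$ with $|C|\ge\theta\cliquesize$. Now condition on $\edgesubset\cap\clique^{(2)}$ and contract each of its components to a single vertex: the conditional law of the remaining edges of $\edgesubset$ is a random cluster model on the resulting multigraph, in which $C$ becomes a vertex joined to every terminal by $|C|$ parallel edges, each with $\edgeprob$-value $\cliquesize^{-3/4}$. By Lemma~\ref{lem:coupletwo} this random cluster model dominates the Erd\H os--R\'enyi model obtained by dividing all weights by $q$; under that model the chance that a fixed terminal is joined to $C$ by none of those $|C|$ edges is at most $(1-\cliquesize^{-3/4}/q)^{|C|}\le\exp(-\theta\cliquesize^{1/4}/q)$, and a union bound over the $\terminals\le\cliquesize^{1/16}$ terminals keeps the failure probability $o(1)$. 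When every terminal is joined to $C$, all terminals lie in a single component of $(\gadgetvertices,\edgesubset\setminus\terminalset^{(2)})$, so $\rv(\edgesubset)=1$. Combining the two estimates, $\Pr(\rv=1)\to1$; since $\terminals>1$ forces $\Pr(\rv=\terminals)\le1-\Pr(\rv=1)$, once $\Pr(\rv=1)>\gamma/(1+\gamma)$ we obtain $\Pr(\rv=1)>\gamma\big(1-\Pr(\rv=1)\big)\ge\gamma\,\Pr(\rv=\terminals)$, i.e.\ $g(\lambda/\cliquesize)>0$, for $\cliquesize$ large in terms of $q$ and $\gamma$. The intermediate value theorem then yields $\critprob\in[\cliquesize^{-3},\lambda/\cliquesize]$ with $g(\critprob)=0$, which is~(\ref{eq:balance}).

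The step I expect to be the main obstacle is the right-endpoint estimate, because Lemma~\ref{lem:clique} describes the clique only in isolation: one must first transport its giant-component conclusion into $\gadget$ (via the monotonicity Lemma~\ref{lem:coupleRC}) and then show that this giant component absorbs all $\terminals$ terminals (via the conditional description as a random cluster model on the contracted multigraph together with one further Erd\H os--R\'enyi comparison). The one point requiring genuine care is to keep every error term $o(1)$ uniformly in $\terminals$, which is exactly what the hypothesis $\cliquesize\ge\terminals^{16}$ provides.
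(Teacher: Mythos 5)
Your proposal is correct and follows the same overall strategy as the paper's proof: continuity of the relevant probabilities (they are rational functions of $\critprob$ with nonvanishing denominator), two endpoint estimates via stochastic domination, then the intermediate value theorem. The lower-endpoint argument is essentially the paper's. Two small differences at the right endpoint are worth flagging. First, the paper also records that $\Pr(\rv(\edgesubset)=\terminals)/\Pr(\rv(\edgesubset)=1)$ is monotone decreasing in $\critprob$ (via Lemma~\ref{lem:coupleRC}); this is not needed for mere existence but it comes for free. Second, and more substantively, you condition on $\edgesubset\cap\clique^{(2)}$ and then \emph{contract} its components before applying Lemma~\ref{lem:coupletwo}. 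That invokes the contraction/deletion property of the random cluster model, which is standard but nowhere established in this paper, and Lemma~\ref{lem:coupletwo} as stated concerns conditioned RC and ER distributions on the ambient graph, not RC on a contracted multigraph. The paper avoids this detour: it applies Lemma~\ref{lem:coupletwo} directly to $\RC(\gadget;q,\edgeprob;\edgesubset^+,\edgesubset^-)$ on $\gadget$ itself with $\edgesubset^+=\edgesubset\cap\clique^{(2)}$ and $\edgesubset^-=\clique^{(2)}\setminus\edgesubset^+$, comparing to $\ER(\gadget;\edgeprob/q;\edgesubset^+,\edgesubset^-)$, and then bounds the probability that some terminal has no edge into the giant component. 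The resulting numerical estimate $(1-\cliquesize^{-3/4}/q)^{\theta\cliquesize}\leq\exp(-\theta q^{-1}\cliquesize^{1/4})$ is identical to yours. So the fix is notational rather than conceptual: drop the contraction and run the same Erd\H os--R\'enyi calculation directly in the conditioned model on $\gadget$, and the argument becomes self-contained within the paper's lemmas.
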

\begin{proof}
Define   $\theta$ (depending on~$q$) as in Section~\ref{sec:clique}.
Let $\psi(\critprob) = \Pr(\rv(\edgesubset)=\terminals)/\Pr(\rv(\edgesubset)=1)$.
We will use stochastic domination to show
\begin{itemize}
\item $\psi(\critprob)$ is monotonically decreasing as a function of $\critprob$,
\item $\psi(\cliquesize^{-3})> 1/\gamma$, and
\item $\psi(\lambda/\cliquesize)< 1/\gamma$.
\end{itemize}
Since $\psi(\critprob)$ is a rational function in~$\critprob$ and the denominator is never zero, we conclude that
$\psi(\critprob)$ is continuous in~$\critprob$  and there is a value $\critprob\in(\cliquesize^{-3},\lambda/\cliquesize)$
such that $\psi(\critprob)=1/\gamma$. This gives~(\ref{eq:balance}). 
Note that the lower bound $\cliquesize^{-3}$ for~$\critprob$ is very
crude, but this is all that we will need.

First, to show that $\psi(\critprob)$ is monotonically decreasing as a function of $\critprob$, 
note from
Lemma~\ref{lem:coupleRC}
that $\Pr(\rv(\edgesubset)=\terminals)$ is monotonically decreasing in~$\critprob$ and
  $\Pr(\rv(\edgesubset)=1)$ is monotonically increasing in~$\critprob$.

Next, to show that $\psi(\cliquesize^{-3})> 1/\gamma$, we will assume $\critprob=\cliquesize^{-3}$ 
(and that $\cliquesize$ is sufficiently large)
and we will show
$\Pr(\rv(\edgesubset)=\terminals)> 1/(1+\gamma)$, which suffices.
By Lemma~\ref{lem:coupleone}, 
$\Pr(\rv(\edgesubset)=\terminals)\geq \Pr(\rv(\esu)=\terminals)$ where
$\esu$ is    drawn from $\ER(\gadget;\edgeprob)$.
Now the probability that 
$\esu \cap \clique^{(2)}=\emptyset$ is at least $1-\binom{\cliquesize}{2}
\cliquesize^{-3}$. In this case,
the probability that a particular
pair of terminals is connected in~$\esu \setminus \terminalset^{(2)}$ 
is at most $\cliquesize\times \cliquesize^{-3/4}\times \cliquesize^{-3/4}=\cliquesize^{-1/2}$, 
and the probability that there exists a connected pair  is at most
$\terminals^2\cliquesize^{-1/2}\leq \cliquesize^{-3/8}$.  
So $\Pr(\rv(\esu)=\terminals)  \geq1
- \binom{\cliquesize}{2}
\cliquesize^{-3}
-\cliquesize^{-3/8}>1/(1+\gamma)$
(for sufficiently large~$N$).

To finish the proof of~(\ref{eq:balance}),
we will show that $\psi(\lambda/\cliquesize)< 1/\gamma$. To do this, we will assume $\critprob=\lambda/\cliquesize$
(and that $\cliquesize$ is sufficiently large)
and we will show $\Pr(\rv(\edgesubset)=1)> \gamma/(1+\gamma)$, which suffices.
We will again use stochastic domination to compare the random cluster model to the 
Erd\H os-R\'enyi model, but this time we need some conditioning.

First,  we will  show that the probability that the graph
$(\clique,\edgesubset\cap \clique^{(2)})$ has a connected component of size at least $\theta \cliquesize$ 
tends to~$1$ as $\cliquesize \rightarrow\infty$.
To do this, let $\edgesubset^*$ be drawn from the distribution $\RC(\Gamma,q,\hat\edgeprob)$
where 
  $$\hat\edgeprob(e) = \left\{
\begin{array}{cc}
\critprob, &\mbox{if $e\in \clique^{(2)}$,}\\
0, &\mbox{otherwise.}
\end{array}\right.
$$ 
Since $\hat\edgeprob(e)\leq \edgeprob(e)$, Lemma~\ref{lem:coupleRC}
guarantees that the probability in question
is at least  the probability that the graph
$(\clique,\edgesubset^*)$ has a connected component of size at least $\theta \cliquesize$.
By Lemma~\ref{lem:clique}, this probability  tends to~$1$ as $\cliquesize\rightarrow \infty$.

Next, we will consider 
the generation of configuration~$\edgesubset$ from the distribution
$\RC(\gadget;q,\edgeprob)$ as follows. First, we will select a set $\edgesubset^+\subseteq \clique^{(2)}$ from
the appropriate induced distribution.
Letting  
  $\edgesubset^-=\clique^{(2)}\setminus \edgesubset^+$, we will select $\edgesubset$ from the distribution
$\RC(\gadget;q,\edgeprob;\edgesubset^+,\edgesubset^-)$.
We will finish by showing that, as long as $(\clique,\edgesubset^+)$ has a connected component of size 
at least $\theta \cliquesize$, 
$\Pr(\rv(\edgesubset)=1 \mid \edgesubset^+\subseteq \edgesubset, \edgesubset\cap \edgesubset^-=\emptyset)$
is greater than $\gamma/(1+\gamma)$.  
To do this, let $\edgeprob'(e) = \edgeprob(e)/q$ and let $\esl$ be a random variable drawn from
$\ER(\gadget;\edgeprob';\edgesubset^+,\edgesubset^-)$.
By Lemma~\ref{lem:coupletwo}, 
$\Pr(\rv(\edgesubset)=1 \mid \edgesubset^+\subseteq \edgesubset, \edgesubset\cap \edgesubset^-=\emptyset)\geq \Pr(\rv(\esl)=1)$.
Now in $\esl$, 
the probability that a particular terminal is not
connected to the large component is at most
$(1-q^{-1}\cliquesize^{-3/4})^{\theta \cliquesize}
\leq\exp(-\theta q^{-1}\cliquesize^{1/4})$,
and the probability that there exists a terminal that is not connected to this component is at most
$t\exp(-\theta q^{-1}\cliquesize^{1/4})$.   
So $\Pr(\rv(\esl)=1)>\gamma/(1+\gamma)$
(for sufficiently large~$N$). \end{proof}

\begin{lemma}\label{lem:excludedmiddletwo}
Fix $q>2$ and 
let $\lambda = \lambda_c + (q-\lambda_c)/2$.
Fix a weight $\gamma>0$ and let $\cliquesize_0$ be a sufficiently
large quantity depending on $q$ and $\gamma$.
Suppose a number of terminals $\terminals>1$ and a tolerance $0<\tol<1$
are given and fix 
$\cliquesize\geq\max\{\terminals^{16},\tol^{-1/8},\cliquesize_0\}$.
For every value of~$\critprob$
in the range $[\cliquesize^{-3},
\lambda/\cliquesize
]$,
 if $\edgesubset$ is drawn from $\RC(\gadget;q,p)$ then
\begin{equation}\label{eq:dichotomy}
\Pr(1<\rv(\edgesubset)<\terminals)<\tol.
\end{equation}
\end{lemma}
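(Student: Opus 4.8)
The plan is to condition on $A^+:=\es\cap\clique^{(2)}$, the part of the sample inside the clique~$\clique$; this fixes the partition of~$\clique$ into the vertex sets $C_1,\dots,C_s$ of the components of $(\clique,A^+)$, with sizes $\nu_1\ge\cdots\ge\nu_s$. Call a terminal \emph{attached} to $C_i$ if $\es$ has a $\clique\times\terminalset$ edge joining it to $C_i$. Since $\es\setminus\terminalset^{(2)}$ has no edge between two terminals, two terminals lie in a common component of $(\gadgetvertices,\es\setminus\terminalset^{(2)})$ exactly when they are linked in the hypergraph on~$\terminalset$ whose hyperedges are the sets of terminals attached to a common~$C_i$; so $\rv(\es)=1$ iff that hypergraph is connected, $\rv(\es)=\terminals$ iff no two terminals are attached to a common~$C_i$, and $1<\rv(\es)<\terminals$ otherwise. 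Conditioned on $A^+$ and $A^-:=\clique^{(2)}\setminus A^+$ the remaining edges follow $\RC(\gadget;q,\edgeprob;A^+,A^-)$, which by Lemmas~\ref{lem:coupleone} and~\ref{lem:coupletwo} is sandwiched between $\ER(\gadget;\edgeprob;A^+,A^-)$ and the Erd\H os--R\'enyi model with the $\clique\times\terminalset$ probabilities divided by~$q$; in both of the latter the $\clique\times\terminalset$ edges are independent, so each terminal is attached to $C_i$ with probability between $1-(1-q^{-1}\cliquesize^{-3/4})^{\nu_i}$ and $1-(1-\cliquesize^{-3/4})^{\nu_i}$, independently over terminals.

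The crux is a dichotomy holding for \emph{every} $\critprob$ in the stated range: with probability $\ge 1-\tol/3$ over $A^+$, either some $\nu_i\ge\theta\cliquesize$, or every $\nu_i\le c\log\cliquesize$ for a suitable constant $c=c(q)$, with nothing of intermediate size. For $\critprob\cliquesize<1$ this follows from Lemma~\ref{lem:coupleone} and subcriticality of $\ER(\clique;\critprob)$; for $\critprob$ near the top of the range it follows from Lemma~\ref{lem:clique} and the domination of $\RC(\clique;q,\critprob)$ by the marginal on $\clique^{(2)}$ (Lemma~\ref{lem:coupleRC}, as in the proof of Lemma~\ref{lem:excludedmiddleone}); and in between it is exactly the \emph{first-order} nature of the random-cluster transition for $q>2$, read off from the \cite{BGJ} analysis behind Lemma~\ref{lem:clique} --- the clique is bistable between a macroscopic giant and purely logarithmic components and does not pass through a mesoscopic regime, as it would for $q\le 2$. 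Granting the dichotomy, we split into cases. If some $C=C_i$ has $|C|\ge\theta\cliquesize$, the lower domination gives that each of the $\le\cliquesize^{1/16}$ terminals misses~$C$ with probability at most $(1-q^{-1}\cliquesize^{-3/4})^{\theta\cliquesize}\le\exp(-\theta q^{-1}\cliquesize^{1/4})$, so a union bound makes $\rv(\es)=1$ except with probability $<\tol/3$ for large~$\cliquesize$. If instead every $\nu_i\le c\log\cliquesize$, then, applying Lemma~\ref{lem:yellowblue} to each pair of terminals under the upper domination (colour a clique vertex yellow, resp.\ blue, when it receives an edge from the first, resp.\ second, terminal; the blocks are the $C_i$), the chance that some pair is attached to a common component is at most $\binom{\terminals}{2}\cliquesize\bigl[1-(1-\cliquesize^{-3/4})^{c\log\cliquesize}\bigr]^{2}=O\bigl(\binom{\terminals}{2}\cliquesize^{-1/2}\log^{2}\cliquesize\bigr)$, which is $<\tol/3$ using $\terminals\le\cliquesize^{1/16}$ and the stated lower bound on $\cliquesize$; thus $\rv(\es)=\terminals$. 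Summing the three error probabilities gives~(\ref{eq:dichotomy}).

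The main obstacle is establishing that dichotomy uniformly in~$\critprob$ --- ruling out mesoscopic clique components even for $\critprob\cliquesize$ inside the critical window around~$\lambda_c$ --- and this is exactly where $q>2$ is essential and where I expect to need the finer results of \cite{BGJ} rather than just the statement of Lemma~\ref{lem:clique}. (For a continuous transition the clique could carry a component of size $\cliquesize^{\Theta(1)}$, which would attach to a positive but not full fraction of the terminals and force $1<\rv(\es)<\terminals$ with non-negligible probability.) What remains is bookkeeping: checking the error terms fall below~$\tol$ under the hypotheses $\terminals\le\cliquesize^{1/16}$ and $\cliquesize\ge\max\{\terminals^{16},\tol^{-1/8},\cliquesize_0\}$, for which the exponent $\tfrac34$ on the $\clique\times\terminalset$ edges is calibrated; Lemma~\ref{lem:yellowblue} is what makes the ``common component'' estimate tight enough --- a naive per-component union bound would lose too much when the $\nu_i$ range over several scales.
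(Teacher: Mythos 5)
Your high-level strategy and the correct intuition (that the lemma turns on the random-cluster configuration inside the clique being either tiny or giant, with nothing in between) match the paper, but the proof as written has a genuine gap exactly where you flag it: you never establish the uniform dichotomy, and the paper in fact does not establish it either --- it replaces it with a different, more elementary device.

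Two specific problems with the plan as stated. First, the dichotomy you postulate (some $\nu_i\geq\theta N$ or all $\nu_i\leq c\log N$, uniformly for every $\critprob\in[N^{-3},\lambda/N]$ with failure probability below $\tol/3$) is not available from Lemma~\ref{lem:clique}, which is a statement about the single value $\critprob=\lambda/N$, nor is it a cheap consequence of the cited parts of~\cite{BGJ}: the range of $\critprob$ sweeps across the critical window around $\lambda_c/N$, and pinning down the structure there is precisely what one wants to avoid. You acknowledge this, but the result cannot rest on it. Second, even granting it, your case split is stated in terms of the clique-only components $C_i$, whereas on the ``giant'' side what one really needs is that the terminals attach to a single macroscopic object; this is fine if $|C_1|\geq\theta N$, but you also have to rule out the possibility that a moderately-sized $C_1$ together with several $C_j$'s linked through terminals accumulates to something mesoscopic --- the paper's events $\largec$ and $\weighty$ and the Chernoff count of $\clique\times\terminalset$ edges are there to patch exactly this.

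The paper's route avoids the dichotomy altogether. It splits on the single threshold $|C_1|\lessgtr N^{1/8}$ (not $c\log N$): when $|C_1|\leq N^{1/8}$, Lemma~\ref{lem:yellowblue} with $\nu_{\max}=N^{1/8}$, $\sprob=N^{-3/4}$ gives a $\binom{\terminals}{2}N^{-1/4}$ bound without needing logarithmic component sizes; mesoscopic components up to $N^{1/8}$ are simply tolerated on this side. When $|C_1|>N^{1/8}$, instead of proving $|C_1|\geq\theta N$, the paper proves that whp the largest component $\Chat_1$ of the \emph{full} gadget $\gadget$ (restricted to $\clique$) already satisfies $|\Chat_1|>N^{15/16}$. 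The tool is the Fundamental Lemma~\ref{lem:FL} with $r=1/q$: conditioned on the set of ``red'' vertices $R$, the induced red subgraph on $R\cap\clique$ is an Erd\H os--R\'enyi graph on roughly $N/q+N^{15/16}$ vertices with edge probability at most $\lambda/N$; since $\lambda<q$ this is strictly subcritical, so it cannot contain a component of size $N^{1/8}$. Running this backwards shows $(|C_1|>N^{1/8})\wedge\neg\largec$ is exponentially unlikely, after which $\largec\Rightarrow\weighty\Rightarrow\rv=1$ follows by a Chernoff bound on the $\clique\times\terminalset$ edge count and another application of Lemma~\ref{lem:yellowblue}. This is where $q>2$ enters --- only through $\lambda<q$ making the red ER graph subcritical --- and it sidesteps every delicate feature of the random-cluster phase transition that your plan would have to control.
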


\begin{proof}
Let $C_1,C_2,\ldots$ be the connected 
components of $\edgesubset \cap \clique^{(2)}$, ordered in non-increasing size.
We are going to   rely  on the phase transition of the
random cluster model.
The main fact that we will use is that $|C_1|$ is likely to 
either be very small (around order $\log(\cliquesize)$)
or very large (a constant fraction of $\cliquesize$).
Thus, it is unlikely that $|C_1|$ is close to $\cliquesize^{1/8}$.
We will not need much detail about the phase transition.
We will show
\begin{equation}
 \Pr\big((|C_1|\leq \cliquesize^{1/8})\wedge(\rv(\edgesubset)<\terminals)\big)\leq \tol/2,
\label{eq:first}
\end{equation}
and
\begin{equation}
\label{eq:second}
\Pr\big((|C_1|> \cliquesize^{1/8})\wedge(\rv(\edgesubset)>1)\big) \leq \tol/2.
\end{equation}
(Actually, the event mentioned in (\ref{eq:second}) 
holds with all but exponentially small 
probability.)  The required inequality~(\ref{eq:dichotomy})
follows directly from (\ref{eq:first}) and~(\ref{eq:second}).

Inequality~(\ref{eq:first}) is easier.  To generate
a configuration~$\edgesubset$ from 
$\RC(\gadget;q,\edgeprob)$  we first select a set $\edgesubset^+\subseteq \clique^{(2)}$ from
the appropriate induced distribution. Then,
letting  
  $\edgesubset^-=\clique^{(2)}\setminus \edgesubset^+$, we   select $\edgesubset$ from the distribution
$\RC(\gadget;q,\edgeprob;\edgesubset^+,\edgesubset^-)$.
The size of $C_1$ is entirely determined by~$\edgesubset^+$.
If $\edgesubset^+$ has a component of size greater than $\cliquesize^{1/8}$
then 
$$\Pr((|C_1|\leq \cliquesize^{1/8})\wedge(\rv(\edgesubset)<\terminals) \mid 
\edgesubset^+\subseteq \edgesubset, \edgesubset\cap \edgesubset^-=\emptyset
 )
=0.$$
Otherwise, this probability is equal to
$ \Pr(\rv(\edgesubset) < \terminals \mid 
\edgesubset^+\subseteq \edgesubset, \edgesubset\cap \edgesubset^-=\emptyset
 )$.
By Lemma~\ref{lem:coupleone}, this is at most
$\Pr(\rv(\esu) < \terminals)$
where  $\esu$ is generated from
$\ER(\gadget;\edgeprob;\edgesubset^+,\edgesubset^-)$.

So now consider $\esu$. 
Fix attention on two terminals,
and colour the vertices in $\clique$ that are adjacent to the first terminal yellow,
and those adjacent to the second terminal blue.  We are in the situation
of Lemma~\ref{lem:yellowblue} 
with $\nu=\cliquesize$, $\nu_{\max}\leq\cliquesize^{1/8}$
and $\sprob=\cliquesize^{-3/4}$.  
Thus the probability that there exists a bicoloured block 
(i.e., that the terminals are connected via some connected component~$C_j$)
is at most 
$\cliquesize[1-(1-\cliquesize^{-3/4})^{\cliquesize^{1/8}}]^2
\leq \cliquesize[1-1+\cliquesize^{-5/8}]^2=\cliquesize^{-1/4}$.
Thus the probability that there exists a pair of connected terminals is 
at most $\binom{\terminals}{2}\cliquesize^{-1/4}\leq\cliquesize^{-1/8}/2<\tol/2$.
Note that the event ``there exists a pair of connected terminals'' is
the same as the event $\rv(\esu)<\terminals$.

At a high level, the path to establishing inequality~(\ref{eq:second})
is as follows.  We define events $\largec$ (``large component'')
and $\weighty$ (``weighty components'') and establish
\begin{align}
\Pr\big((|C_1|>\cliquesize^{1/8})\wedge \neg\largec\big)&<\tol/6,\label{eq:bd1}\\
\Pr(\largec\wedge\neg\weighty)&<\tol/6,\quad\text{and}\label{eq:bd2}\\
\Pr\big(\weighty\wedge(\rv(\edgesubset)>1)\big)&<\tol/6.\label{eq:bd3}
\end{align}
Then the required inequality~(\ref{eq:second})
follows by elementary algebra of sets (events).

We will say that an event holds ``with high probability'' 
(abbreviated whp)
if the probability that it holds is 
$o(1/n^k)$ for any fixed constant~$k$. 

First, we prove inequality~(\ref{eq:bd1}).
Let $S_1,S_2,\ldots$ be the connected components of $\edgesubset$
and let $\Chat_1,\Chat_2,\ldots$ be the sets
$S_1 \cap \clique^{(2)},S_2 \cap \clique^{(2)},\ldots$, ordered in non-increasing size.
(Thus, $\{\Chat_j\}$
is a coarsening of $\{C_j\}$.)  Event $\largec$ is 
$|\Chat_1|>\cliquesize^{15/16}$.
Let $\mathcal{E}_1$ be the event 
$(|C_1|>\cliquesize^{1/8})\wedge \neg\largec$.
We are interested showing that the 
event $\mathcal{E}_1$   is unlikely.

Construct the red subgraph and green subgraph 
of $(\gadgetvertices,\edgesubset)$
as in Lemma~\ref{lem:FL}
with $r=1/q$.
Let $\mathcal{E}_2$ be the event
$$(|R\cap \clique|\leq \cliquesize/q+ \cliquesize^{15/16} )\wedge (C_1\subseteq R) \wedge (|C_1|>\cliquesize^{1/8})
 .$$
We will show
$$\Pr(\mathcal{E}_2)\geq\Pr(\mathcal{E}_1)/q^2$$
and
$$\Pr(\mathcal{E}_2)  \leq \eta/(6 q^2),$$
which proves~(\ref{eq:bd1}).
The first of these follows from $\Pr(\mathcal{E}_2 \mid \mathcal{E}_1)\geq q^{-2}$,
which follows since
$\Pr(C_1\subseteq R \mid \mathcal{E}_1) = q^{-1}$ (by the definition of the red subgraph)
and  
$\Pr\bigl(|R \cap \clique|\leq \cliquesize/q+ \cliquesize^{15/16}\bigm|
\mathcal{E}_1 \wedge (C_1\subseteq R)
 \bigr)\geq q^{-1}$
(also by the definition of the red subgraph --- with probability $q^{-1}$, red is the rarest colour among
 vertices in $\clique\setminus \Chat_i$ where $\Chat_i$ is the element of $\{\Chat_j\}$ containing $C_1$).

Now the probability of event $\mathcal{E}_2$
is at most 
$$ 
\Pr\big( (C_1\subseteq R) \wedge (|C_1|>\cliquesize^{1/8}) \bigm|  
|R\cap \clique|\leq \cliquesize/q+ \cliquesize^{15/16}\big)$$
which is at most
the conditional probability that 
 $\edgesubset[R\cap\clique]$ has a component of size at least
 $\cliquesize^{1/8}$,
conditioned on   $|R\cap \clique|\leq \cliquesize/q+ \cliquesize^{15/16}$.
But, by Lemma~\ref{lem:FL},
$\edgesubset[R\cap\clique]$ is 
an Erd\H os-R\'enyi
random graph with   
$|R\cap\clique|\leq(q^{-1}+\cliquesize^{-1/16})\cliquesize$ vertices and
subcritical edge probability $\critprob \leq \lambda/\cliquesize<|R\cap\clique|^{-1}$.
So the probability that it has a  
component of 
size at least $\cliquesize^{1/8}$ is at most
$\epsilon(\cliquesize)$, where $\epsilon(\cdot)$ is smaller than 
any inverse polynomial \cite[Proof of Theorem~5.4]{JLR}.
Thus, we have shown $\Pr(\mathcal{E}_2)  \leq \eta/(6 q^2)$
(provided $\cliquesize$ is sufficiently large), so we have proved~(\ref{eq:bd1}).

Next we prove inequality~(\ref{eq:bd2}).  
The event $\weighty$ is that $|C_1|+\cdots+|C_\fewcomponents|\geq 
\cliquesize^{15/16}$, for some $\fewcomponents\leq2\cliquesize^{5/16}$.
The component $\Chat_1$
is in general composed of a number of components from~$\{C_j\}$.
However, the number of constituent components cannot be larger than
the number $|\es\cap(\clique\times\terminalset)|$ of edges joining 
$\clique$ to $\terminalset$.  This number can be upper-bounded using a
bounding configuration~$\esu$ drawn from $\ER(\gadget;\edgeprob)$.
We see from a Chernoff bound that whp the number 
$|\esu\cap(\clique\times\terminalset)|$ of edges joining
$\clique$ to $\terminalset$ is 
less than $2\terminals\cliquesize^{1/4}\leq2\cliquesize^{5/16}$,
i.e., twice the expected number,   
so whp,
at most $2\cliquesize^{5/16}$
 components from~$\{C_j\}$  contribute to $\Chat_1$.
Thus $\Pr(\weighty\mid\largec)=1-\epsilon(\cliquesize)$
and (\ref{eq:bd2}) follows.

Finally we prove inequality~(\ref{eq:bd3}).
As before, generate~$\edgesubset$ by selecting a
set $\edgesubset^+\subseteq \clique^{(2)}$ from
the appropriate induced distribution,
letting  
  $\edgesubset^-=\clique^{(2)}\setminus \edgesubset^+$,  and selecting $\edgesubset$ from the distribution
$\RC(\gadget;q,\edgeprob;\edgesubset^+,\edgesubset^-)$.
We will condition on the event $\weighty$ (which is entirely determined by $\edgesubset^+$).
As above, 
$$\Pr\big(\rv(\edgesubset)=1\bigm| \edgesubset^+\subseteq \edgesubset, \edgesubset\cap \edgesubset^-=\emptyset\big)
\geq \Pr(\rv(\esl)=1),$$
where
$\edgeprob'(e) = \edgeprob(e)/q$ and
$\esl$ is a random variable drawn from
$\ER(\gadget;\edgeprob';\edgesubset^+,\edgesubset^-)$.
Consider two terminals $i,j$ in $\terminalset$.
We can use Lemma~\ref{lem:yellowblue} to  find an upper bound for the probability that 
$i$ and $j$ are {\it not\/} connected 
in $\esl$
via one of the components $C_1,\ldots,
C_\fewcomponents$.  In $\esl$, edges from $\terminalset$ to $\clique$
are selected independently with probability $q^{-1}\cliquesize^{-3/4}$. 
If we colour vertices in $\clique$ adjacent to $i$ (respectively, $j$)
yellow (respectively,  blue) then we are in the situation 
of Lemma~\ref{lem:yellowblue}, with $\sprob=q^{-1}\cliquesize^{-3/4}$.
From the remarks about monotonicity in $s$ and~$\nu$, we may assume
for an upper bound that $\nu=\cliquesize^{15/16}$ 
and $\fewcomponents=2\cliquesize^{5/16}$.  
The probability we want to bound is that 
of not having a bicoloured component.   
Observe 
$$(1-\sprob)^{\nu/\fewcomponents}=(1-q^{-1}\cliquesize^{-3/4})^{\frac12\cliquesize^{5/8}}
\leq 1-
\tfrac14
q^{-1}\cliquesize^{-1/8},$$
for $\cliquesize$ sufficiently large,
and since the function $u\mapsto u(2-u)$ is increasing in the range $(0,1)$,
$$(1-\sprob)^{\nu/\fewcomponents}(2-(1-\sprob)^{\nu/\fewcomponents})
\leq (1-
\tfrac14
q^{-1}\cliquesize^{-1/8})(1+
\tfrac14
q^{-1}\cliquesize^{-1/8}).$$
Applying Lemma~~\ref{lem:yellowblue},
$$\Pr(\text{$i\not\sim j$ in $\esl$} )
\leq
(1-
\tfrac1{16}
q^{-2}\cliquesize^{-1/4})^{2\cliquesize^{5/16}}
<\exp(-
\tfrac18
q^{-2}\cliquesize^{1/16}).$$
So whp $i\sim j$.  It follows that whp all vertices in
$\terminalset$ are connected to each other, and 
so $\Pr(\rv(\esl)=1 )\geq1-\epsilon(\cliquesize)$.
This deals with~(\ref{eq:bd3}) and completes the proof.
\end{proof}

\subsection{The random cluster model on a clique connected to an independent set} 
\label{sec:variant}

Construct~$\gadget$ as in Section~\ref{sec:gadget}.
Let $\vargadget=(\gadgetvertices,\gadgetedges \setminus\terminalset^{(2)})$ be
the graph derived from~$\gadget$ by deleting edges within~$\terminalset$.
The vertices in~$\terminalset$ are called the ``terminals'' of $\vargadget$.
Let  $\boldgamma=\{\gamma_\graphedge\}$ be the set of edge weights
defined by $\gamma_\graphedge = \edgeprob(\graphedge)/(1-\edgeprob(\graphedge))$.

For an edge subset $\edgesubset'\subseteq\gadgetedges\setminus\terminalset^{(2)}$,
let $\kappa'(\gadgetvertices,\edgesubset')$ denote the number of connected components
that do not contain terminals 
in the graph $(\gadgetvertices,\edgesubset')$.
Let $\calA^k$ denote the set of edge subsets $\edgesubset'\subseteq\gadgetedges
\setminus\terminalset^{(2)}$
for which the terminals of $(\gadgetvertices,\edgesubset')$ 
are contained in exactly $k$~connected components.
Let $\calA = \bigcup_{k\in[t]} \calA^k$ (this is the set of all edge subsets of $\vargadget$).
Let $Z^k$ be $q^{-k}$ times the contribution to $\ZTutte(\vargadget;q,\boldgamma)$ from 
edge subsets $A'\in \calA^k$.
Formally, $Z^k = \sum_{A' \in \calA^k} q^{\kappa'(V,A')}\gamma({A'})$.
Let $Z = \sum_{k=1}^t Z^k$.

We will use the following lemma to apply
Lemmas~\ref{lem:excludedmiddleone}
and~\ref{lem:excludedmiddletwo}
from Section~\ref{sec:gadget}
in our reductions.

\begin{lemma}
\label{lem:wiring}
$Z^k/Z=\Pr(\rv(\edgesubset)=k)$, where
$\edgesubset$ is drawn from $\RC(\gadget;q,p)$.
\end{lemma}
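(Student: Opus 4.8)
The plan is to evaluate $\Pr(\rv(\edgesubset)=k)$ directly from the definition of $\RC(\gadget;q,\edgeprob)$, stripping off the edges of $\terminalset^{(2)}$ by hand and checking that what remains is precisely the sum defining $Z^k$. The first observation is that every edge $e\in\terminalset^{(2)}$ has $\edgeprob(e)=1$, so $\tildeP(\gadget;\edgesubset,q,\edgeprob)=0$ for any $\edgesubset$ not containing all of $\terminalset^{(2)}$; hence every configuration with positive probability under $\RC(\gadget;q,\edgeprob)$ contains $\terminalset^{(2)}$. Writing such a configuration as $\edgesubset=\edgesubset'\cup\terminalset^{(2)}$ with $\edgesubset'\subseteq\gadgetedges\setminus\terminalset^{(2)}$, note that $\rv(\edgesubset)$ counts the terminal-containing components of $(\gadgetvertices,\edgesubset\setminus\terminalset^{(2)})=(\gadgetvertices,\edgesubset')$ and so depends only on $\edgesubset'$; thus the event $\rv(\edgesubset)=k$ corresponds exactly to $\edgesubset'\in\calA^k$.

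The key structural point is a component count. The edges $\terminalset^{(2)}$ form a clique on $\terminalset$, so in $(\gadgetvertices,\edgesubset'\cup\terminalset^{(2)})$ all terminals --- together with every $\clique$-vertex joined to a terminal in $\edgesubset'$ --- lie in a single component, while the terminal-free components of $(\gadgetvertices,\edgesubset')$ are unaffected. Hence $\kappa(\gadgetvertices,\edgesubset'\cup\terminalset^{(2)})=1+\kappa'(\gadgetvertices,\edgesubset')$ for every $\edgesubset'\subseteq\gadgetedges\setminus\terminalset^{(2)}$, independently of how many components of $(\gadgetvertices,\edgesubset')$ contain terminals.

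Substituting this into the definition of $\tildeP$ and using $\prod_{e\in\terminalset^{(2)}}\edgeprob(e)=1$ (and that the edges of $\terminalset^{(2)}$ contribute nothing to $\prod_{e\notin\edgesubset}(1-\edgeprob(e))$), we get, for $\edgesubset=\edgesubset'\cup\terminalset^{(2)}$,
$$\tildeP(\gadget;\edgesubset,q,\edgeprob)=q^{1+\kappa'(\gadgetvertices,\edgesubset')}\prod_{e\in\edgesubset'}\edgeprob(e)\prod_{e\in(\gadgetedges\setminus\terminalset^{(2)})\setminus\edgesubset'}(1-\edgeprob(e))=qW\,q^{\kappa'(\gadgetvertices,\edgesubset')}\gamma(\edgesubset'),$$
where $W=\prod_{e\in\gadgetedges\setminus\terminalset^{(2)}}(1-\edgeprob(e))>0$ and $\gamma_e=\edgeprob(e)/(1-\edgeprob(e))$ as in Section~\ref{sec:variant}. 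Summing over $\edgesubset'\in\calA^k$ gives $\sum_{\edgesubset:\,\rv(\edgesubset)=k}\tildeP(\gadget;\edgesubset,q,\edgeprob)=qW\,Z^k$, and summing over $k=1,\dots,\terminals$ (using $\calA=\bigcup_{k}\calA^k$) gives $\Zrc(\gadget;q,\edgeprob)=qW\sum_{k=1}^{\terminals}Z^k=qW\,Z$. Dividing, the common factor $qW$ cancels and $\Pr(\rv(\edgesubset)=k)=Z^k/Z$, as claimed.

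This argument is essentially bookkeeping, and I do not expect a genuine obstacle. The one point requiring care is that one cannot invoke the identity $\Zrc=\ZTutte\prod_e(1-\edgeprob(e))$ for $\gadget$ itself, since the weights $\gamma_e$ are undefined on $\terminalset^{(2)}$ where $\edgeprob(e)=1$; the edges of $\terminalset^{(2)}$ must instead be handled directly at the level of $\tildeP$, as above, and one must not lose the extra factor $q$ produced by fusing the terminals into a single component.
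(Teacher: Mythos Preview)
Your proof is correct and follows essentially the same approach as the paper: both arguments observe that only edge sets containing all of $\terminalset^{(2)}$ contribute, rewrite $\tildeP(\gadget;\edgesubset'\cup\terminalset^{(2)},q,\edgeprob)$ using $\kappa(\gadgetvertices,\edgesubset'\cup\terminalset^{(2)})=\kappa'(\gadgetvertices,\edgesubset')+1$, and then cancel the common factor $q\prod_{e\in\gadgetedges\setminus\terminalset^{(2)}}(1-\edgeprob(e))$ from numerator and denominator. Your version is simply more explicit about each step (and about the care needed with the $\edgeprob(e)=1$ edges), but there is no substantive difference in method.
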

\begin{proof}

From the definitions of the random cluster model,
$$\Pr(\rv(\edgesubset)=k) = 
\frac{\sum_{A'\in \calA^k} \tildeP(\gadget;A'\cup T^{(2)},q,\edgeprob)}
{
\sum_{A'\in \calA} \tildeP(\gadget;A'\cup T^{(2)},q,\edgeprob)}.$$

Plugging in the definition of $\tildeP(\gadget;A'\cup T^{(2)},q,\edgeprob) $,
this is
$$
\frac{\sum_{A'\in \calA^k} \gamma({A'})q^{\kappa'(\gadgetvertices,\edgesubset')+1} \prod_{e\in \gadgetedges \setminus T^{(2)}}(1-\edgeprob(e))}
{\sum_{A'\in \calA} \gamma({A'})q^{\kappa'(\gadgetvertices,\edgesubset')+1}
\prod_{e\in \gadgetedges \setminus T^{(2)}}(1-\edgeprob(e))},$$
which is what we require, once we cancel a factor of 
$$q \prod_{e\in \gadgetedges \setminus T^{(2)}}(1-\edgeprob(e))$$ from the numerator and denominator.
\end{proof}
 
\section{Computational problems, fully polynomial randomised approximation schemes and efficiently approximable real numbers}
\label{sec:FPRAS}
 
Fix real numbers $q>2$ and $\gamma>0$ and consider the
following computational problem, which is parameterised by~$q$ and~$\gamma$.

\begin{description}
\item[Problem] $\Tutte(q,\gamma)$.
\item[Instance] Graph $\graph=(\graphvertices,\graphedges)$.
\item[Output]  $\ZTutte(\graph;q,\boldgamma)$,
where
$\boldgamma$ is the constant function  with $\boldgamma_\graphedge = \gamma$ for every $\graphedge\in\graphedges$.
\end{description}

We will have much more to say about computational approximations of real numbers below. For the moment, it may help 
the reader to think of~$q$ and~$\gamma$ as being rational.
Jaeger, Vertigan and Welsh~\citeyear{JVW90}
have shown that  $\Tutte(q,\gamma)$ is \#P-hard
for every fixed $q>2$ and $\gamma>0$.
Thus, it is unlikely that there is a polynomial-time algorithm for exactly
solving this problem. (If there were such an algorithm, this would entail 
$\Ptime=\numP$, and of course $\Ptime=\mathrm{NP}$.)

We are interested in the complexity of \emph{approximately} solving $\Tutte(q,\gamma)$.
We start by defining the relevant concepts.
A \emph{randomised approximation scheme\/} is an algorithm for
approximately computing the value of a function~$f:\alphabet^*\rightarrow
\mathbb{R}$.
(Here, $\alphabet$ is a finite alphabet, and inputs to~$f$ are represented as
strings over this alphabet.)
The
approximation scheme has a parameter~$\varepsilon>0$ which specifies
the error tolerance.
A \emph{randomised approximation scheme\/} for~$f$ is a
randomised algorithm that takes as input an instance $ x\in
\alphabet^{\ast }$ (e.g., for the problem $\Tutte(q,\gamma)$, the
input would be an encoding of a graph~$\graph$) and a rational error
tolerance $\varepsilon >0$, and outputs a rational number $z$
(a random variable of the ``coin tosses'' made by the algorithm)
such that, for every instance~$x$,
\begin{equation}
\label{eq:3:FPRASerrorprob}
\Pr \big[e^{-\epsilon} f(x)\leq z \leq e^\epsilon f(x)\big]\geq \frac{3}{4}\, .
\end{equation}
The randomised approximation scheme is said to be a
\emph{fully polynomial randomised approximation scheme},
or \emph{FPRAS},
if it runs in time bounded by a polynomial
in $ |x| $ and $ \epsilon^{-1} $.

Note that the quantity $\frac34$ in
Equation~(\ref{eq:3:FPRASerrorprob})
could be changed to any value in the open
interval $(\frac12,1)$ without changing the set of problems
that have randomised approximation schemes \cite[Lemma~6.1]{jvv}.
There is another sense in which FPRAS is a robust notion of approximability,
namely that the existence of a polynomial-time algorithm that achieves constant factor approximations
often implies the existence of an FPRAS\null. We are not aware of a reference to this observation in the
published literature, but it is easily explained in the context of the specific problem $\Tutte(q,\gamma)$.
For any graph $\graph$, denote by $k\cdot \graph$ the graph composed of $k$ disjoint copies of~$\graph$.
Then $\ZTutte(k\cdot\graph;q,\boldgamma)=\ZTutte(\graph;q,\boldgamma)^k$.  So, setting $k=O(\epsilon^{-1})$,
a constant factor approximation to $\ZTutte(k\cdot\graph;q,\boldgamma)$ will yield (by taking the $k$th root)
an FPRAS for $\ZTutte(\graph;q,\boldgamma)$.  Clearly, an approximation within a polynomial factor would 
also suffice.

We say that a real number~$z$ is \emph{efficiently approximable} if there is an FPRAS
for the problem 
of computing~$z$. (Technically, we can view the
problem of computing~$z$ as a degenerate computational problem in which 
every input gets mapped to the output~$z$.)
Approximations to real numbers are useful.
For example, if $\hat q$ and $\hat \gamma$ are
approximations to~$q$ and~$\gamma$ satisfying
$$e^{-\tfrac{\varepsilon}{n+m}} q \leq \hat q \leq 
e^{\tfrac{\varepsilon}{n+m}} q$$
and
$$e^{-\tfrac{\varepsilon}{n+m}} \gamma \leq \hat \gamma \leq 
e^{\tfrac{\varepsilon}{n+m}}
\gamma$$
and $\hat\boldgamma_\graphedge = \hat\gamma$ for every $\graphedge\in\graphedges$
then
\begin{equation}
\label{approxparams}
e^{-\varepsilon }
\ZTutte(\graph;q,\boldgamma) \leq
\ZTutte(\graph;\hat q,\hat \boldgamma) \leq 
e^{\varepsilon }
\ZTutte(\graph; q, \boldgamma).
\end{equation} 
Thus, to approximate $\ZTutte(\graph;q,\boldgamma)$,
it suffices to first compute rational approximations $\hat{q}$ and $\hat \gamma$, and then approximate 
$\ZTutte(\graph;\hat q,\hat \boldgamma)$.
The reason that our definition of an efficiently
approximable real number~$z$ allows $z$ to be approximated by an
FPRAS rather than, for example, by a deterministic approximation
scheme, is that the more general definition allows us to obtain more general
results, which will be easier to combine with other results in the area.

When the parameters are efficiently approximable
reals, it is possible
to approximate quantities associated with the gadgets~$\gadget$
and~$\vargadget$ 
that we studied in Section~\ref{sec:gadget} and~\ref{sec:variant}.

\begin{lemma}
\label{lem:computeZ}
Suppose $q>2$ is an efficiently  approximable real.
Consider the gadget $\vargadget$ from Section~\ref{sec:variant} with
parameters~$\terminals$, $\cliquesize$ and $\critprob$  
where 
$\critprob\in[0,1]$ is a rational number and $\cliquesize^{1/4}$ is an integer.
There is an FPRAS for computing $Z^1$ and $Z^\terminals$, given
inputs~$\terminals$, $\cliquesize$ and~$\critprob$.
\end{lemma}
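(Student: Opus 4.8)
The plan is to prove something stronger: when all of the parameters are rational, $Z^1$ and $Z^\terminals$ can be computed \emph{exactly} in polynomial time, and the FPRAS then follows by a routine reduction to the rational case. For that reduction, note that the edge weights of $\vargadget$ — namely $\gamma_e=\edgeprob(e)/(1-\edgeprob(e))$, i.e.\ the clique weight $\gamma_\clique=\critprob/(1-\critprob)$ and the weight $\gamma_{\clique\terminalset}=\cliquesize^{-3/4}/(1-\cliquesize^{-3/4})$ of a $\clique$-to-$\terminalset$ edge — are rational (using that $\critprob$ is rational and $\cliquesize^{1/4}$ is an integer), so the only non-rational datum is $q$. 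Each of $Z^1$ and $Z^\terminals$ is, by the definition $Z^k=\sum_{A'\in\calA^k}q^{\kappa'(\graphvertices,A')}\gamma(A')$, a polynomial in~$q$ of degree at most $\cliquesize=|\clique|$ with non-negative rational coefficients of polynomial bit-length. Hence if $\hat q$ is a rational with $e^{-\epsilon/(\cliquesize+1)}q\le\hat q\le e^{\epsilon/(\cliquesize+1)}q$ — obtained by running the FPRAS for $q$ (which exists since $q$ is efficiently approximable) with error tolerance $\epsilon/(\cliquesize+1)$ — then the exactly computed rational numbers $Z^1(\hat q)$ and $Z^\terminals(\hat q)$ approximate $Z^1(q)$ and $Z^\terminals(q)$ within a factor $e^{\pm\epsilon}$. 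So it remains to handle the rational case.

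For $Z^\terminals$, condition on the set $B=A'\cap\clique^{(2)}$ of clique edges; it splits $\clique$ into connected components, and summing over the remaining ($\clique$-to-$\terminalset$) edges subject to the constraint defining $\calA^\terminals$ — that no two terminals end up in the same component — factorises over these components: a component of size~$m$ contributes a factor $q+\terminals\bigl((1+\gamma_{\clique\terminalset})^m-1\bigr)$, according as it is touched by none of the terminals (factor $q$, a new non-terminal component) or by exactly one of them. Writing $C(m)=\sum_{B\text{ connected and spanning }\clique_m}\gamma_\clique^{|B|}$ for the connected-subgraph generating polynomial of $\clique_m$ — computable in polynomial time from the recursion $C(m)=(1+\gamma_\clique)^{\binom m2}-\sum_{j=1}^{m-1}\binom{m-1}{j-1}C(j)(1+\gamma_\clique)^{\binom{m-j}2}$ — the exponential formula gives
$$Z^\terminals=\cliquesize!\,\bigl[x^{\cliquesize}\bigr]\exp\!\Bigl(\sum_{m=1}^{\cliquesize}\tfrac{(q+\terminals((1+\gamma_{\clique\terminalset})^m-1))\,C(m)}{m!}\,x^m\Bigr),$$
and the degree-$\cliquesize$ coefficient of the exponential of an explicit polynomial is extracted in polynomial time (from $E'=g'E$); one checks that all intermediate numbers have polynomial bit-length. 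This is the familiar fact that the Tutte polynomial is polynomial-time computable on complete graphs.

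For $Z^1$ the defining event ``all $\terminals$ terminals lie in a single component'' is a global connectivity constraint and does not factorise over clique components, so we first remove it by inclusion--exclusion over the partition lattice of $\terminalset$. For a partition~$P$ of $\terminalset$ let $\mathrm{Within}(P)$ be the sum over edge subsets $A'$ in which each clique component touches a terminal set contained in a single block of~$P$; this again factorises over clique components (a size-$m$ component now contributes $q+\sum_{\beta\in P}\bigl((1+\gamma_{\clique\terminalset})^{m|\beta|}-1\bigr)$), so $\mathrm{Within}(P)$ has the same $\cliquesize!\,[x^{\cliquesize}]\exp(\cdots)$ form and is polynomial-time computable. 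Since $A'\in\calA^1$ precisely when the terminal partition induced by $A'$ is the one-block partition, Möbius inversion gives $Z^1=\sum_P(-1)^{|P|-1}(|P|-1)!\,\mathrm{Within}(P)$; grouping the $P$ by their multiset of block sizes collapses this to at most $p(\terminals)$ terms (one per integer partition of $\terminals$), each computable in polynomial time. Equivalently, using the partition-lattice identity $\sum_{P\ge Q}z^{|P|}\mu(Q,P)=z(z-1)\cdots(z-|Q|+1)$, one computes the whole polynomial $\sum_k z^kZ^k=\sum_P\mathrm{Within}(P)\,z(z-1)\cdots(z-|P|+1)$ and reads off $Z^1$ and $Z^\terminals$ as the coefficients of $z$ and $z^\terminals$. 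Since the arity $\terminals$ of the simulated hyperedges is small in our reductions, this is polynomial in the input size.

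The crux is the treatment of $Z^1$: $Z^\terminals$ — and likewise $Z=\sum_kZ^k$, which equals $q^{-1}\ZTutte(\vargadget/\terminalset;q,\boldgamma)$ after identifying all the terminals — decomposes transparently into independent per-clique-component contributions, whereas the global constraint defining $\calA^1$ forces the detour through the partition lattice of $\terminalset$. The rest — the connected-subgraph recursion for $C(m)$, the coefficient extraction from $\exp(\cdot)$, and the bit-length bookkeeping (all quantities are bounded by $(1+\gamma_\clique)^{O(\cliquesize^2)}$, of polynomial bit-length) — is routine.
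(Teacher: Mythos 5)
Your computation of $Z^\terminals$ is correct and clean: the constraint defining $\calA^\terminals$ is equivalent to ``each connected component of $A'\cap\clique^{(2)}$ touches at most one terminal,'' so the weight factorises over clique components, and the exponential formula together with the connected-subgraph recursion for $C(m)$ gives a genuine polynomial-time exact computation for rational data. However, your treatment of $Z^1$ has a real complexity gap. Both of your formulas --- the M\"obius inversion $Z^1=\sum_P(-1)^{|P|-1}(|P|-1)!\,\mathrm{Within}(P)$ over set partitions of $\terminalset$, and its grouping into $p(\terminals)$ integer-partition classes --- require enumerating $\Omega(p(\terminals))=2^{\Theta(\sqrt{\terminals})}$ terms, which is super-polynomial in $\terminals$. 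Your justification, that ``the arity $\terminals$ of the simulated hyperedges is small in our reductions,'' is not correct: tracing through $\BIS\to\srBIS(\mu)\to\uhTutte(\mu+1,\mu)$, the uniformity is $\terminals=d+1$ where $d$ is the maximum right-degree of the bipartite instance, which can be linear in the number of vertices. The gadget constraint $\cliquesize\geq\terminals^{16}$ only gives $\terminals\leq\cliquesize^{1/16}$, and $p(\cliquesize^{1/16})=2^{\Theta(\cliquesize^{1/32})}$ is still super-polynomial in the gadget size $\cliquesize$, so the algorithm you describe is not an FPRAS.

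The paper avoids this by a completely different route: a dynamic program over quadruples $(\terminals',\cliquesize',k,\ell)$ counting (with weights) edge subsets having $k$ terminal-containing components and $\ell$ others, driven by a recurrence that conditions on the component containing a distinguished vertex of $\clique$. The state space is $O(\terminals^2\cliquesize^2)$, so the whole table is filled in polynomial time and yields every $Z^k$ at once. Your approach could in principle be repaired --- writing $\mathrm{Within}(Q)=\cliquesize![x^\cliquesize]\,e^{qh(x)}\prod_{\beta\in Q}G_{|\beta|}(x)$ with $G_a(x)=e^{h_a(x)-h(x)}$, the $\exp$--$\log$ duality for exponential generating functions suggests $Z^1=\cliquesize!\,\terminals![x^\cliquesize y^\terminals]\bigl(e^{qh(x)}\log(1+\sum_{k\geq1}G_k(x)y^k/k!)\bigr)$, a bivariate coefficient extraction over only $O(\cliquesize\terminals)$ coefficients --- but that step, and the verification that it runs in polynomial time with polynomially-bounded bit-lengths, is exactly what is missing from your argument and would need to be supplied before the claimed FPRAS exists.
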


\begin{proof}
In this proof only, the notation $\vargadget_{\cliquesize,\terminals}$ is used to make explicit the 
size of the gadget~$\vargadget$. 

Let $E(\vargadget_{\cliquesize,\terminals})$ be the set of edges of $\vargadget_{\cliquesize,\terminals}$.
Let $\calA^{k,\ell}$ denote the set of edge subsets  $A\subseteq E(\vargadget_{\cliquesize,\terminals})$
 with $k$~connected components containing vertices in $\terminalset$, and $\ell$
other connected components.
Let $w(\terminals,\cliquesize,k,\ell) = \sum_{A\in \calA^{k,\ell}} \gamma(A)$.
Thus
$Z^k=\sum_{\ell=0}^\cliquesize w(\terminals,\cliquesize,k,\ell)q^\ell$.  We exhibit recurrence
relations for $w(\terminals,\cliquesize,j,k)$, from which it follows that 
these
can be computed in
polynomial time by dynamic programming.  
The rest of this proof is straightforward, but provides the details.
To reduce the number of boundary cases
we need to consider, it is convenient to allow one or other of
$k$ and $\ell$ to take on the value~$-1$.
Of course, we stipulate 
$$
w(\terminals,\cliquesize,-1,\ell)=w(\terminals,\cliquesize,k,-1)=0, \quad
   \text{for all $\terminals, \cliquesize,k,\ell\geq0$}.
$$
Another easy-to-verify boundary case is
$$
w(\terminals,0,k,\ell)=\begin{cases}1,&\text{when $k=\terminals$ and $\ell=0$};\\
   0,&\text{otherwise},
\end{cases}
$$
which is valid for all $\terminals,k,\ell\geq0$. 
Also  $w(\terminals,\cliquesize,k,\ell)=0$ if exactly one of $\terminals$ and $k$ is $0$
and  $w(0,\cliquesize,0,0)$ is~$1$ is $\cliquesize=0$ and~$0$ otherwise.
Finally, $w(0,0,k,\ell) = 0$ if $k+\ell>0$.

The general recurrence, covering all situations other than the boundary cases
already mentioned, can now be given.
\begin{align}
w(\terminals,\cliquesize,k,\ell)
   &=\sum_{\textstyle{{1\leq i\leq\terminals\atop1\leq j\leq\cliquesize}}}
      \binom{\terminals}{i}\binom{\cliquesize-1}{j-1}w(i,j,1,0)w(\terminals-i,\cliquesize-j,k-1,\ell)
      \label{eq:rec1}\\
   &\qquad\null+\sum_{1\leq j\leq\cliquesize}\binom{\cliquesize-1}{j-1}
      w(0,j,0,1)w(\terminals,\cliquesize-j,k,\ell-1).
    \label{eq:rec2}
\end{align}
(Although valid for all $k,\ell\geq0$, the recurrence becomes trivial 
when $k+\ell=1$, a point we must return to at the end.)
The explanation is as follows.  We partition the sum defining $w(\terminals,\cliquesize,k,\ell)$
according to the connected component~$C$ in $A$ containing some distinguished 
vertex $v$ in~$\clique$.  
(Note that $\cliquesize>0$, since one of the boundary cases covers $\cliquesize=0$.)
Let $i=|C\cap T|$ be the number of vertices of~$C$ that lie in~$T$, and $j=|C\cap K|$ 
the number that lie in~$K$.  Summation (\ref{eq:rec1}) deals with the situation $i>0$ and 
(\ref{eq:rec2}) with the situation $i=0$.  The binomial coefficients in~(\ref{eq:rec1})
count the number of
connected components $C\ni v$ that contain the distinguished vertex, and 
have the correct intersections $|C\cap\terminalset|=i$ and $|C\cap\clique|=j$
with $\terminalset$ and~$\clique$;  the factor $w(i,j,1,0)$ counts the weight of {\it connected\/} 
subgraphs of $\vargadget[C]$;  and $w(\terminals-i,\cliquesize-j,k-1,\ell)$
the weight of subgraphs of $\vargadget[\clique\cup\terminalset\setminus C]$ that 
have the correct number of connected components (i.e., $k-1$ with vertices in~$T$,
and $\ell$ without).  The analysis of summation~(\ref{eq:rec2}) is entirely similar.

Assuming $k+\ell>1$, the recurrence is well founded, in the sense that all occurrences
of $w(\terminals',\cliquesize',\cdot,\cdot)$ on the right hand side (apart from those that
get multiplied by zero) have 
$\terminals'+\cliquesize'<\terminals+\cliquesize$. 
When $k+\ell=1$, i.e., $(k,\ell)\in\{(0,1),(1,0)\}$ the recurrence (\ref{eq:rec1},\ref{eq:rec2})
becomes the trivial $w(\terminals,\cliquesize,k,\ell)=w(\terminals,\cliquesize,k,\ell)$.
Note that $(k,\ell)=(0,1)$  entails $\terminals=0$
(otherwise $w(\terminals, \cliquesize,k,\ell)=0$)
and $(k,\ell)=(1,0)$ entails $\terminals>0$,
so only one of  the possibilities $(k,\ell)\in\{(0,1),(1,0)\}$ occurs for a given pair
$(\terminals, \cliquesize)$.
In order to make progress 
in this situation, we employ a preprocessing step,
applying first the identity
$$
w(\terminals, \cliquesize,1,0)=\prod_{e\in E(\vargadget_{\cliquesize,\terminals})}(1+\gamma_e)
   -\sum_{\textstyle{0\leq k\leq \terminals,0\leq\ell\leq\cliquesize\atop k+\ell>1}}
   w(\terminals,\cliquesize,k,\ell)
$$
and then expanding the terms on the right hand side according to the usual recurrence.
Exactly the same formal expression applies to $w(\terminals, \cliquesize,0,1)$.  The identity
merely expresses complementation:  the weight of subgraphs with $k+\ell=1$ equals
the total weight of subgraphs, less the weight of subgraphs with $k+\ell>1$.
With this modification, the recurrence becomes well founded.

Since only a polynomial number of distinct tuples $(\terminals,\cliquesize,k,\ell)$ arise,
the recurrence can be solved by dynamic programming in polynomial time.
Note that the computation of $\prod_{e\in E(\vargadget_{\cliquesize,\terminals})}(1+\gamma_e)$
in the preprocessing step can be done exactly since $\critprob$ is a rational 
and $\cliquesize^{1/4}$ is an integer.

Once the $w(\terminals,\cliquesize,k,\ell)$ values are computed, we wish to estimate
$$Z^k=\sum_{\ell=0}^\cliquesize w(\terminals,\cliquesize,k,\ell)q^\ell.$$
Let $\varepsilon$ be the desired accuracy in the 
approximation of~$Z^k$.
Compute a rational $\hat q$ 
in the range
$e^{-\varepsilon/N} q \leq \hat q \leq e^{\varepsilon/N} q$ and return
$Z^k=\sum_{\ell=0}^\cliquesize w(\terminals,\cliquesize,k,\ell){\hat q}^\ell$.
\end{proof}

It will also be necessary for us to approximate the critical edge
probability~$\critprob$, so that, with this approximation,  the graph~$\gadget$  
approximately satisfies Equation~(\ref{eq:balance}) in Lemma~\ref{lem:excludedmiddleone}.
The following lemma shows that this is 
possible.
 
\begin{lemma}
\label{lem:computerho}
Suppose $q>2$ is an efficiently  approximable real. Fix $\gamma>0$ and 
 let $\lambda = \lambda_c + (q-\lambda_c)/2$.
Suppose that $\chi>0$ is rational. Consider the gadget $\gadget$ from Section~\ref{sec:gadget} with parameters~$\terminals$, $\cliquesize$, and $\critprob$. There is 
a randomised algorithm
whose running time is at most a polynomial in $\chi^{-1}$, $\cliquesize$ and $\terminals$ which takes input $\cliquesize$ and $\terminals$ (where it is assumed that $\cliquesize^{1/4}$ is an integer and that $\cliquesize\geq\max\{\terminals^{16}, \cliquesize_0\}$ for the constant $\cliquesize_0$ 
from Lemma~\ref{lem:excludedmiddleone}) 
and, with probability at least~$3/4$, computes
a rational ${\critprob}$ in the range $[\cliquesize^{-3},\lambda/\cliquesize]$ such that, if $\edgesubset$ is drawn from $\RC(\gadget;q,p)$, then \begin{equation} \label{eq:compbalance} e^{-\chi} \gamma \leq \frac{\Pr(\rv(\edgesubset)=1)} {\Pr(\rv(\edgesubset)=\terminals)} \leq e^{\chi} \gamma.
\end{equation}
\end{lemma}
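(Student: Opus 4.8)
\emph{Set-up.}
The plan is a binary search on the edge probability $\critprob$ over $[\cliquesize^{-3},\lambda/\cliquesize]$, using the dynamic-programming FPRAS of Lemma~\ref{lem:computeZ} at each evaluation step. By Lemma~\ref{lem:wiring} the ratio in~(\ref{eq:compbalance}) equals $g(\critprob):=Z^1/Z^\terminals$, where $Z^1$ and $Z^\terminals$ are the quantities of Lemma~\ref{lem:computeZ} for the gadget $\vargadget$ with the given $\terminals,\cliquesize$ and the current $\critprob$; so it suffices to return a rational $\critprob\in[\cliquesize^{-3},\lambda/\cliquesize]$ with $g(\critprob)\in[e^{-\chi}\gamma,e^{\chi}\gamma]$. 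From the proof of Lemma~\ref{lem:excludedmiddleone} (via Lemma~\ref{lem:coupleRC}), $\Pr(\rv(\edgesubset)=1)$ is non-decreasing and $\Pr(\rv(\edgesubset)=\terminals)$ is non-increasing in $\critprob$, so $g$ is non-decreasing; it is also continuous and strictly positive on the range, and the same proof, using the hypothesis $\cliquesize\ge\max\{\terminals^{16},\cliquesize_0\}$ and Lemma~\ref{lem:clique}, shows $g(\cliquesize^{-3})<\gamma<g(\lambda/\cliquesize)$. Let $\critprob^*\in(\cliquesize^{-3},\lambda/\cliquesize)$, with $g(\critprob^*)=\gamma$, be the point of Lemma~\ref{lem:excludedmiddleone} (see~(\ref{eq:balance})).

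\emph{The algorithm.}
Maintain a rational interval $[a,b]\subseteq[\cliquesize^{-3},\lambda/\cliquesize]$, initially the whole interval, with the invariant $a\le\critprob^*\le b$. At each step let $m=(a+b)/2$, call the FPRAS of Lemma~\ref{lem:computeZ} to estimate $Z^1$ and $Z^\terminals$ at $\critprob=m$, each to relative accuracy $\chi/8$, and set $\hat g=\hat Z^1/\hat Z^\terminals$; when the estimates are accurate, $\hat g\in[e^{-\chi/4}g(m),e^{\chi/4}g(m)]$. If $\hat g\in[e^{-\chi/2}\gamma,e^{\chi/2}\gamma]$ then output $m$ (for then $g(m)\in[e^{-3\chi/4}\gamma,e^{3\chi/4}\gamma]\subseteq[e^{-\chi}\gamma,e^{\chi}\gamma]$, giving~(\ref{eq:compbalance})); if $\hat g>e^{\chi/2}\gamma$ then $g(m)>\gamma=g(\critprob^*)$ forces $m>\critprob^*$, so recurse on $[a,m]$; if $\hat g<e^{-\chi/2}\gamma$ recurse symmetrically on $[m,b]$. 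Each branch keeps $\critprob^*$ in the interval. (The endpoint $\lambda/\cliquesize$ is irrational, since $\lambda=(\lambda_c+q)/2$ involves $\ln(q-1)$; but $q$, hence $\lambda$, is efficiently approximable, and the proof of Lemma~\ref{lem:excludedmiddleone} uses only $\lambda_c<\lambda<q$, so one may replace $\lambda$ throughout by a rational $\hat\lambda$ with $\lambda_c<\hat\lambda<\lambda$. We suppress this point.)

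\emph{Iteration count --- the crux.}
The search stops once $[a,b]$ is short enough that $m$ lands in the good set $I:=\{\critprob:g(\critprob)\in[e^{-\chi/4}\gamma,e^{\chi/4}\gamma]\}$, a closed subinterval with $\critprob^*$ in its interior; so we need $I$ to be only inverse-polynomially short. Reparametrise by $s=\critprob/(1-\critprob)$. From the definitions, $Z^1$ and $Z^\terminals$ are polynomials in $s$ of degree at most $D:=\binom{\cliquesize}{2}$ (the number of edges of $\gadget$ inside $\clique$) with non-negative coefficients --- the coefficients simply absorb the fixed positive weight of each edge of $\clique\times\terminalset$ and the positive powers of $q$. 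For any $P(s)=\sum_i c_i s^i$ with $c_i\ge0$ and $s>0$, $sP'(s)/P(s)$ is a convex combination of the exponents and so lies in $[0,\deg P]$; applying this to $Z^1$ and $Z^\terminals$ gives $|s\,\tfrac{d}{ds}\log g(s)|\le D$, hence $|\log g(\critprob)-\log\gamma|\le D\,|\log s(\critprob)-\log s(\critprob^*)|$. Since $\critprob\mapsto\log s(\critprob)$ has derivative $1/(\critprob(1-\critprob))=O(\cliquesize^{3})$ on $[\cliquesize^{-3},\lambda/\cliquesize]$, and $D=O(\cliquesize^2)$, we get $g(\critprob)\in[e^{-\chi/4}\gamma,e^{\chi/4}\gamma]$ whenever $|\critprob-\critprob^*|=O(\chi\cliquesize^{-5})$, so $I$ contains an interval of length $\chi/\poly(\cliquesize)$ about $\critprob^*$. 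Thus once $b-a<\chi/\poly(\cliquesize)$ the midpoint lies in $I$, the output condition fires (when the estimates are accurate), and the search has run for $O(\log(\cliquesize/\chi))$ iterations.

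\emph{Accounting.}
There are $O(\log(\cliquesize/\chi))$ iterations, each making $O(1)$ calls to the FPRAS of Lemma~\ref{lem:computeZ} with accuracy parameter $O(\chi^{-1})$, on a rational $\critprob=m$ of bit-length polynomial in the iteration index and $\cliquesize$, with $\cliquesize^{1/4}$ an integer as that lemma requires; each call runs in time $\poly(\cliquesize,\terminals,\chi^{-1})$. Boosting each call's success probability to $1-\Theta(1/\log(\cliquesize/\chi))$ by a median of $O(\log\log(\cliquesize/\chi))$ independent runs, and taking a union bound, makes all estimates accurate simultaneously with probability at least $3/4$. On that event the invariant holds throughout, the output fires within the iteration bound, and the returned $m$ satisfies~(\ref{eq:compbalance}); the total running time is $\poly(\cliquesize,\terminals,\chi^{-1})$. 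I expect the main obstacle to be the logarithmic-derivative bound of the third paragraph: recognising $Z^1/Z^\terminals$, in the variable $s=\critprob/(1-\critprob)$, as a ratio of bounded-degree polynomials with non-negative coefficients is what keeps the good set from being exponentially short and so lets the search terminate quickly; the remainder is routine binary-search bookkeeping.
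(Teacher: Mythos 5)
Your proposal is correct and, while it reaches the same destination, it takes a noticeably different route from the paper's. The paper performs a \emph{linear} (grid) search: it evaluates the estimator at every rational point $\critprob=\cliquesize^{-3}(1+\delta)^\mu$ in the range, with $\delta=\chi/(16(n+m))$ where $n,m$ are the vertex and edge counts of $\vargadget$, and returns the first $\hat\critprob$ whose estimated ratio lands in $[e^{-\chi/2}\gamma,e^{\chi/2}\gamma]$. Its guarantee that some grid point works comes from the ``global'' Lipschitz property of the multivariate Tutte polynomial in its weights (the analogue of the paper's~(\ref{approxparams})): perturbing the clique edge-weight $\critprob/(1-\critprob)$ by $e^{\pm 2\delta}$ perturbs $Z^1$ and $Z^\terminals$, and hence their ratio, by at most $e^{\pm\chi/4}$, and there is a grid point within a $(1+\delta)$-factor of the exact $\critprob^*$. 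No monotonicity is needed. Your argument instead exploits the monotonicity of $g(\critprob)=Z^1/Z^\terminals$ (from Lemma~\ref{lem:coupleRC}) to run a binary search, and replaces the paper's Lipschitz bound with a logarithmic-derivative estimate: viewing $Z^1,Z^\terminals$ as polynomials of degree $D=\binom{\cliquesize}{2}$ in $s=\critprob/(1-\critprob)$ with non-negative coefficients, you deduce $\lvert s\,\tfrac{d}{ds}\log g\rvert\le D$ and from there an $O(\cliquesize^5)$ bound on $\lvert\tfrac{d}{d\critprob}\log g\rvert$ over the search interval. Both bounds play the same structural role (they guarantee the ``good'' set around $\critprob^*$ has inverse-polynomial width); yours is somewhat tighter and, combined with monotonicity, reduces the number of oracle calls from $\poly(\cliquesize)/\chi$ to $O(\log(\cliquesize/\chi))$, which is a genuine simplification in the accounting even though it does not affect the polynomial-time conclusion. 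Your handling of the irrational endpoint $\lambda/\cliquesize$ (suppressed in your write-up, and also treated informally in the paper) is fine: any rational $\hat\lambda$ with $\lambda_c<\hat\lambda<\lambda$ still makes the right-endpoint argument of Lemma~\ref{lem:excludedmiddleone} go through, so $\critprob^*$ still lies in $(\cliquesize^{-3},\hat\lambda/\cliquesize)$.
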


\begin{proof}  
First, we establish a useful preliminary fact.
If $0<\critprob\leq 1/4$, $0<\delta\leq 1$
and 
$\frac{1}{ 1+\delta} \critprob \leq \hat \critprob \leq (1+\delta) \critprob$
then 
\begin{equation}
\label{eq:trivia}
e^{-2\delta} \frac{\critprob}{1-\critprob}
\leq \frac{\hat\critprob}{1-\hat\critprob}
\leq e^{2\delta} \frac{\critprob}{1-\critprob}.
\end{equation}
To see this, note that $e^{\delta} \critprob \leq 3/4$
so $1-e^{\delta} \critprob\geq  \critprob$. Then
letting $x = e^{\delta}-1$,
$$1-\critprob 
\leq  
1 -\critprob + x(1-e^{\delta}\critprob)   - x  \critprob
= (1+x)(1-(1+x)\critprob)
= e^{\delta}(1-e^{\delta} \critprob),$$
which, together with $1+\delta \leq e^{\delta}$, gives the right-most inequality in~(\ref{eq:trivia}).
Similarly, 
$\critprob e^{-\delta} \leq 1/4$ so $1-e^{-\delta} \critprob \geq
\critprob  $ so
letting $x=1-e^{-\delta}$,
$$1-\critprob 
\geq  
1 -\critprob + x \critprob   - x(1-e^{-\delta}\critprob)    
= (1-x)(1-(1-x)\critprob)
= e^{-\delta}(1-e^{-\delta} \critprob),$$
which, together with $e^{-\delta} \leq \frac{1}{1+\delta}$, gives the left-most inequality in~(\ref{eq:trivia}).
 
Now consider the gadget $\gadget$  with parameters~$\terminals$,
$\cliquesize$, and $\critprob$.
Let $n$ and $m$ be the number of vertices, and edges, respectively, of
the variant $\vargadget$.
We will be interested in three quantities which depend
upon~$\critprob$: $Z^1$, $Z^\terminals$ and 
$\zeta \doteq \frac{\Pr(\rv(\edgesubset)=1)}
{\Pr(\rv(\edgesubset)=\terminals)}$.
For this proof only, we will denote these as $Z^1_\critprob$,
$Z^\terminals_\critprob$ and $\zeta_\critprob$ to make the dependence on
$\critprob$ clear.
By Lemma~\ref{lem:wiring}, $\zeta_\critprob = Z^1_\critprob/Z^\terminals_\critprob$.

Lemma~\ref{lem:computeZ} provides an FPRAS for computing $\zeta_\critprob$ for
rational parameters~$\critprob$. 
We will 
set $\delta = \chi/(16(n+m))$ and we will
use the FPRAS with accuracy parameter~$\chi/4$ to estimate
$\zeta_\critprob$ for every rational
$$\critprob = \frac{1}{\cliquesize^3} {(1+\delta)}^\mu$$
where $\mu$ is a non-negative integer
with $  \frac{1}{\cliquesize^3} {(1+\delta)}^\mu \leq \frac{\lambda}{\cliquesize}$.
Note that the number of integers~$\mu$ in this range is $O(\delta^{-1}
\log \cliquesize)$.
We will first ``power up'' the success probability of the FPRAS using standard techniques \cite{jvv} so
that the probability that any of these $O(\delta^{-1}
\log \cliquesize)$ calls to the FPRAS fails is at most~$3/4$.
If we find a rational $\hat \critprob$ for which our
estimate $\hat \zeta_{\hat \critprob}$ satisfies
$e^{-\chi/2} \gamma \leq  \hat \zeta_{\hat \critprob} \leq e^{\chi/2}
\gamma$, we will return this value~$\hat\critprob$.
Clearly, this will be an acceptable answer, since
$e^{-\chi/4} \zeta_{\hat\critprob} \leq  
\hat \zeta_{\hat \critprob} \leq
e^{\chi/4} \zeta_{\hat\critprob}$.

To finish the proof, we just need to argue that we will try a
$\hat\critprob$ which we accept.
Lemma~\ref{lem:excludedmiddleone}
guarantees that there is a value $\critprob^*\in
[\cliquesize^{-3},\lambda \cliquesize^{-1}]$ such that
$\zeta_{\critprob^*} = \gamma$.
In our computations, we will compute $\zeta_{\hat \critprob}$ for some
$\hat \critprob$ in the range
$ \frac{1}{1+\delta} \critprob^* \leq \hat \critprob \leq (1+\delta) \critprob^*$.
Recall that every edge~$e$ of
the gadget has edge-weight $\gamma_e=p(e)/(1-p(e))$,
and that $p(e)=\critprob$ for edges within the part of the gadget
corresponding to the clique~$K$. The weights of the other edges do not
depend on $\critprob$.
Thus, by~(\ref{eq:trivia}) 
and by analogy to~(\ref{approxparams}), 
$e^{-\chi/4} \zeta_{\critprob^*} \leq \zeta_{\hat\critprob} \leq
e^{\chi/4} \zeta_{\critprob^*}$. Thus, $\hat \critprob$ will be accepted.
\end{proof}

\section{Approximation-preserving reductions and   \BIS}  
\label{sec:APred}

Our main tool for understanding the relative difficulty of
approximation counting problems is \emph{approximation-preserving reductions}.
We use
Dyer, Goldberg, Greenhill and Jerrum's notion of
approximation-preserving reduction \cite{APred}.
Suppose that $f$ and $g$ are functions from
$\alphabet^{\ast }$ to~$\mathbb{R}$. An ``approximation-preserving
reduction'' from~$f$ to~$g$ gives a way to turn an FPRAS for~$g$
into an FPRAS for~$f$. Here is the definition. An {\it approximation-preserving reduction\/}
from $f$ to~$g$ is a randomised algorithm~$\mathcal{A}$ for
computing~$f$ using an oracle for~$g$. The algorithm~$\mathcal{A}$ takes
as input a pair $(x,\varepsilon)\in\alphabet^*\times(0,1)$, and
satisfies the following three conditions: (i)~every oracle call made
by~$\mathcal{A}$ is of the form $(w,\delta)$, where
$w\in\alphabet^*$ is an instance of~$g$, and $0<\delta<1$ is an
error bound satisfying $\delta^{-1}\leq\poly(|x|,
\varepsilon^{-1})$; (ii) the algorithm~$\mathcal{A}$ meets the
specification for being a randomised approximation scheme for~$f$
(as described above) whenever the oracle meets the specification for
being a randomised approximation scheme for~$g$; and (iii)~the
run-time of~$\mathcal{A}$ is polynomial in $|x|$ and
$\varepsilon^{-1}$.

If an approximation-preserving reduction from $f$ to~$g$
exists we write $f\APred g$, and say that {\it $f$ is AP-reducible
  to~$g$}.
Note that if $f\APred g$ and $g$ has an FPRAS then $f$ has an FPRAS\null.
(The definition of AP-reduction was chosen to make this true).
If $f\APred g$ and $g\APred f$ then we say that
{\it $f$ and $g$ are AP-interreducible}, and write $f\APeq g$.

The definitions allow us to construct approximation-preserving
reductions between problems~$f$ and~$g$ with real parameters without
insisting that the parameters themselves be efficiently approximable. Nevertheless,  some
of our results restrict attention to efficiently approximable parameters.
According to the definition, approximation-preserving reductions may use randomisation. Nevertheless,
the reductions that we present in this paper are deterministic,
except where they make use of an FPRAS to approximate a real parameter.
A word of warning about terminology:
Subsequent to \cite{APred}, the notation $\APred$ has been
used
to denote a different type of approximation-preserving reduction which applies to
optimisation problems.
We will not study optimisation problems in this paper, so hopefully this will
not cause confusion.

Dyer et al.~\citeyear{APred} studied counting problems in \#P and
identified three classes of counting problems that are interreducible
under approximation-preserving reductions. The first class, containing the
problems that admit an FPRAS, are trivially AP-interreducible since
all the work can be embedded into the reduction (which declines to
use the oracle). The second class  is the set of problems that are
AP-interreducible with \SAT, the problem of counting
satisfying assignments to a Boolean formula in CNF\null.
Zuckerman~\citeyear{zuckerman}
has shown that \SAT{} cannot have an FPRAS unless
$\mathrm{RP}=\mathrm{NP}$. The same is obviously true of any problem
 to which \SAT{} is AP-reducible.  
 
The third class appears to be of intermediate complexity.
It contains   all of the counting problems
expressible in a certain logically-defined complexity class. Typical
complete problems include counting the downsets in a partially ordered
set \cite{APred},
computing the partition function of the ferromagnetic Ising model with
varying interaction energies and local external magnetic
fields \cite{ising}
and counting the independent sets in a bipartite graph,
which is defined as follows.

\begin{description}
\item[Problem] $\BIS$.
\item[Instance] A bipartite graph $B$.
\item[Output]  The number of independent sets in $B$.
\end{description}

We showed in \cite{APred} that \BIS\ is complete for the
logically-defined
complexity class $\mathrm{\#RH}\Pi_1$  with respect to approximation-preserving
reductions.
We presume that there is no FPRAS for \BIS, but this is not
known.

\section{Our results}
\label{sec:results}
The next three sections of this paper give an approximation-preserving
reduction from \BIS\ to the problem $\Tutte(q,\gamma)$.
 
We start by defining the problem of computing the Tutte polynomial of 
a uniform hypergraph~$H$
with fixed positive edge weights.
For fixed positive real numbers~$q$ and $\gamma$ 
the problem $\uhTutte(q,\gamma)$ is defined as follows.
 
\begin{description}
\item[Problem] $\uhTutte(q,\gamma)$.
\item[Instance]  A uniform hypergraph $\hypergraph=(\hypervertices,\hyperedges)$.
\item[Output]  $\ZTutte(\hypergraph;q,\boldgamma)$,
where $\boldgamma$ is the constant function  with $\boldgamma_\hyperedge = \gamma$ for every $\hyperedge\in\hyperedges$.
\end{description}

In Section~\ref{bistohypertutte} 
we show that, for every $q>1$, there is an approximation-preserving
reduction from \BIS\ to $\uhTutte(q,q-1)$.

$\uhTutte(q,\gamma)$ is the problem of computing the Tutte polynomial of
a uniform hypergraph.
The most difficult part of the paper is reducing this approximation problem
to the problem of approximately computing the (multivariate) Tutte polynomial of an undirected
graph (a $2$-uniform hypergraph).
Section~\ref{sec:tographs}
shows that for any positive real numbers $q>2$ and $\gamma>0$,
there   is an approximation-preserving reduction from $\uhTutte(q,\gamma)$ 
to the following problem.

\begin{description}
\item[Problem] $\TwoWeightFerroTutte(q)$.
\item[Instance] Graph $\graph=(\graphvertices,\graphedges)$
with an edge-weight function $\boldgamma': \graphedges \rightarrow
\{\gamma',\gamma''\}$
where $\gamma'$ and $\gamma''$ are rationals in the interval $[|\graphvertices|^{-3},1]$.
\item[Output]  $\ZTutte(G;q,\boldgamma')$.
\end{description}

Finally, Section~\ref{shift} gives an approximation-preserving reduction
from 
the problem
$\TwoWeightFerroTutte(q)$ to 
the problem $\Tutte(q,\gamma)$ for any  
$q>2$ and $\gamma>0$.

\section{Approximately computing the Tutte polynomial of a uniform hypergraph}
\label{bistohypertutte}

We first 
define a parameterised version of $\BIS$, and 
also a restricted version of this in which vertices on the right-hand side
are required to have the same degree.

\begin{description}
\item[Problem] $\BIS(\mu)$.
\item[Instance]  Bipartite graph $B$.
\item[Output]  $Z_{\IS}(B;\mu)=\sum_I\mu^{|I|}$, 
where the sum is over all independent sets $I$ in $B$.
\end{description}
 
\begin{description}
\item[Problem] $\srBIS(\mu)$.
\item[Instance]  Bipartite graph $B=(U,V,E)$ in which every vertex in $V$ has
the same degree.
\item[Output]  $Z_{\IS}(B;\mu)=\sum_I\mu^{|I|}$, 
where the sum is over all independent sets $I$ in $B$.
\end{description}

\begin{lemma}
\label{lem:bis}
Suppose that
$\mu>0$ is efficiently approximable. Then $\BIS \APred \srBIS(\mu)$.
\end{lemma}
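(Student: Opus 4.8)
The plan is to build, from a \BIS\ instance $B=(U,V,E)$, a single semi-regular weighted instance whose $\mu$-weighted independent-set count encodes $\#\mathrm{IS}(B)$. Two things must be arranged: (a) the effective vertex fugacity must be turned from $\mu$ into (approximately) $1$, and (b) every right-hand vertex must end up with the same degree. I would handle each by a gadget construction that changes the partition function only by (i) a multiplicative factor that is a known, efficiently computable function of $\mu$ — computable to the needed precision using the FPRAS for $\mu$ — and (ii) a multiplicative error that can be forced into $[e^{-\delta},e^{\delta}]$ for any inverse polynomial $\delta$. Since $Z_{\IS}(B;\lambda)=e^{\pm n\delta}Z_{\IS}(B;1)$ whenever $e^{-\delta}\le\lambda\le e^{\delta}$ and $B$ has $n$ vertices, choosing $\delta$ a small enough inverse polynomial in $|B|$ and $\varepsilon^{-1}$ and making a single oracle call suffices.

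For the fugacity change I would use two elementary moves. The \emph{blow-up}: replace each vertex $w$ of $B$ by an independent set $S_w$ of $t$ fresh vertices on the same side, and replace each edge $\{u,v\}$ by the complete bipartite graph between $S_u$ and $S_v$. An independent set of the blow-up amounts to a choice of $T_w\subseteq S_w$ for each $w$ with $\{w:T_w\ne\emptyset\}$ independent in $B$, so $Z_{\IS}(\mathrm{blowup};\mu)=Z_{\IS}(B;(1+\mu)^{t}-1)$. Then attaching $r$ pendant (degree-one) vertices to every vertex of the blow-up multiplies the ``excluded'' weight of each such vertex by $(1+\mu)^{r}$, and a short calculation gives $Z_{\IS}(G_1;\mu)=(1+\mu)^{rtn_B}\,Z_{\IS}(B;\lambda)$ with $\lambda=(1+\mu(1+\mu)^{-r})^{t}-1$, where $n_B=|V(B)|$. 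Taking $r$ large enough (still polynomial, since $\mu$ is fixed) makes the base $1+\mu(1+\mu)^{-r}$ as close to $1$ as desired, so as $t$ ranges over the positive integers the values $\lambda$ form a fine grid that straddles $1$; hence some $(r,t)$ with $r,t$ polynomially bounded gives $\lambda\in[e^{-\delta},e^{\delta}]$. The right $t$, and the correction factor $(1+\mu)^{rtn_B}$, are computed from an FPRAS for $\mu$.

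For the degree padding, let $\Delta$ be the maximum right-hand degree in $G_1$. For each right-hand vertex $v$ of deficiency $c=\Delta-\deg(v)$, attach $c$ disjoint copies of the gadget $K_{\Delta,M}$ (complete bipartite, $M$ large) in which one vertex $z$ on the $\Delta$-side is joined by an edge to $v$, with the $\Delta$-side placed on the left. This makes $\deg(v)=\Delta$, and every newly introduced right-hand vertex (the $M$ vertices on the other side of each $K_{\Delta,M}$) has degree exactly $\Delta$, so the resulting graph $G_2$ is semi-regular on the right. A direct computation shows the gadget contributes a factor $\alpha$ to the partition function when $v$ is excluded and $\beta$ when $v$ is included, with $1\le\alpha/\beta\le 1+\mu(1+\mu)^{\Delta-1-M}$; choosing $M$ a suitable polynomial makes $(\alpha/\beta)^{M_0}\le e^{\delta}$, where $M_0\le\Delta\cdot|V(G_1)|$ is the total number of gadgets, so the net effect is multiplication by the known quantity $\beta^{M_0}$, up to a factor in $[1,e^{\delta}]$.

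Assembling: given $(B,\varepsilon)$, fix $\delta$ a small inverse polynomial in $|B|$ and $\varepsilon^{-1}$, construct $G_2$, call the $\srBIS(\mu)$ oracle on $G_2$ with accuracy $\varepsilon/4$, and divide by FPRAS-approximations to $(1+\mu)^{rtn_B}\beta^{M_0}$; the accumulated error is $e^{\pm\varepsilon}$, giving the AP-reduction. Degenerate inputs (a side of $B$ empty, no edges, or $B$ already handled trivially) are dealt with directly without the oracle. I expect the main obstacle to be the padding step: there is a genuine tension between wanting the added left-hand vertices to be ``invisible'' (which forces them to lie outside every heavy independent set) and wanting the padding gadget itself to be semi-regular of the exact degree $\Delta$ imposed by the rest of $G_1$ — this is exactly why the gadget must be a large biregular graph rather than a simple pendant path, and verifying that this single choice simultaneously equalises all right-hand degrees and leaves the partition function essentially unchanged is the delicate part.
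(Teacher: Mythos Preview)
Your proposal is correct and takes a genuinely different route from the paper's proof. The paper proceeds via two intermediate problems: first $\BIS\APred\BmaxIS$ (quoted from~\cite{APred}); then $\BmaxIS\APred\BIS(\mu)$ by blowing up each vertex into $s$ copies so that a size-$k$ independent set of $B$ contributes $((1+\mu)^s-1)^k$, choosing $s$ large enough that the maximum-size term dominates, dividing by $((1+\mu)^s-1)^{\xi}$ (where $\xi$ is the maximum-IS size, computable in polynomial time), and rounding; finally $\BIS(\mu)\APred\srBIS(\mu)$ by attaching copies of a complete bipartite gadget $K_{d,s}$ at one vertex to equalise right-hand degrees --- essentially your padding step. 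Your route is more direct: you never pass through $\BmaxIS$, but instead tune the effective fugacity of the blow-up back to $\lambda\approx1$ by combining the blow-up (which pushes the fugacity up to $(1+\mu)^t-1$) with pendants (which scale it down by $(1+\mu)^{-r}$), and you exploit the resulting fine grid of achievable $\lambda$-values to land in $[e^{-\delta},e^{\delta}]$. The degree-padding gadget is the same in both arguments. The paper's approach has the advantage of reusing a known black-box reduction and needing only a one-parameter choice followed by an integer rounding; your approach avoids the $\BmaxIS$ detour and the computation of $\xi$, at the cost of a two-parameter search for $(r,t)$ and a little extra care in controlling how the error in the FPRAS estimate $\hat\mu$ propagates into the choice of $t$ and into the correction factors $(1+\mu)^{rtn_B}$ and $\beta^{M_0}$.
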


\begin{proof} 
Lemma~15 of~\cite{APred}  gave 
an AP-reduction from 
$\BIS$ to \BmaxIS, the problem of counting \emph{maximum}
independent sets in a bipartite graph.
Here,  we give two AP-reductions --- first, a reduction from \BmaxIS\ to
$\BIS(\mu)$,
and then a reduction from 
the intermediate problem
$\BIS(\mu)$ to $\srBIS(\mu)$.

Let $B$ be an instance of $\BmaxIS$ with $n$ vertices and $m$ edges
and let $\varepsilon$ be the desired accuracy of the
approximation-preserving reduction.
Let $\xi$ be the size of a maximum independent set of~$B$
and let $Y$ be the number of maximum independent sets.
To do the construction, we first need a rough estimate of $\mu$,
so compute a rational value $\tilde{\mu}$ in the range
$\tfrac34 \mu \leq \tilde{\mu} \leq \tfrac54 \mu$.
Let 
$s$ be an integer satisfying
$$ s- 1 \leq \left\lceil \frac{n+3}{\lg(1+2\tilde \mu/5
)}\right\rceil \leq s.$$
Note that $s \geq \frac{n+3}{\lg(1+\mu/2)}$.
Let $B'$ be the graph with 
vertex set $\{(u,i)\mid u\in V(B), i\in [s]\}$
and edge set $\{((u,i),(v,j)) \mid (u,v)\in E(B)\}$.
Next, we need a more accurate estimate of~$\mu$.
Compute a rational number $\hat \mu$
in the range
$e^{-\epsilon/(60 ns)} \mu \leq \hat \mu \leq e^{\epsilon/(60
  ns)}\mu$.
Now since there are $n s$ vertices in $B'$,
$$e^{-\epsilon/60} Z_{\IS}(B';\mu) \leq Z_{\IS}(B';\hat \mu) \leq e^{\epsilon/60}
Z_{\IS}(B';\mu).$$ Thus, by using our oracle for $\BIS(\mu)$ with accuracy parameter $\epsilon/60$,
we can compute a value $Z$ satisfying
\begin{equation}
\label{APredeq}
e^{-\epsilon/30} Z_{\IS}(B';\hat \mu) \leq  Z \leq e^{\epsilon/30}
Z_{\IS}(B';\hat \mu).
\end{equation}
Now note that every independent set of~$B'$ points out an independent
set of~$B$. The vertex~$u$ of~$B$ is in the independent set of~$B$ if
there is at least one vertex~$(u,i)$ that is in the independent set
of~$B'$.
A size-$k$ independent set of~$B$ thus makes a contribution of
${({(1+\hat\mu)}^s-1)}^k$ to $Z_{\IS}(B';\hat \mu)$.
Since the number of independent sets of~$B$ is at most~$2^n$,
we have
\begin{equation}
\label{APredanother}
Y \leq \frac{Z_{\IS}(B';\hat \mu)}{{({(1+\hat\mu)}^s-1)}^\xi} \leq
Y + \frac{2^n}{({(1+\hat\mu)}^s-1)}
\leq Y + \frac14,
\end{equation}
where the final inequality follows from the definition of~$s$ since
$\hat \mu \geq \mu/2$.
The following simple procedure now gives a sufficiently accurate
estimate  for~$Y$. Take the value~$Z$ from Equation~(\ref{APredeq}),
divide it by ${({(1+\hat\mu)}^s-1)}^\xi$ and round down to the nearest
integer.
The fact that the accuracy is sufficient follows from~(\ref{APredeq})
and~(\ref{APredanother}). See~\cite[Theorem~3]{APred}.

Finally, we present an AP-reduction from  $\BIS(\mu)$ to $\srBIS(\mu)$.
Let $B=(U,V,E)$ be an $n$-vertex instance of $\BIS(\mu)$ in which the
maximum degree of a vertex in $V$ is $d>1$.
We will construct $B'=(U',V',E')$ -- an instance of $\srBIS(\mu)$
in which every vertex in $V'$ has degree~$d$.
Let $\varepsilon$ be the desired accuracy of the
approximation-preserving reduction.
As above, compute a rational value $\tilde{\mu}$
such that 
$\tfrac45 \tilde \mu \leq \mu \leq \tfrac43 \tilde \mu$.
Let $\mu^-$ denote the computed lower bound
$\tfrac45 \tilde \mu$ and let $\mu^+$ denote the computed upper bound
$\tfrac43 \tilde \mu$.
Let $D(x) = {(1+x)}^{d-1}$, $U(x) = x D(x)$,  
$L(s,x) = (1+x)^s$, and $Y(s,x) = L(s,x) + D(x)-1$. 
We start by computing an integer~$s$ such that
\begin{equation}
\label{eq:s} 
\frac{1}{L(s,\mu^-)}
\max(D(\mu^+)-1,U(\mu^+))
\leq \frac{\varepsilon}{6 dn}. 
\end{equation}
Note that $s=O( n \epsilon^{-1})$
so we can efficiently compute such a value~$s$ by starting with $s=1$,
and increasing $s$ one-by-one until we find a value for which~(\ref{eq:s}) holds.

Now
let $\Psi$ be a complete bipartite graph
with vertex sets $\{z_1,\ldots,z_{d}\}$ and $\{y_1,\ldots,y_s\}$.
For each vertex $v\in V$ of degree $\delta$,
take $d-\delta$ new copies of $\Psi$, and attach $v$ to vertex~$z_1$
of each copy.
Let $g < d n $ be the number of copies of $\Psi$ that get included in $B'$.

Now
$   Z_{\IS}(\Psi;\mu) = L(s,\mu) + (1+\mu)D(\mu)-1$
and the total contribution to $Z_{\IS}(\Psi;\mu)$ from independent
sets including the vertex~$z_1$ is $U(\mu)$.
Let $Y = Z_{\IS}(\Psi;\mu) -  U(\mu) = Y(s,\mu)$.
Clearly, $Y$ is the total contribution to $Z_{\IS}(\Psi;\mu)$ from independent
sets not including the vertex~$z_1$.

Then 
$$Z_{\IS}(B;\mu) Y^g \leq Z_{\IS}(B';\mu) \leq Z_{\IS}(B;\mu)
Z_{\IS}(\Psi;\mu)^g$$
so
$$Z_{\IS}(B;\mu)   \leq \frac{Z_{\IS}(B';\mu)}{Y^g} \leq Z_{\IS}(B;\mu)
{\left(\frac{Z_{\IS}(\Psi;\mu)}{Y}\right)}^g$$

Now 
\begin{itemize}
\item $Z_{\IS}(B';\mu)$ may be estimated with accuracy
parameter~$\epsilon/3$ using the
oracle.
\item $Y^g$ may be estimated directly (to the same accuracy) 
by computing a value~$\hat \mu$
satisfying
$e^{-\epsilon/(6 d n s)} \mu \leq \hat \mu \leq e^{\epsilon/(6 d n s)} \mu $
which ensures that
$$e^{-\epsilon/(6 d n )} L(s,\mu) \leq L(s,\hat \mu) \leq e^{\epsilon/(6 d n ) }L(s,\mu) $$
And noting from (\ref{eq:s}) that
$$L(s,\mu) \leq Y = L(s,\mu)\left(1+\frac{D(\mu)-1}{L(s,\mu)}\right) \leq L(s,\mu) e^{\epsilon/(6 dn)}.$$
\item Finally, (\ref{eq:s}) gives
$${\left(\frac{Z_{\IS}(\Psi;\mu)}{Y }\right)}^g = {\left(1+ \frac{U(\mu)}{Y(s,\mu)}\right)}^g \leq e^{\epsilon/3},$$
which finishes the reduction.
\end{itemize}
 \end{proof}

We are interested in the situation 
$\mu>1$ so that we have $q=\mu+1>2$   in 
the parameters of $\uhTutte(q,\gamma)$ in
Lemma~\ref{lem:one} below.

\begin{lemma}
\label{lem:one}
Suppose that
$\mu>0$ is efficiently approximable. Then
$$\srBIS(\mu) \APred\uhTutte(\mu+1,\mu).$$
\end{lemma}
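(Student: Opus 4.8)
The plan is to build, from the $\srBIS(\mu)$ instance $B=(U,V,E)$, an auxiliary \emph{uniform} hypergraph $\hypergraph$ whose hypergraph Tutte polynomial at $(q,\gamma)=(\mu+1,\mu)$ equals $Z_{\IS}(B;\mu)$ up to an explicit constant factor, and then to wrap this exact identity in a routine AP-reduction. First I would dispose of the degenerate case: if the common degree $d$ of the vertices of $V$ is $0$ then $B$ has no edges and $Z_{\IS}(B;\mu)=(1+\mu)^{|U|+|V|}$ can be computed directly from a good enough rational approximation to $\mu$, so assume $d\geq 1$.

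Given such a $B$, introduce a new ``apex'' vertex $w$ and set $\hypervertices=U\cup\{w\}$; for each $v\in V$ let $\hyperedge_v=N_B(v)\cup\{w\}$, where $N_B(v)\subseteq U$ is the neighbourhood of $v$ in $B$, and let $\hyperedges=(\hyperedge_v)_{v\in V}$ as a multiset (vertices of $V$ with equal neighbourhoods produce parallel hyperedges, which are allowed by the conventions of Section~\ref{sec:TuttePotts}). Since $|N_B(v)|=d$ and $w\notin U$, every hyperedge has size exactly $d+1$, so $\hypergraph$ is $(d+1)$-uniform, and it is computable from $B$ in polynomial time. I claim that, writing $q=\mu+1$ and letting $\boldgamma$ be the constant function equal to $\mu=q-1$,
$$
\ZTutte(\hypergraph;q,\boldgamma)\;=\;q\cdot Z_{\IS}(B;\mu).
$$

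To prove the claim I would expand $\ZTutte(\hypergraph;q,\boldgamma)=\sum_{\calF\subseteq\hyperedges}q^{\kappa(\hypervertices,\calF)}(q-1)^{|\calF|}$ over sub-multisets $\calF$, which correspond bijectively to subsets $S\subseteq V$ (the indices of the chosen hyperedges), so that $|\calF|=|S|$. The key point is that \emph{every} hyperedge contains $w$: hence for $S\neq\emptyset$ the chosen hyperedges all lie in a single connected component, spanning $\{w\}\cup N_B(S)$, while each vertex of $U\setminus N_B(S)$ is isolated, giving $\kappa(\hypervertices,\calF)=1+|U\setminus N_B(S)|$; and the same formula holds for $S=\emptyset$. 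Summing,
$$
\ZTutte(\hypergraph;q,\boldgamma)=\sum_{S\subseteq V}(q-1)^{|S|}\,q^{\,1+|U\setminus N_B(S)|}
= q\sum_{S\subseteq V}(q-1)^{|S|}\,q^{|U\setminus N_B(S)|}.
$$
On the other side, writing an independent set $I$ of $B$ as $I=(I\cap V)\cup(I\cap U)$, independence is equivalent to $I\cap U\subseteq U\setminus N_B(I\cap V)$, so summing over $S=I\cap V$ first gives $Z_{\IS}(B;\mu)=\sum_{S\subseteq V}\mu^{|S|}(1+\mu)^{|U\setminus N_B(S)|}$; substituting $\mu=q-1$ and $1+\mu=q$ matches the previous display and establishes the identity.

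Finally I would assemble the AP-reduction. On input $(B,\varepsilon)$ with $d\geq1$: build $\hypergraph$; query the oracle for $\uhTutte(\mu+1,\mu)$ on $(\hypergraph,\varepsilon/2)$ to obtain $\widehat Z$ with $e^{-\varepsilon/2}\ZTutte(\hypergraph;q,\boldgamma)\leq\widehat Z\leq e^{\varepsilon/2}\ZTutte(\hypergraph;q,\boldgamma)$; and, using the FPRAS for $\mu$ that exists because $\mu$ is efficiently approximable, compute a rational $\widehat\mu$ with $e^{-\varepsilon/2}\mu\leq\widehat\mu\leq e^{\varepsilon/2}\mu$, whence $e^{-\varepsilon/2}(\mu+1)\leq\widehat\mu+1\leq e^{\varepsilon/2}(\mu+1)$. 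Output $\widehat Z/(\widehat\mu+1)$; by the identity this lies within a factor $e^{\pm\varepsilon}$ of $Z_{\IS}(B;\mu)$, which is strictly positive so there is no division issue. The error parameters fed to the oracle and to the FPRAS for $\mu$ are inverse-polynomial in the desired accuracy, the running time is polynomial, and boosting the success probabilities of both to $7/8$ (by the standard trick, \cite[Lemma~6.1]{jvv}) yields overall success probability at least $3/4$. The one non-routine step — and hence the ``obstacle'', such as it is — is the construction itself, specifically the apex vertex $w$, which collapses all chosen hyperedges into a single component so that $\kappa$ counts precisely the uncovered vertices of $U$, and which also forces $\hypergraph$ to be uniform exactly when the $V$-side degrees coincide (explaining why the reduction starts from $\srBIS(\mu)$ rather than from $\BIS(\mu)$); everything after that is bookkeeping.
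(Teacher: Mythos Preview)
Your proof is correct and is essentially identical to the paper's: the paper introduces the same apex vertex (called $s$), forms the same hyperedges $F_v=\Gamma(v)\cup\{s\}$, derives the same identity $Z_{\IS}(B,\mu)=(\mu+1)^{-1}\ZTutte(\hypergraph;\mu+1,\mu)$ via the same count of isolated vertices in $U\setminus\Gamma(S)$, and wraps it in the same AP-reduction. Your treatment is slightly more careful in that you explicitly handle the degenerate case $d=0$, note that $\hyperedges$ must be a multiset to accommodate repeated neighbourhoods, and spell out why semi-regularity is exactly what makes $\hypergraph$ uniform.
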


\begin{proof}
Let $B=(U,V,E)$ be an instance of $\srBIS(\mu)$.  
Let  $\hypergraph=(\hypervertices,\hyperedges)$ be an instance of $\uhTutte(\mu+1,\mu)$
constructed as follows.
Let~$s$ be a new vertex that is not in $U\cup V$ and let $\hypervertices=U\cup\{s\}$. 
For $v\in V$, let $\Gamma(v)=\{u\in U:(u,v)\in E\}$
and let
$F_v=\Gamma(v) \cup\{s\}$.
Let
$\hyperedges=\bigcup_{v\in V} F_v$.
We will show below that 
\begin{equation}
\label{eq:temp}
Z_{\IS}(B,\mu)=(\mu+1)^{-1}\ZTutte(\hypergraph;\mu+1,\mu).
\end{equation}

Now let 
$\varepsilon$ be the desired accuracy in the approximation-preserving
reduction. To complete the reduction,
we first compute a value~$x$
in the range
$$e^{-\epsilon/2}
(\mu+1)^{-1} \leq x \leq 
e^{\epsilon/2}
(\mu+1)^{-1}.
$$
This is easy to do since $\mu$ is efficiently approximable
and  
$$\frac{e^{-\epsilon/2}}{\mu+1}
\leq \frac{1}{e^{\epsilon/2}\mu+1}\mbox{ and }
\frac{1}{e^{-\epsilon/2}\mu+1} \leq
\frac{e^{\epsilon/2}}{\mu+1}.$$
Then we use the oracle to estimate
$\ZTutte(\hypergraph;\mu+1,\mu)$ with accuracy parameter~$\varepsilon/2$.

We finish the proof by establishing~(\ref{eq:temp}).
Let $S\subseteq V$.  The contribution to $Z_{\IS}(B,\mu)$ 
from independent sets $I$ with $I\cap V=S$ is 
$\mu^{|S|}(\mu+1)^{|U|-|\Gamma(S)|}$, where $\Gamma(S)=\bigcup_{v\in S}\Gamma(v)$.
That is,
$$
Z_{\IS}(B,\mu)=\sum_{S\subseteq V}\mu^{|S|}(\mu+1)^{|U|-|\Gamma(S)|}.
$$

On the other hand, the contribution to $\ZTutte(\hypergraph;\mu+1,\mu)$
from the hyperedge set $\hyperedges=\{F_v:v\in S\}$ is 
$$\mu^{|S|}(\mu+1)^{\kappa(\hypervertices,\hyperedges)}=
\mu^{|S|}(\mu+1)^{|U|-|\Gamma(S)|+1},$$  
since the vertices in $\Gamma(S)$ together
with $s$ form one connected component, and all other vertices are isolated.
Thus
$$\ZTutte(\hypergraph;\mu+1,\mu)=
\sum_{S\subseteq V}\mu^{|S|}(\mu+1)^{|U|-|\Gamma(S)|+1}.$$
\end{proof}

\section{Approximately computing the multivariate Tutte polynomial of a graph}
\label{sec:tographs}
 
We have already given a reduction from $\BIS$ to 
$\uhTutte(\mu+1,\mu)$ for any efficiently approximable $\mu>0$.
In this section, we prove the following lemma, reducing the
latter to $\TwoWeightFerroTutte$.
 
\begin{lemma}
Suppose that
$q>2$ and $\gamma>0$ are efficiently approximable. Then
$\uhTutte(q,\gamma) \APred \TwoWeightFerroTutte(q)$.
\label{lem:last}
\end{lemma}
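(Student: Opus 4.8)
The plan is to simulate each hyperedge of the instance $\hypergraph$ of $\uhTutte(q,\gamma)$ by a copy of the gadget $\vargadget$ of Section~\ref{sec:variant}, choosing the gadget parameters so that its ``bistable'' behaviour reproduces the contribution of a hyperedge. Given an instance $\hypergraph=(\hypervertices,\hyperedges)$ of $\uhTutte(q,\gamma)$ with $|\hypervertices|=n$ vertices and $|\hyperedges|=m$ hyperedges, each of size~$t$, I would build a graph $\graph$ on vertex set $\hypervertices$ together with, for every hyperedge $\hyperedge\in\hyperedges$, a private clique $\clique_\hyperedge$ of size $\cliquesize$, where $\cliquesize$ is chosen large enough (a polynomial in $n$, $m$ and $\tol^{-1}$, with $\cliquesize\ge\max\{t^{16},\cliquesize_0\}$ and $\cliquesize^{1/4}$ an integer) that the error tolerance $\tol$ of Lemma~\ref{lem:excludedmiddletwo} can be driven below any desired inverse polynomial. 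The terminals of the $\hyperedge$-th gadget are identified with the $t$ vertices of the hyperedge $\hyperedge$. The two edge weights $\gamma',\gamma''$ of the $\TwoWeightFerroTutte(q)$ instance are the two non-$\critprob$ weights appearing in the gadget, namely $\gamma' = \cliquesize^{-3/4}/(1-\cliquesize^{-3/4})$ (for clique--terminal edges) and the weight $\critprob/(1-\critprob)$ for edges inside $\clique_\hyperedge$; here $\critprob$ is the rational approximation to the balancing value produced by Lemma~\ref{lem:computerho} applied with $t$, $\cliquesize$ and a suitably small rational $\chi$, so that $\Pr(\rv=1)\approx\gamma\Pr(\rv=t)$ to within a factor $e^{\pm\chi}$. (One must check these weights lie in $[|\graphvertices|^{-3},1]$, which holds since $|\graphvertices| = n + m\cliquesize$ is polynomially related to $\cliquesize$ and $\critprob\ge\cliquesize^{-3}$.)

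The heart of the argument is a ``telescoping'' computation showing that, up to an explicitly computable factor, $\ZTutte(\graph;q,\boldgamma')$ equals $\ZTutte(\hypergraph;q,\boldgamma)$ with $\gamma$ replaced by the effective weight $Z^1/Z^t$ contributed by a single gadget. Concretely, I would expand $\ZTutte(\graph;q,\boldgamma')=\sum_{F\subseteq E(\graph)}q^{\kappa(V(\graph),F)}\gamma'(F)$ by first summing over the restriction of $F$ to each gadget's edge set. Fixing, for each hyperedge $\hyperedge$, the number $k_\hyperedge\in[t]$ of components into which that gadget partitions its terminals, the internal sum over the gadget edges contributes (by the definition of $Z^{k}$ in Section~\ref{sec:variant}, after accounting for the $q^{-k}$ normalisation and the isolated clique vertices) a factor proportional to $Z^{k_\hyperedge} q^{k_\hyperedge}$; and the way the $t$ terminals of $\hyperedge$ are glued together then behaves exactly like adding a single hyperedge $\hyperedge$ of weight~$1$ ``collapsed'' to $k_\hyperedge$ points. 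The crucial point is that in the multivariate Tutte polynomial a hyperedge of weight $\gamma_\hyperedge$ present/absent contributes, for a given partition of its endpoints, a factor $1$ (absent) or $\gamma_\hyperedge q^{-(\text{drop in }\kappa)}$ (present); comparing this with the gadget's $Z^1$-versus-$Z^t$ dichotomy shows the gadget simulates a hyperedge of weight $Z^1/Z^t$, up to the ``middle'' terms $1<k_\hyperedge<t$. Collecting these identities over all gadgets gives $\ZTutte(\graph;q,\boldgamma') = \Lambda\cdot\bigl(\ZTutte(\hypergraph;q,\widehat\boldgamma) + (\text{error})\bigr)$, where $\Lambda = \prod_\hyperedge (Z^t q^t /q)$ — wait, more precisely $\Lambda$ is a product of the per-gadget $Z^{t}$ values times fixed powers of $q$, which is computable to relative accuracy $e^{\pm\varepsilon/3}$ by Lemma~\ref{lem:computeZ} — and $\widehat\gamma = Z^1/Z^t$ is within $e^{\pm\chi}$ of the target $\gamma$.

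It remains to control the error and assemble the AP-reduction. For the error, the contribution of configurations in which some gadget realises a ``middle'' value $1<k_\hyperedge<t$ is bounded, via Lemma~\ref{lem:wiring} and Lemma~\ref{lem:excludedmiddletwo}, by $m\tol$ times the dominant term, since each gadget independently puts probability less than $\tol$ on the bad event; choosing $\cliquesize$ so that $\tol\le \varepsilon/(12 m)$ (say) makes this a relative error of at most $e^{\pm\varepsilon/3}$. Then the reduction runs as follows: compute rational approximations to $q$ and $\gamma$; run Lemma~\ref{lem:computerho} to get $\critprob$ with $\chi$ chosen so that $e^{\pm\chi}$-distortion of $\gamma$ translates (via the stability estimate in~(\ref{approxparams})) to an $e^{\pm\varepsilon/3}$-distortion of $\ZTutte(\hypergraph;\cdot,\cdot)$; build $\graph$ and the weight function $\boldgamma'$; call the $\TwoWeightFerroTutte(q)$ oracle with accuracy $\varepsilon/3$; compute $\Lambda$ to accuracy $\varepsilon/3$ by Lemma~\ref{lem:computeZ}; divide and subtract the (known, small) correction, and output the result. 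Composing the $O(1)$ many $e^{\pm\varepsilon/3}$-errors gives the required $e^{\pm\varepsilon}$ bound, and all oracle calls have inverse-polynomial accuracy parameters, so this is a genuine AP-reduction. The main obstacle — and the step demanding the most care — is the bookkeeping in the telescoping identity: keeping straight the $q^{-k}$ normalisations in the definition of $Z^k$, the bijection between subgraphs of $\graph$ and pairs (subgraph-of-$\hypergraph$, choices-inside-each-gadget), and verifying that the component count $\kappa$ adds up correctly when the gadgets are glued at shared terminals (shared because a vertex of $\hypervertices$ may lie in many hyperedges). One must check in particular that distinct gadgets only ever meet at terminals, so that $\kappa(V(\graph),F)$ decomposes as the number of ``hypergraph components'' plus the total number of non-terminal gadget components, which is exactly what the factorisation needs.
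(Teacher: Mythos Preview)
Your overall strategy matches the paper's: replace each hyperedge by a copy of $\vargadget$, tune $\critprob$ via Lemma~\ref{lem:computerho}, and use Lemmas~\ref{lem:excludedmiddletwo} and~\ref{lem:computeZ} to control error and normalisation. The construction, the choice of the two weights, and the range check against $[|\graphvertices|^{-3},1]$ are all correct.

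However, your ``telescoping'' step has a genuine gap. You propose to condition on, for each hyperedge $\hyperedge$, only the \emph{number} $k_\hyperedge$ of terminal components realised by that gadget. This is not enough information to compute $\kappa(V(\graph),F)$: two gadget configurations with the same $k_\hyperedge$ but different partitions of the terminals can merge different vertices of $\hypervertices$ and hence produce different global component counts. The paper therefore decomposes over the full partitions $\pi_j\in\Pi_j$ of each $\hyperedge_j$, obtaining the exact identity
\[
\ZTutte(\hatG;q,\boldgamma')=\sum_{\pi_1\in\Pi_1,\ldots,\pi_m\in\Pi_m}
q^{\kappa(\pi_1\vee\cdots\vee\pi_m)}\prod_{j=1}^m Z_j(\pi_j),
\]
where $Z_j(\pi_j)=\sum_{A_j\in\calA_j^{\pi_j}}\gamma(A_j)q^{\kappa'(V_j,A_j)}$. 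Your per-gadget factor ``$Z^{k_\hyperedge}q^{k_\hyperedge}$'' is the sum of $Z_j(\pi_j)q^{k_\hyperedge}$ over all $\pi_j$ with $k_\hyperedge$ blocks, but these terms carry different $q^{\kappa(\pi_1\vee\cdots\vee\pi_m)}$ prefactors, so they cannot be collapsed as you describe.

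Relatedly, your error bound (``each gadget independently puts probability less than $\tol$ on the bad event'') is not how the argument works: the gadgets are coupled through the $q^{\kappa(\pi_1\vee\cdots\vee\pi_m)}$ factor, so there is no independence. The paper's upper bound instead uses a monotonicity trick: for any $\pi_j\neq\top_j$ one has that $\bot_j$ refines $\pi_j$, hence $\kappa(b_1(z_1)\vee\cdots\vee b_m(z_m))\geq\kappa(\pi_1\vee\cdots\vee\pi_m)$ where $b_j(0)=\bot_j$, $b_j(1)=\top_j$. This lets one replace each non-$\top$ partition by $\bot_j$ at the cost of only increasing the $q^\kappa$ factor, after which the per-gadget sums $Z_j(\Pi_j^0)=\sum_{k=2}^t Z_j^k$ can be bounded using Lemma~\ref{lem:wiring} and the tolerance~$\tol$. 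This refinement step is the missing idea in your sketch; once you have it, the rest of your outline (choice of $\chi$, use of Lemma~\ref{lem:computeZ} for the normalising factor, composing $O(1)$ many $e^{\pm\varepsilon/O(1)}$ errors) goes through essentially as you say.
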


\begin{proof}
Start with a $\terminals$-uniform hypergraph
$\hypergraph=(\hypervertices,\hyperedges)$.
$\hypergraph$ is an instance of
the problem
$\uhTutte(q,\gamma)$.  
For convenience, let $\hypervertices=\{v_1,\ldots, v_\numvert\}$ 
and $\hyperedges=\{\hyperedge_1,\ldots, \hyperedge_m\}$.

The basic idea is  to simulate each hyperedge with a copy of the
gadget $\gadget$ from Section~\ref{sec:gadget}
  (actually with the variant $\vargadget$ from Section~\ref{sec:variant}).
So, from~$\hypergraph$, we construct a graph $\hatG$ which contains the vertices of~$\hypergraph$
and also, for each hyperedge~$\hyperedge_j$ of~$\hypergraph$, a clique $K_j$.
This clique, together with the vertices in~$\hyperedge_j$, forms a copy of the gadget~$\vargadget$
in which vertices in $\hyperedge_j$ are the terminals. Given the phase transition from Section~\ref{sec:gadget},
the only likely configurations of the gadget (which are the configurations
that contribute substantially to the Tutte polynomial of~$\hatG$)
are (1) those that put all of the vertices in~$f_j$ in the same connected component, and
(2) those that put the vertices in~$f_j$ in distinct connected components.
Thus, $\hatG$ will simulate the hypergraph~$\hypergraph$ (for the purpose of approximately computing 
Tutte polynomials).
  
We now describe the details of the construction.  
We start by setting the parameters. Let $\varepsilon$ be the desired accuracy in the approximation-preserving
reduction (see the definition of an AP-reduction in Section~\ref{sec:APred}).
Let $\chi = \varepsilon/(4m)$ 
The tolerance $\tol$, which is a parameter to Lemma~\ref{lem:excludedmiddletwo}, 
may be chosen to be any value such that
$$
\left(1+ \frac{\tol(1+e^{\chi}\gamma)}{1-\tol}\right)
\leq e^\chi.$$
Note that this inequality is achieved for $\eta=O(\varepsilon/m)$.

The reduction will construct an instance~$\Ghat=(\graphvertices,\graphedges)$ of 
the target problem
$\TwoWeightFerroTutte(q)$
along with an appropriate edge-weight function $\boldgamma': \graphedges \rightarrow
\{\gamma',\gamma''\}$. We will show that using
an oracle to approximate the solution to this instance with accuracy
parameter $\delta=\varepsilon/2$ enables us to get within
$\exp(\pm\varepsilon)$ of
$\ZTutte(H;q,\gamma)$.

Let $\lambda = \lambda_c + (q-\lambda_c)/2$ and let
$\cliquesize_0$ be the quantity from 
Lemmas~\ref{lem:excludedmiddleone} and~\ref{lem:excludedmiddletwo}. 
Let $\cliquesize$ be the smallest integer greater than 
$\max\{\terminals^{16},\tol^{-1/8},\cliquesize_0\}$
for which $\cliquesize^{1/4}$ is an integer.
Using the algorithm from Lemma~\ref{lem:computerho}
(suitably powered up so that its failure probability is at most~$3/8$)
we can compute a rational ${\critprob}$ in the
range $[\cliquesize^{-3},\lambda/\cliquesize]$
such that, if  
$\edgesubset$ is drawn from $\RC(\gadget;q,p)$, then
Equation~(\ref{eq:compbalance}) holds.
Note from  Lemma~\ref{lem:excludedmiddletwo} that 
Equation~(\ref{eq:dichotomy}) holds.

The construction is as follows. For every $j\in[m]$, let $\clique_j$ be a set of $\cliquesize$ vertices
and let $\graphvertices_j = \clique_j \cup \hyperedge_j$.
Let 
$\graphedges_j = \clique_j^{(2)} \cup (\clique_j \times \hyperedge_j)$ 
and let $G_j$ be the graph $(\graphvertices_j,\graphedges_j)$.
Note that $\graph_j$ is a copy of the graph $\vargadget$ from Section~\ref{sec:variant} --- $\hyperedge_j$
is the set of terminals.
Now we construct the graph $\Ghat = (\graphvertices,\graphedges)$ 
where $\graphvertices = \hypervertices \cup \bigcup_{j\in[m]} 
\clique_j$ 
and $\graphedges = \bigcup_{j\in[m]} \graphedges_j$.
Define 
 $$ \hat\edgeprob(e) = \begin{cases}
\critprob, &\text{if $e\in \clique_j^{(2)}$ for some~$j\in[m]$, and}\\
\cliquesize^{-3/4}, &\text{if $e\in \clique_j\times \hyperedge_j$ for some~$j\in[m]$.} 
\end{cases}
$$ 
Let $\gamma_e = \hat\edgeprob(e)/(1-\hat\edgeprob(e))$, and let $\boldgamma'=\{\gamma_e\}_{e\in\graphedges}$.
Note that $\gamma' = \critprob/(1-\critprob)$ 
and $\gamma'' = 
\cliquesize^{-3/4}/(1-\cliquesize^{-3/4})$. These are both rational values in
the interval $[|\graphvertices|^{-3},1]$, as required, since $\lambda/\cliquesize \leq 1/2$
by Lemma~~\ref{lem:excludedmiddleone}.

Denote by $\Pi_\edge$ the set of all partitions of $\hyperedge_\edge$.
Given  $\pi_\edge\in\Pi_\edge$,
let $\calA_\edge^{\pi_\edge}$ 
be the collection of all edge subsets $\edgesubset_\edge \subseteq\graphedges_\edge$
that induce the partition $\pi_\edge$ (into connected components) on $\hyperedge_\edge$.
For $\edgesubset_\edge \subseteq \graphedges_\edge$, let $\rv(\edgesubset_\edge)$ denote 
the number of 
connected components that contain 
terminals (vertices in $\hyperedge_\edge$) in the graph 
$(\graphvertices_\edge,\edgesubset_\edge)$.
Let $\kappa'(\graphvertices_\edge,\edgesubset_\edge) = \kappa(\graphvertices_\edge,\edgesubset_\edge) - 
\rv(\edgesubset_\edge)$ be the number of remaining connected components (that do not contain terminals).
For partitions $\pi$ and $\pi'$, $\pi\vee\pi'$ denotes the
finest partition that is a common coarsening of $\pi$ and
$\pi'$. 
If two elements are 
in the same block
in $\pi$ or in $\pi'$ then they are 
in the same block
in the coarsening.
The ``coarsest'' partition has one block and the ``finest'' partition consists of singleton blocks.
Technically, $\pi_\edge\in \Pi_\edge$ is a partition of $\hyperedge_\edge$
but we consider it as a partition of the entire vertex set $\hypervertices$
by extending it with singleton blocks.  

Now $$\ZTutte(\Ghat;q,\boldgamma')=\sum_{\edgesubset \subseteq \graphedges}
q^{\kappa(\graphvertices,\edgesubset)}
\prod_{e\in\graphedges}
\gamma_e.$$
Let $\edgesubset_\edge = \edgesubset \cap \graphedges_\edge$.
Let $\gamma({\edgesubset_\edge}) = \prod_{e\in \edgesubset_\edge} \gamma_e$.
Then  
\begin{align*}\ZTutte(\Ghat;q,\boldgamma')&=
\sum_{\textstyle{\pi_\edge \in\Pi_\edge \atop \forall \edge \in[m]}}\>
\sum_{\textstyle{\edgesubset_\edge\in \calA_j^{\pi_j} 
  \atop\forall \edge\in[m]}}\>
q^{\kappa(\graphvertices,
\bigcup_{\edge\in[m]} \edgesubset_\edge)}
\prod_{\edge\in[m]} 
\gamma(\edgesubset_\edge)\\
&=
\sum_{\textstyle{\pi_\edge \in\Pi_\edge \atop \forall \edge \in[m]}}\>
q^{\kappa(\pi_1\vee\cdots\vee\pi_m)}
\sum_{\textstyle{\edgesubset_\edge\in \calA_j^{\pi_j} \atop\forall \edge
  \in[m]}}\>
\prod_{\edge\in[m]} 
\gamma(\edgesubset_\edge)
q^{\kappa'(\graphvertices_\edge,
  \edgesubset_\edge)},
\end{align*}
where $\kappa(\pi_1\vee\cdots\vee\pi_m)$ denotes the number of blocks
in the
partition $\pi_1\vee\cdots\vee\pi_m$.

Now let
$$
Z_\edge(\pi_\edge)=\sum_{\edgesubset_\edge\in\calA_\edge^{\pi_\edge}}\gamma(\edgesubset_\edge)\,q^{\kappa'(\graphvertices_\edge
  ,\edgesubset_\edge)}
$$
and pull out the contribution of each~$\edge$ to get
\begin{equation}\label{eq:ZPottsbasic}
\ZTutte(\Ghat;q,\boldgamma')
=
\sum_{\textstyle{\pi_\edge \in\Pi_\edge \atop \forall \edge \in[m]}}\>
q^{\kappa(\pi_1\vee\cdots\vee\pi_m)} 
\prod_{\edge=1}^m Z_\edge(\pi_\edge).
\end{equation}

At this point, we make some connections to Section~\ref{sec:variant}.
Let $\Pi_\edge^k$ denote the set of all partitions of
$\hyperedge_\edge$ into $k$ blocks and let
$Z_\edge^k = \sum_{\pi_\edge\in\Pi_\edge^k} Z_\edge(\pi_\edge)$.
From Lemma~\ref{lem:wiring}
we deduce that
\begin{equation}
\label{eq:usewiring}
\frac{Z_\edge^k}{Z_\edge^{k'}} = 
\frac{\Pr(\rv(\edgesubset)=k)}{\Pr(\rv(\edgesubset)=k')},
\end{equation}
where $\edgesubset$ is drawn from $\RC(\gadget;q,\edgeprob)$
for the function~$\edgeprob$ from Section~\ref{sec:gadget}.

Denote by $\bot_\edge$ the finest partition (with $\terminals$ blocks) of
$\hyperedge_\edge$
and by $\top_\edge$ the coarsest partition (with one block)
on the same set.  
By Lemma~\ref{lem:computerho} Equation~(\ref{eq:compbalance}) 
and Equation~(\ref{eq:usewiring})
we have 
\begin{equation}
\label{chione}
e^{-\chi} \gamma \leq
\frac
{Z_\edge(\top_\edge)} 
{Z_\edge(\bot_\edge)} \leq e^{\chi} \gamma.
\end{equation}
 
Use the FPRAS from Lemma~\ref{lem:computeZ}
(again, suitably powered up so that its failure probability is at most~$3/8$)
to compute a value~$\lastc$ 
which satisfies
\begin{equation}
\label{chitwo}
 e^{-\chi} Z_\edge(\bot_\edge) \leq \lastc \leq e^{\chi} Z_\edge(\bot_\edge).
\end{equation}

Now, define $b_\edge(0)=\bot_\edge$ and $b_\edge(1)=\top_\edge$.
Then, by restricting
the sum in  (\ref{eq:ZPottsbasic}) 
to terms satisfying $\pi_\edge\in\{\bot_\edge,\top_\edge\}$ for all $\edge$,
we get
$$\ZTutte(\Ghat;q,\boldgamma')\geq 
\sum_{z\in\{0,1\}^m} 
   q^{\kappa(b_1(z_1)\vee\cdots\vee b_m(z_m))}
\prod_{j=1}^m Z_j(b_j(z_j)).
$$
Letting $\|z\|$ denote the Hamming weight of $z$
and using Equation~(\ref{chione}),
the right-hand side is at least
\begin{align*}
&\sum_{z\in\{0,1\}^m} 
   q^{\kappa(b_1(z_1)\vee\cdots\vee b_m(z_m))}
{(e^{-\chi} \gamma)}^{\|z\|}   \prod_{j=1}^m Z_j(\bot_j)\\
&\qquad\geq
e^{-\chi m}\prod_{j=1}^m Z_j(\bot_j)
\sum_{z\in\{0,1\}^m}\notag
   q^{\kappa(b_1(z_1)\vee\cdots\vee b_m(z_m))} {\gamma}^{\|z\|}.
\end{align*}
Using Equation~(\ref{chitwo}), we get
\begin{align}
\ZTutte(\Ghat;q,\boldgamma')&\geq \label{eq:ineq1}
\lastc^m e^{-2\chi m}\sum_{z\in\{0,1\}^m}
   q^{\kappa(b_1(z_1)\vee\cdots\vee b_m(z_m))} {\gamma}^{\|z\|}\\
   &=\lastc^m e^{-2\chi m}\,\ZTutte(\hypergraph;q,\gamma).\notag
\end{align}

To finish, we just have to show that 
this lower bound for $\ZTutte(\Ghat;q,\boldgamma')$ is actually a good
estimate
because
the terms that we threw away
in inequality~(\ref{eq:ineq1}) don't amount to much.
For $\edge\in[m]$, let 
$\Pi_\edge^1=\{\top_\edge\}$ and let
$\Pi_\edge^0=\Pi_\edge\setminus\{\top_\edge\}$.

Then, starting from (\ref{eq:ZPottsbasic}),
$$\ZTutte(\Ghat;q,\boldgamma') 
= \sum_{z\in\{0,1\}^m}
\sum_{\textstyle{\pi_\edge \in\Pi_\edge^{z_\edge} \atop \forall \edge \in[m]}}\>
q^{\kappa(\pi_1\vee\cdots\vee\pi_m)} 
\prod_{\edge=1}^m Z_\edge(\pi_\edge).$$
Now note the partition $\pi_\edge \in\Pi_\edge^{z_\edge} $ is refined by
$b_\edge(z_\edge)$. Thus, 
$b_1(z_1)\vee\cdots\vee b_m(z_m)$ has
at least as many connected
components as $\pi_1\vee\cdots\vee\pi_m$.
So we get
\begin{align*}\ZTutte(\Ghat;q,\boldgamma') 
&\leq \sum_{z\in\{0,1\}^m}
\sum_{\textstyle{\pi_\edge \in\Pi_\edge^{z_\edge} \atop \forall \edge \in[m]}}\>
q^{\kappa(b_1(z_1)\vee\cdots\vee b_m(z_m))}
\prod_{\edge=1}^m Z_\edge(\pi_\edge)\\
 &=\sum_{z\in\{0,1\}^m}\>
   q^{\kappa(b_1(z_1)\vee\cdots\vee b_m(z_m))}
\sum_{\textstyle{\pi_\edge \in\Pi_\edge^{z_\edge} \atop \forall \edge \in[m]}}\>
      \prod_{\edge=1}^m Z_\edge(\pi_\edge)\notag\\
   &=\sum_{z\in\{0,1\}^m}\>
   q^{\kappa(b_1(z_1)\vee\cdots\vee b_m(z_m))}
   \prod_{\edge=1}^m Z_\edge(\Pi_\edge^{z_\edge}),\notag\\
\end{align*}
where $Z_\edge(\Pi_\edge^{z_\edge}) = 
\sum_{\pi_\edge \in\Pi_\edge^{z_\edge}}
      Z_\edge(\pi_\edge)$.
Now, $Z_\edge(\Pi_\edge^1)=Z_\edge(\top_\edge) \leq e^{2\chi} \gamma \lastc$.
Also, $Z_\edge(\Pi_\edge^0) = \sum_{k=2}^\terminals Z_\edge^k  $.
By Equation~(\ref{eq:usewiring}), this is
$$Z_j^\terminals\left(1 + \frac{\sum_{k=2}^{\terminals-1} \Pr(\rv(A)=k)}
{\Pr(\rv(A) = \terminals)}\right) ,$$
where $\edgesubset$ is drawn from $\RC(\gadget;q,\edgeprob)$
for the function~$\edgeprob$ from Section~\ref{sec:gadget}. 
Since $Z_j^\terminals\leq e^{\chi}\lastc$, we have
$$Z_\edge(\Pi_\edge^0) 
\leq e^{\chi} \lastc \left(1 + \frac{\sum_{k=2}^{\terminals-1}
    \Pr(\rv(\edgesubset)=k)}{\Pr(\rv(\edgesubset)=\terminals)}\right).$$
Now by Lemma~\ref{lem:excludedmiddletwo}
Equation~(\ref{eq:dichotomy}),
$\sum_{k=2}^{\terminals-1}
    \Pr(\rv(\edgesubset)=k)\leq \tol$.
Thus, $\Pr(\rv(\edgesubset)=1)+\Pr(\rv(\edgesubset)=\terminals)\geq 1-\tol$ and
by Equation~(\ref{eq:compbalance}),
$$\Pr(\rv(\edgesubset)=\terminals)\geq \frac{1-\tol}{1+e^{\chi}\gamma}.$$
Thus,
$$
Z_\edge(\Pi_\edge^0) 
\leq e^{\chi}\lastc 
\left(1+ \frac{\tol(1+e^{\chi}\gamma)}{1-\tol}\right)
\leq e^{2 \chi} \lastc  
 .$$ 
Thus, 
$$\ZTutte(\Ghat;q,\boldgamma') \leq 
 e^{2 \chi m} \lastc^m 
\sum_{z\in\{0,1\}^m}\>
   q^{\kappa(b_1(z_1)\vee\cdots\vee b_m(z_m))}
  \gamma^{\|z\|} = e^{2 \chi m} \lastc^m  \ZTutte(\hypergraph;q,\gamma)
.$$

Recall that $\chi = \varepsilon/(4m)$.
Thus, if we knew a quantity  $\psi$ in the
range 
$$e^{-\epsilon/2} \leq \frac{\psi}{\ZTutte(\Ghat;q,\boldgamma')}
\leq e^{\epsilon/2},$$
we would have
$$e^{-\epsilon} \leq \frac{\psi \lastc^{-m}}{\ZTutte(\hypergraph;q,\gamma) }
\leq e^{\epsilon},$$
 so we have completed the AP-reduction.
\end{proof}

 \section{Approximately computing the Tutte polynomial of a graph}
\label{shift}

In this section, we complete the approximation-preserving reduction
from~\BIS\ to $\Tutte(q,\gamma)$
by giving an approximation-preserving reduction from 
the problem
$\TwoWeightFerroTutte(q)$ to 
the problem
$\Tutte(q,\gamma)$.

For this, we first need to define what it means to ``implement'' an
edge weight. This description is mainly taken from \cite[Section~1.6]{planartutte}.
Fix $q>2$.
Let 
$W$ be a set of edge weights.
For example, $W$ might contain the edge weight
$\gamma$ (a parameter of the problem $\Tutte(q,\gamma)$) or
it might contain a selection of edge weights that we have already
implemented using~$\gamma$.
Let $\gamma^*$ be a weight (which may not be in $W$) which we want to ``implement''.
Suppose that 
there is a graph~$\Upsilon$,
with distinguished vertices $s$ and~$t$ 
and  an edge-weight function $\hat\boldgamma: \graphedges(\Upsilon) \rightarrow
W$ 
such that
\begin{equation}
\label{eq:implement}
\gamma^* = \frac{q Z_{st}(\Upsilon)}{Z_{s|t}(\Upsilon)},
\end{equation}
where $Z_{st}(\Upsilon)$ denotes the contribution to
$\ZTutte(\Upsilon;q,\hat\boldgamma)$ arising from edge-sets $A$ in which $s$ and $t$ are
in the same component.
That is, 
$Z_{st}(\Upsilon) = \sum_{A} \hat\boldgamma(A) q^{\kappa(V,A)}$, 
where the sum is over subsets $A\subseteq E(\Upsilon)$ in which 
$s$ and $t$ are in the same component.
Similarly, $Z_{s|t}$ denotes the contribution to 
$\ZTutte(\Upsilon;q,\hat\boldgamma)$ arising from edge-sets $A$ in which $s$ and $t$ are in different components.
In this case, we say that $\Upsilon$ and $\hat\boldgamma$ implement
$\gamma^*$
(or even that $W$ implements $\gamma^*$).

The purpose of ``implementing''  edge weights is this.
Let $G$ be a graph with edge-weight function $\boldgamma$.
Let $f$ be some edge of $G$ with edge weight $\gamma_f=\gamma^*$.
Suppose that $W$ implements $\gamma^*$.
Let $\Upsilon$ be a graph with distinguished vertices $s$ and $t$
with a weight function $\hat\boldgamma$ satisfying (\ref{eq:implement}). 
Construct the weighted graph $\widetilde G$
by replacing edge $f$ with a copy of $\Upsilon$ (identify $s$ with either endpoint of $f$
--- it doesn't matter which one --- and identify $t$ with the other endpoint of $f$ and remove edge $f$).
Define the weight function $\tilde\boldgamma$ as follows:
$$\tilde\gamma_e = \begin{cases}
\hat\gamma_e, &\text{if $e\in E(\Upsilon)$, and}\\
\gamma_e, & \text{otherwise.}
\end{cases}$$
Then the definition of the multivariate Tutte polynomial gives
the following equation, which we will justify below.
\begin{equation}
\label{eq:shift}
\ZTutte(\widetilde G;q,\tilde\boldgamma) = \frac{Z_{s|t}(\Upsilon)}{q^2} \ZTutte(G;q,\boldgamma).\end{equation}
So, as long as  $Z_{s|t}(\Upsilon)$ is easy to evaluate
and $q$ is efficiently approximable,
approximating the multivariate Tutte polynomial of $\widetilde G$ with weight function $\tilde\boldgamma$ is
essentially the same as approximating the multivariate Tutte polynomial of $G$ with weight function~$\boldgamma$.

Let us now justify Equation~(\ref{eq:shift}).
Let $G=(V,E)$.
Let $\mathcal{A}_{st}$ be the set of all subsets $A\subseteq E-\{f\}$ in which 
vertices~$s$ and~$t$ are in the same component.
Let $\mathcal{A}_{s|t}$ be the set of all other subsets~$A$ of~$E-\{f\}$.
Let $\gamma_A = \prod_{e \in A} \gamma_e$.
Using the definition of $\gamma^*$ from (\ref{eq:implement}), $\ZTutte(G;q,\boldgamma)$ can be written
as
$$\ZTutte(G;q,\boldgamma)= \sum_{A \in \mathcal{A}_{st}} \gamma_A q^{\kappa(V,A)}\left(1 + \frac{q Z_{st}(\Upsilon)}{Z_{s|t}(\Upsilon)}\right)
+ \sum_{A \in \mathcal{A}_{s|t}} \gamma_A q^{\kappa(V,A)} \left(1 + \frac{Z_{st}(\Upsilon)}{Z_{s|t}(\Upsilon)}\right).
$$
Multiplying through by $Z_{s|t}(\Upsilon)/q^2$,
the right-hand-side of~(\ref{eq:shift}) is the following sum.
\begin{align*}
&\sum_{A \in \mathcal{A}_{st}} \gamma_A q^{\kappa(V,A)-2} Z_{s|t}(\Upsilon) +
\sum_{A \in \mathcal{A}_{st}} \gamma_A q^{\kappa(V,A)-1} Z_{st}(\Upsilon) \\
&\qquad+\sum_{A \in \mathcal{A}_{s|t}} \gamma_A q^{\kappa(V,A)-2} Z_{s|t}(\Upsilon) +
\sum_{A \in \mathcal{A}_{s|t}} \gamma_A q^{\kappa(V,A)-2} Z_{st}(\Upsilon)
\end{align*}
Now let $\mathcal{A}'_{st}$ be the set of all subsets $A'\subseteq E(\Upsilon)$ in which 
vertices~$s$ and~$t$ are in the same component.
Let $\mathcal{A}'_{s|t}$ be the set of all other subsets~$A'$ of~$E(\Upsilon)$.
Paying attention to the number of connected components, we can re-write each
of the four terms to re-write the sum as
\begin{align*}
&\sum_{A \in \mathcal{A}_{st}} \sum_{A'\in \mathcal{A}'_{s|t}} \gamma_{A\cup A'} q^{\kappa(V,A\cup A')} +
\sum_{A \in \mathcal{A}_{st}} \sum_{A' \in \mathcal{A}'_{st}} \gamma_{A\cup A'} q^{\kappa(V,A\cup A')}  \\
&\qquad+\sum_{A \in \mathcal{A}_{s|t}} \sum_{A'\in \mathcal{A}'_{s|t}} \gamma_{A\cup A'} q^{\kappa(V,A\cup A')} +
\sum_{A \in \mathcal{A}_{s|t}} \sum_{A' \in \mathcal{A}'_{st}} \gamma_{A\cup A'} q^{\kappa(V,A\cup A')},
\end{align*}
which is the left-hand-side of (\ref{eq:shift}).

Two especially useful implementations 
(see, for example, \cite{JVW90})
are series and parallel compositions.
These are explained in detail in \cite[Section 2.3]{JacksonSokal}.
So we will be brief here.
Parallel composition is the case in which $\Upsilon$ consists of two parallel edges $e_1$ and $e_2$
with endpoints $s$ and $t$ and  $\hat\gamma_{e_1}=\gamma_1$ and $\hat\gamma_{e_2}=\gamma_2$.
It is easily checked from Equation~(\ref{eq:implement})
that $\gamma^* = (1+\gamma_1)(1+\gamma_2)-1$. Also, the extra factor in Equation~(\ref{eq:shift}) cancels,
so in this case $\ZTutte(\widetilde G;q,\tilde\boldgamma) = \ZTutte(G;q,\boldgamma)$.

Series composition is the case in which $\Upsilon$ is a length-2 path from $s$ to $t$ consisting of edges $e_1$ and $e_2$
with $\hat\gamma_{e_1}=\gamma_1$ and $\hat\gamma_{e_2}=\gamma_2$.
It is easily checked from Equation~(\ref{eq:implement})
that $\gamma^* =  \gamma_1\gamma_2/(q+\gamma_1+\gamma_2)$. 
Also, the extra factor in Equation~(\ref{eq:shift}) is $q+\gamma_1+\gamma_2$,
so in this case $\ZTutte(\widetilde G;q,\tilde\boldgamma) = (q+\gamma_1+\gamma_2) \ZTutte(G;q,\boldgamma)$.
It is helpful to note that
$\gamma^*$ satisfies
$$\left(1+\frac{q}{\gamma^*}\right) = \left(1+\frac{q}{\gamma_1}\right) \left(1+\frac{q}{\gamma_2}\right).$$
We are now ready to prove this lemma.

\begin{lemma}
\label{lem:shiftaround}
Suppose that
$q>2$ and $\gamma>0$ are efficiently approximable. Then
$\TwoWeightFerroTutte(q) \APred \Tutte(q,\gamma)$.
\end{lemma}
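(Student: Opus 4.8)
The plan is to \emph{implement}, in the sense of Equation~(\ref{eq:implement}), each of the two edge weights $\gamma'$ and $\gamma''$ up to a tiny multiplicative error, using a gadget all of whose edges carry the single weight~$\gamma$, and then replace every edge of the input graph by the appropriate gadget. In detail: given an instance $(\graph,\boldgamma')$ of $\TwoWeightFerroTutte(q)$ with $n=|\graphvertices|$, $m=|\graphedges|$ and target accuracy~$\varepsilon$, set $\delta=\varepsilon/(8m)$. For each $\tau\in\{\gamma',\gamma''\}\subseteq[n^{-3},1]$ I would build a graph $\Upsilon_\tau$ with distinguished vertices $s,t$ and all edge weights equal to~$\gamma$ that, via~(\ref{eq:implement}), implements a weight $\hat\tau$ with $e^{-\delta}\tau\le\hat\tau\le e^{\delta}\tau$ and has an associated scalar $c_\tau:=Z_{s|t}(\Upsilon_\tau)/q^2$ that is a closed-form expression in $q$ and~$\gamma$. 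Replacing every edge of~$\graph$ of weight $\gamma'$ (resp.\ $\gamma''$) by a fresh copy of $\Upsilon_{\gamma'}$ (resp.\ $\Upsilon_{\gamma''}$) produces a graph $\widetilde\graph$ carrying the constant weight~$\gamma$ --- hence an instance of $\Tutte(q,\gamma)$ --- and iterating~(\ref{eq:shift}) gives $\ZTutte(\widetilde\graph;q,\gamma)=C\cdot\ZTutte(\graph;q,\boldgamma^{\mathrm{impl}})$, where $C$ is the product over all edges of the scalars $c_\tau$ and $\boldgamma^{\mathrm{impl}}$ assigns $\hat\gamma'$ or $\hat\gamma''$ to each edge.

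The gadgets are built from the series and parallel compositions worked out above. Reparametrise a weight~$w$ by $z(w)=1+w$ and $u(w)=1+q/w$; then parallel composition multiplies $z$-values and series composition multiplies $u$-values. Hence $a$ parallel copies of a $\gamma$-edge implement a weight $\beta_a$ with $z(\beta_a)=(1+\gamma)^a$, and putting $b$ such bundles in series implements a weight $w_{a,b}$ with $u(w_{a,b})=u(\beta_a)^b$. Since $q,\gamma>0$ we have $u(\beta_a)>1$ and $u(\beta_a)\to1$ as $a\to\infty$; pick a suitable $\delta'=\Theta(\delta)$ (the constant depending on $q,\gamma$) and let $a$ be minimal with $\ln u(\beta_a)\le\delta'$. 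An elementary estimate (using $(1+\gamma)^{a-1}-1<q/\delta'\le(1+\gamma)^a-1$ and $x/2\le\ln(1+x)\le x$) shows $\ln u(\beta_a)=\Theta(\delta')$, so $a=O(\log(1/\delta'))$. Given the target $\tau$, set $y=u(\tau)\in[1+q,\,1+qn^3]$ and $b=\lfloor\ln y/\ln u(\beta_a)\rfloor=O(\log n/\delta')$, so that $e^{-\delta'}y\le u(w_{a,b})\le y$; since $y\ge1+q$, a short calculation in the spirit of~(\ref{approxparams}) converts this into $e^{-\delta}\tau\le w_{a,b}\le e^{\delta}\tau$. Take $\Upsilon_\tau$ to be this series-of-parallels graph, which has $O(ab)=\poly(m,n,\varepsilon^{-1})$ edges; its scalar $c_\tau$ is obtained by accumulating the explicit factors recorded above ($1$ for each parallel step, $q+\gamma_1+\gamma_2$ for each series step).

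It remains to account for the errors. The graph $\widetilde\graph$ has $\poly(m,n,\varepsilon^{-1})$ vertices and edges and is constructed in polynomial time. Because all weights are positive and $\hat\gamma',\hat\gamma''$ are within a factor $e^{\pm\delta}$ of $\gamma',\gamma''$, perturbing the (at most $m$) edge weights changes the Tutte polynomial by at most $e^{\pm m\delta}=e^{\pm\varepsilon/8}$, so $e^{-\varepsilon/8}\ZTutte(\graph;q,\boldgamma')\le\ZTutte(\graph;q,\boldgamma^{\mathrm{impl}})\le e^{\varepsilon/8}\ZTutte(\graph;q,\boldgamma')$. The scalar $C$ is a product of polynomially many closed-form terms in $q$ and~$\gamma$, and since $q$ and $\gamma$ are efficiently approximable we may, after powering up so the failure probability is at most $1/8$, compute a rational $\hat C$ with $e^{-\varepsilon/8}C\le\hat C\le e^{\varepsilon/8}C$. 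Finally, call the $\Tutte(q,\gamma)$ oracle on $\widetilde\graph$ with accuracy $\varepsilon/8$ to obtain $Z$, and output $Z/\hat C$; combining the four sources of error ($e^{\pm\varepsilon/8}$ each) shows this lies within $e^{\pm\varepsilon/2}$ of $\ZTutte(\graph;q,\boldgamma')$, and all internal accuracy parameters are inverse-polynomial in the input size and $\varepsilon^{-1}$, so this is an AP-reduction.

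The one genuinely delicate step is the gadget analysis of the second paragraph: we must be able to hit \emph{every} target weight in $[n^{-3},1]$ to within a factor $e^{\pm\delta}$ using a gadget of only polynomial size. This is why a two-parameter family is needed --- series composition alone, or parallel composition alone, moves the weight by a bounded multiplicative factor at each step and so gives only constant-factor resolution, whereas the series-of-parallels family has resolution $\ln u(\beta_a)=\Theta(\delta')$ on the $u$-scale while still reaching the whole required range with a bounded (polynomial) number of edges.
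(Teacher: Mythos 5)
Your proposal is correct, and it takes a genuinely different route from the paper's. The paper proves this lemma by first replacing $q,\gamma$ with rationals $\hat q,\hat\gamma$, and then citing the implementation machinery of \cite[Section~2.1]{planartutte}: there one first builds a small weight $\gamma_1\le\tfrac14$ by a series composition of edges of weight $\hat\gamma$, then builds even smaller weights $\gamma_j$ by series-composing $j$ copies of $\gamma_1$, and finally places $d_j$ copies of each $\gamma_j$-edge in parallel, so that the gadget is a parallel composition of series chains and the target weight is matched in a mixed-radix fashion on the $z=1+w$ scale. Your gadget is the ``dual'' of this: a series composition of parallel bundles, where each bundle of $a$ parallel $\gamma$-edges has $u(\beta_a)=1+q/\beta_a$ close to $1$, and $b$ of these in series hit a geometric grid with spacing $e^{\pm\delta'}$ on the $u=1+q/w$ scale, which covers $[1+q,1+qn^3]$ precisely because $\gamma',\gamma''\in[n^{-3},1]$. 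Both constructions are sound and yield gadgets of polynomial size; the paper's is more economical (each implementation has only $O(\log)$ edges, hence $|\graphedges|^2$ edges overall, versus your $\widetilde O(m/\varepsilon)$ edges per gadget), but yours is simpler and self-contained, whereas the paper piggybacks on the earlier analysis in \cite{planartutte}. Two minor points worth making explicit in a final write-up: (i) the integers $a,b$ themselves must be chosen using rational approximations to $q,\gamma$, since $u(\beta_a)$ is not computable exactly; an error of $\pm1$ in $b$ only moves $u(w_{a,b})$ by a further factor $e^{\pm\delta'}$, which can be absorbed by halving $\delta'$; and (ii) the constant in $\delta'=\Theta(\delta)$ can be taken to depend only on $q$ (not on $\gamma$), e.g.\ $\delta'=\delta/2$ works for $q>2$, since the conversion from $u$-scale to weight-scale error only needs $u(\tau)\ge 1+q>3$.
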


\begin{proof}
Let  $\graph=(\graphvertices,\graphedges)$ be an instance of
$\TwoWeightFerroTutte(q)$
with edge-weight function $\boldgamma': \graphedges \rightarrow
\{\gamma',\gamma''\}$ 
where $\gamma'$ and $\gamma''$ are rationals in the interval $[|\graphvertices|^{-3},1]$.
We will assume without loss of generality that $|E|$ is sufficiently
large with respect to the fixed parameters~$q$ and~$\gamma$.  

Let $\epsilon$ be the desired accuracy in the approximation-preserving reduction.
Let 
$\chi = \epsilon/(4(|\graphvertices|+|\graphedges|^2))$. 
Let $\hat q$ be a rational in the range
$e^{-\chi} q \leq \hat q \leq e^{\chi} q$ and
let $\hat\gamma$ be a rational in the range
$e^{-\chi} \gamma \leq \hat \gamma \leq e^{\chi} \gamma$.
Since $q$ and $\gamma$ are efficiently approximable,  the amount of time
that it takes to compute $\hat q$ and $\hat \gamma$
is at most a polynomial in $|\graphvertices|$, $|\graphedges|$ and $\varepsilon^{-1}$.

The idea of the proof is to show how to use 
series and parallel compositions from
the set $W=\{\hat \gamma\}$ to
implement edge-weights ${\gamma'}^*$ 
and ${\gamma''}^*$
satisfying
\begin{equation}
\label{eq:finalfinal}
e^{- \chi} {\gamma'} \leq  {\gamma'}^* \leq 
e^{ \chi} \gamma'
\end{equation}
and 
$$e^{- \chi} {\gamma''} \leq  {\gamma''}^* \leq 
e^{ \chi} \gamma''.$$
Letting $\boldgamma^*$ be the edge-weight function derived from~$\boldgamma'$ by
replacing $\gamma'$ with ${\gamma'}^*$ and ${\gamma''}$ with ${\gamma''}^*$,
Equation~(\ref{approxparams})  will then give
$$
e^{-\varepsilon /4}
\ZTutte(\graph;q,\boldgamma') \leq
\ZTutte(\graph;\hat q,\boldgamma^*) \leq 
e^{\varepsilon /4}
\ZTutte(\graph; q, \boldgamma').
$$

Let $\hat\boldgamma$ be the edge-weight function which assigns every edge weight $\hat\gamma$.
We can think of our implementations as constructing a graph $\widehat{G}$
such that 
$\ZTutte(\graph;\hat q,\boldgamma^*)$ is equal to 
the product of $\ZTutte(\widehat{G};\hat q,\hat \boldgamma)$
and an easily-computed function of~$\hat q$ and~$\hat \gamma$.\footnote{This easily-computed
function arises from  the extra factor $Z_{st}(\Upsilon)$ in Equation~(\ref{eq:implement}). It is easy to compute
because our implementations use only series and parallel composition}
We will ensure that each implementation uses at most $|\graphedges|$ edges,
so the total number of edges in $\widehat{G}$ is at most $|\graphedges|^2$.
To finish, we note (from (\ref{approxparams})) that 
$$e^{-\varepsilon/4 }
\ZTutte(\widehat{G};q,\boldgamma) \leq
\ZTutte(\widehat{G};\hat q,\hat \boldgamma) \leq 
e^{\varepsilon /4}
\ZTutte(\widehat{G}; q, \boldgamma),$$
where $\boldgamma$ is the constant edge-weight function which assigns every edge weight $\gamma$. 
We finish the approximation of 
$ \ZTutte(\widehat{G};\hat q,\hat \boldgamma)$
by using the oracle to approximate
$\ZTutte(\widehat{G};q,\boldgamma) $ using accuracy parameter $\delta = \epsilon/2$.

It remains to show how to do the implementations.
Taking 
$$\pi = \frac{\chi}{2|\graphvertices|^3} \leq \frac{\gamma' \chi}{2} \leq \gamma'(1-e^{-\chi}),$$
we show how to use $W=\{\hat \gamma\}$
to implement an edge-weight 
 ${\gamma'}^*$
which satisfies ${\gamma'} - \pi \leq {\gamma'}^* \leq {\gamma'}$.
This ensures that Equation~(\ref{eq:finalfinal}) holds. 
The implementation of ${\gamma''}^*$ is similar.

Our implementation  is taken from Section~2.1 of our
paper \cite{planartutte}. First, we can 
implement a weight $\gamma_1\leq\tfrac14$ 
by taking a series composition of $k$ edges of weight~$\hat\gamma$ 
for sufficiently large~$k$. It suffices to take
$$k = \left\lceil
\frac
{\log(1+4\hat q)}
{\log(1+\hat q/\hat \gamma)}
\right\rceil.$$
Then implement a weight $\gamma_j$ by taking a series composition of
$j$
copies of $\gamma_1$.
The following (recursive) definitions are
from \cite[Section~2.1]{planartutte} for integers~$j\geq 1$:

$$d_j = \left\lfloor
\frac{\log((1+{\gamma'})
\prod_{\ell=1}^{j-1}{(1+\gamma_\ell)}^{-d_\ell}
)
}
{\log(1+\gamma_j)}
\right\rfloor, \mbox{ and }
m = \left\lceil 
\frac
{\log( \hat q (1+{\gamma'}) /\pi+1)}{\log(
 \hat q/\gamma_1+1)}\right\rceil.
$$

Then the implementation combines, in parallel, $d_j$ edges with
edge-weight $\gamma_j$, for all $j\in[m]$.

The calculation in \cite[Section~2.1]{planartutte}
shows that the implemented value~${\gamma'}^*$ 
satisfies ${\gamma'} - \pi \leq {\gamma'}^* \leq {\gamma'}$, as required.

Now to finish we need to show that $d_1 + \cdots + d_m \leq |\graphedges|$, which we
used above.
First, note that the fixed parameters~$q$ and~$\gamma$
give fixed upper and lower bounds on
$\hat q$ and $\hat \gamma$.
Using the upper bound $\gamma'\leq 1$,  we see that
the value $m$ is
at most  
logarithmic in $\pi^{-1}$ which is at most logarithmic in $|\graphvertices|$,
$|\graphedges|$, and
$\varepsilon^{-1}$.
The calculation in~\cite[Section~2.1]{planartutte} shows that the same
is true of $d_1,\ldots,d_m$. In fact, there is a fixed upper bound
for~$d_j$ depending only on~$\hat q$ and~$\hat \gamma$. The proof of this fact uses the
fact that $0< \gamma_j \leq \gamma_1 \leq \tfrac14$.  
Since we assumed, without loss of generality,
that $|\graphedges|$ is sufficiently large with respect to the fixed parameters~$q$ and~$\gamma$,
we conclude that $d_1 + \cdots + d_m \leq |\graphedges|$, as required.
\end{proof}

\begin{proof}[of Theorem~\ref{thm:main}]
Theorem~\ref{thm:main} follows from Lemmas \ref{lem:bis}, \ref{lem:one},
\ref{lem:last} and~\ref{lem:shiftaround}.
\end{proof}

\section{$3$-Uniform Hypergraphs}
\label{sec:3uniform} 

Lemmas~\ref{lem:bis}
and \ref{lem:one} have the following corollary.

 \begin{corollary}
\label{cor:uhhard}
Suppose that
$q>0$ is efficiently approximable. Then
$\BIS \APred\uhTutte(q,q-1).$
\end{corollary}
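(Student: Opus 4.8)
The plan is to obtain Corollary~\ref{cor:uhhard} by composing Lemmas~\ref{lem:bis} and~\ref{lem:one} with the parameter choice $\mu = q-1$. Since $\uhTutte(q,q-1)$ is defined only when the edge weight $q-1$ is positive, the hypothesis implicitly forces $q>1$, hence $\mu = q-1 > 0$, which is exactly the hypothesis required by both lemmas. With this substitution, Lemma~\ref{lem:one} reads $\srBIS(\mu) \APred \uhTutte(\mu+1,\mu) = \uhTutte(q,q-1)$, and Lemma~\ref{lem:bis} reads $\BIS \APred \srBIS(\mu)$. Chaining these two AP-reductions --- AP-reducibility composes, exactly as it is used to assemble the proof of Theorem~\ref{thm:main} from its four component lemmas --- then yields $\BIS \APred \uhTutte(q,q-1)$, as claimed.

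The one hypothesis that must be checked before the lemmas can be invoked is that $\mu = q-1$ is efficiently approximable whenever $q$ is. First I would record this as a short remark: given the FPRAS for $q$ and a target relative error $\varepsilon$, run that FPRAS with a relative error $\varepsilon'$ small enough (depending only on the fixed constants $q$ and $\varepsilon$) to produce $\hat q$ with $e^{-\varepsilon'} q \le \hat q \le e^{\varepsilon'} q$, and output $\hat\mu = \hat q - 1$. Because $q>1$ is a fixed constant, $\mu = q-1$ is bounded away from $0$, so for $\varepsilon'$ sufficiently small one gets $e^{-\varepsilon}(q-1) \le \hat q - 1 \le e^{\varepsilon}(q-1)$; the elementary inequalities needed here are of exactly the flavour already carried out for $\mu+1$ in the proof of Lemma~\ref{lem:one} (e.g.\ $e^{-\varepsilon/2}(\mu+1)^{-1} \le (e^{\varepsilon/2}\mu+1)^{-1}$). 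Since $1/\varepsilon'$ is only a constant multiple of $1/\varepsilon$, this is still an FPRAS, so $q-1$ is efficiently approximable.

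There is no serious obstacle here: the corollary is a pure composition of two already-established reductions, and the only genuine content beyond that is the routine observation that efficient approximability is preserved under subtracting the constant $1$, which crucially uses $q>1$ so that the result is a positive real. I would therefore keep the write-up to a few lines, citing Lemmas~\ref{lem:bis} and~\ref{lem:one} and the transitivity of $\APred$.
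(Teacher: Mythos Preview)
Your proposal is correct and follows exactly the same approach as the paper, which simply states that the corollary follows from Lemmas~\ref{lem:bis} and~\ref{lem:one}. Your added observation that $q>1$ is implicitly forced (since $\uhTutte(q,\gamma)$ requires $\gamma>0$) and your explicit verification that $q-1$ is efficiently approximable are details the paper leaves tacit, but they are entirely in the spirit of the intended argument.
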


Thus, assuming that there is no FPRAS for \BIS, we can conclude that there is no
FPRAS for computing the Tutte polynomial 
of a uniform hypergraph when the edge-weights are set to  $\gamma=q-1$.
The \emph{Ising model} corresponds to the $q=2$ case of the Potts model.
Thus, we conclude that
there is no FPRAS for computing the partition function of
the Ising model on a uniform hypergraph in which every edge has weight~$1$.
 
We conclude this paper with a contrasting positive result for $3$-uniform hypergraphs.
Consider the following problem.
\begin{description}
\item[Problem] $\u3hTutte(q,\gamma)$.
\item[Instance]  A $3$-uniform hypergraph $\hypergraph=(\hypervertices,\hyperedges)$.
\item[Output]  $\ZTutte(\hypergraph;q,\gamma)$,
where $\boldgamma$ is the constant function  with $\boldgamma_\hyperedge = \gamma$ for every $\hyperedge\in\hyperedges$.
\end{description}

\begin{lemma}
Suppose that
$\gamma>0$ is efficiently approximable.
There is an FPRAS for $\u3hTutte(2,\gamma)$.
\end{lemma}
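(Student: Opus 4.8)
The plan is to reduce the $3$-uniform hypergraph problem to the \emph{graphic} ferromagnetic Ising model, for which Jerrum and Sinclair~\cite{JS93} have given an FPRAS. The crucial point is that when $q=2$ a $3$-element hyperedge can be simulated exactly by a triangle. Given a $3$-uniform hypergraph $\hypergraph=(\hypervertices,\hyperedges)$, set $\beta=\sqrt{1+\gamma}-1$; since $\gamma>0$ we have $\beta>0$. Form the multigraph $\graph=(\hypervertices,\graphedges)$ on the \emph{same} vertex set by replacing each hyperedge $\hyperedge=\{u,v,w\}$ with the three edges $\{u,v\}$, $\{v,w\}$, $\{u,w\}$, each of weight $\beta$; write $\boldgamma$ for this constant edge-weight function on $\graph$. (This is where $3$-uniformity is essential: for $k\ge4$ a non-monochromatic $k$-tuple of spins can induce various numbers of monochromatic pairs, so no uniformly weighted clique would work --- consistent with the hardness recorded in Corollary~\ref{cor:uhhard}.)

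First I would verify the exact identity $\ZTutte(\graph;2,\boldgamma)=(1+\beta)^{|\hyperedges|}\,\ZTutte(\hypergraph;2,\gamma)$. Since $q=2$ is a positive integer, by Observation~\ref{obs:FK} it suffices to prove the corresponding identity for $\ZPotts$, and since both sums range over the same configuration set $\sigma:\hypervertices\to[2]$, it suffices to check it term by term in $\sigma$. Fix $\sigma$ and a hyperedge $\hyperedge=\{u,v,w\}$. If $\sigma$ is constant on $\{u,v,w\}$, all three triangle edges are monochromatic and contribute $(1+\beta)^3=(1+\beta)(1+\gamma)$, while the hyperedge factor in $\ZPotts(\hypergraph;2,\gamma)$ is $1+\gamma$. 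If $\sigma$ is not constant on $\{u,v,w\}$ then, as there are only two spins, exactly one of the three pairs is monochromatic, so the triangle contributes $1+\beta$, while the hyperedge factor is $1$. In both cases the triangle contributes exactly $1+\beta$ times the hyperedge factor; multiplying over $\hyperedge\in\hyperedges$ and summing over $\sigma$ gives the identity. This is the only genuinely model-specific step; everything else is bookkeeping.

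Next I would handle the fact that $\beta$ is generally irrational and $\graph$ is a multigraph. Since $\gamma$ is efficiently approximable, so is the fixed positive constant $\beta$ (compute a close rational $\hat\gamma\approx\gamma$, then a rational $\hat\beta$ with $|\hat\beta-\beta|\le\xi\beta$ in time polynomial in $\log\xi^{-1}$ by binary search using $(1+\hat\beta)^2=1+\hat\gamma$; here $\beta$ fixed keeps the constants harmless). Given target accuracy $\varepsilon$ and $m=|\hyperedges|$, take $\hat\beta$ with relative error at most $\varepsilon/(12m)$ and let $\hat\boldgamma$ be the constant weight $\hat\beta$ on $\graph$. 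Because $\graph$ has at most $3m$ edges, a multiplicative-perturbation argument exactly as in~(\ref{approxparams}) gives $e^{-\varepsilon/4}\,\ZTutte(\graph;2,\boldgamma)\le\ZTutte(\graph;2,\hat\boldgamma)\le e^{\varepsilon/4}\,\ZTutte(\graph;2,\boldgamma)$, and likewise $(1+\hat\beta)^m$ lies within a factor $e^{\pm\varepsilon/4}$ of $(1+\beta)^m$.

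Finally I would invoke the FPRAS of~\cite{JS93} for the ferromagnetic Ising partition function --- equivalently $\ZTutte(\cdot;2,\cdot)$ with positive edge strengths --- applied to $\graph$ with edge weights $\hat\boldgamma$ and accuracy parameter $\varepsilon/4$ (a multigraph with a single positive strength poses no difficulty; alternatively merge parallel edges first by parallel composition, which leaves $\ZTutte$ unchanged by Section~\ref{shift}). This yields an estimate $Z$ of $\ZTutte(\graph;2,\hat\boldgamma)$; the algorithm outputs $Z\,(1+\hat\beta)^{-m}$, computed in exact rational arithmetic (the denominator has polynomial bit-length). Chaining the four $e^{\pm\varepsilon/4}$ estimates through $\ZTutte(\hypergraph;2,\gamma)=(1+\beta)^{-m}\,\ZTutte(\graph;2,\boldgamma)$ shows the output is within $e^{\pm\varepsilon}$ of $\ZTutte(\hypergraph;2,\gamma)$ with probability at least $3/4$, and every step runs in time polynomial in the size of $\hypergraph$ and in $\varepsilon^{-1}$. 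I expect no real obstacle beyond the triangle identity of the second paragraph.
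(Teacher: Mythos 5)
Your proposal is correct and takes essentially the same approach as the paper: the same triangle simulation of each $3$-hyperedge, the same effective edge weight $\sqrt{1+\gamma}-1$ (the paper's $\gamma'$), the same term-by-term Potts-model verification via Observation~\ref{obs:FK}, and the same invocation of the Jerrum--Sinclair FPRAS for the ferromagnetic Ising model. You spell out the bookkeeping for the irrational weight more explicitly than the paper does, but the substance is identical.
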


\begin{proof}
 Jerrum and
Sinclair~\citeyear{JS93}
have given an FPRAS for $\Tutte(2,\gamma')$ for every $\gamma'>0$.
We will give a reduction from~$\u3hTutte(2,\gamma)$
to $\Tutte(2,\gamma')$ where $\gamma' = {(1+\gamma)}^{1/2}-1$.

Let $\hypergraph=(\hypervertices,\hyperedges)$ be a
$3$-uniform hypergraph, an instance of 
the source
problem 
$\u3hTutte(2,\gamma)$.
Let $\boldgamma$ be the constant function  with $\boldgamma_\hyperedge = \gamma$ for every $\hyperedge\in\hyperedges$.
Let $y=\gamma+1$.
Now, by Observation~\ref{obs:FK},
$$\ZTutte(\hypergraph;2,\boldgamma) = \ZPotts(\hypergraph;2,\boldgamma) = 
\sum_{\sigma:\hypervertices\rightarrow \{0,1\}}
y^{\mono(\sigma)},$$
where $\mono(\sigma)$ denotes the number of hyperedges $\hyperedge\in\hyperedges$
that are monochromatic in configuration~$\sigma$.

Construct a (multi-)graph~$\graph$ with vertex set~$\hypervertices$ 
and edge set
$$\graphedges = \bigcup_{(u,v,w)\in \hyperedges}
\{(u,v),(v,w),(u,w)\}.$$  
Let $\boldgamma'$ be the constant function  with $\boldgamma'_\edge = \gamma'$ for every $\edge\in\graphedges$.
Let $y'=\gamma'+1 = y^{1/2}$. 
Now if a hyperedge $\hyperedge\in \hyperedges$ is monochromatic in~$\sigma$, it contributes
${y'}^3$ to the corresponding term in $\ZPotts(\graph;2,\boldgamma')$.
Otherwise, it contributes $y'$ to the term.
Thus, $$\ZTutte(\graph;2,\boldgamma)=
\ZPotts(\graph;2,\boldgamma') = {y'}^{|\hyperedges|}\ZPotts(\hypergraph;2,\boldgamma),$$
which completes the proof.
\end{proof}

\bibliographystyle{plain}
\bibliography{\jobname}

\end{document}